\newtheorem{theorem}{Theorem}[section]
\newtheorem{lemma}[theorem]{Lemma}
\newtheorem{proposition}[theorem]{Proposition}
\newtheorem{corollary}[theorem]{Corollary}
\newtheorem{definition}[theorem]{Definition}
\newtheorem{example}[theorem]{Example}
\newtheorem{remark}[theorem]{Remark}
\newtheorem{claim}[theorem]{Claim}
\def\<{{\langle}} \def\>{{\rangle}}       
\title{Sum of squares lower bounds from symmetry and a good story}
\author{
Aaron Potechin \\
University of Chicago
}
\date{\today}
\begin{document}
\maketitle

\begin{abstract}
In this paper, we develop machinery which makes it much easier to prove sum of squares lower bounds when the problem is symmetric under permutations of $[1,n]$ and the unsatisfiability of our problem comes from integrality arguments, i.e. arguments that an expression must be an integer. Roughly speaking, to prove SOS lower bounds with our machinery it is sufficient to verify that the answer to the following three questions is yes:
\begin{enumerate}
\item Are there natural pseudo-expectation values for the problem?
\item Are these pseudo-expectation values rational functions of the problem parameters?
\item Are there sufficiently many values of the parameters for which these pseudo-expectation values correspond to the actual expected values over a distribution of solutions which is the uniform distribution over permutations of a single solution?
\end{enumerate}

We demonstrate our machinery on three problems, the knapsack problem analyzed by Grigoriev \cite{Gri01}, the MOD 2 principle (which says that the complete graph $K_n$ has no perfect matching when $n$ is odd), and the following Turan type problem: Minimize the number of triangles in a graph $G$ with a given edge density. For knapsack, we recover Grigoriev's lower bound exactly. For the MOD 2 principle, we tighten Grigoriev's linear degree sum of squares lower bound, making it exact. Finally, for the triangle problem, we prove a sum of squares lower bound for finding the minimum triangle density. This lower bound is completely new and gives a simple example where constant degree sum of squares methods have a constant factor error in estimating graph densities.
\end{abstract}

\thispagestyle{empty}
.\newpage

\section{Introduction}
The sum of squares hierarchy (which we call SOS for brevity), a hierarchy of semidefinite programs first indepedently investigated by Shor \cite{Sho87}, Nesterov \cite{Nes00}, Parrilo \cite{Par00}, Lasserre \cite{Las01}, and Grigoriev \cite{Gri01,Gri01b}, is an exciting frontier of algorithm design, complexity theory, and proof complexity. SOS is exciting because it provides a single unified framework which can be applied to give approximation algorithms for a wide variety of combinatorial optimization problems. Moreover, SOS is conjectured to be optimal for many of these problems. In particular, SOS captures the Goemans-Williamson algorithm for MAX-CUT \cite{GW95}, the Goemans-Linial relaxation for sparsest cut (analyzed by Arora, Rao, and Vazirani \cite{ARV09}), and the subexponential time algorithm for unique games found by Arora, Barak, and Steurer \cite{ABS10}. More recently, SOS has been applied directly to give algorithms for several problems including planted sparse vector \cite{BKS14}, dictionary learning \cite{BKS15}, tensor decomposition \cite{GM15, HSSS16, MSS16}, tensor completion \cite{BM16, PS17}, and quantum separability \cite{BKS17}.

That said, there are limits to the power of SOS. As shown by SOS lower bounds for constraint satisfactions problems (CSPs) \cite{Gri01b,Sch08,BCK15,KMOW17} and gadget reductions \cite{Tul09}, SOS requires degree $\Omega(n)$ (and thus exponential time) 
to solve most NP-hard problems. As shown by SOS lower bounds on planted clique and other planted problems \cite{MPW15,DM15,HKPRS16,BHK+16,HKP+17}, SOS can have difficulty distinguishing between a random input and an input which is random except for a solution which has been planted inside it. Finally, as shown by Grigoriev's SOS lower bound for the knapsack problem \cite{Gri01}, SOS has difficulty capturing integrality arguments, i.e. arguments which say that an expression must be an integer. 

In this paper, we further explore this last weakness of SOS. In particular, we develop machinery which makes it much easier to prove SOS lower bounds when the problem is symmetric and the unsatisfiability of our problem comes from integrality arguments. The usual process for proving SOS lower bounds involves finding pseudo-expectation values (see subsection \ref{SOSdefinitionsubsection}) and then proving that a matrix called the moment matrix is PSD (postive semidefinite), which can be quite difficult. Roughly speaking, to prove SOS lower bounds with our machinery it is sufficient to verify that the answer to the following three questions is yes:
\begin{enumerate}
\item Are there natural pseudo-expectation values for the problem?
\item Are these pseudo-expectation values rational functions of the problem parameters?
\item Are there sufficiently many values of the parameters for which these pseudo-expectation values correspond to the actual expected values over a distribution of solutions which is the uniform distribution over permutations of a single solution?
\end{enumerate}

We demonstrate our machinery on three problems, the knapsack problem itself, the MOD 2 principle (which says that the complete graph $K_n$ on $n$ vertices does not have a perfect matching when $n$ is odd), and the following Turan-type problem: Minimize the number of triangles in a graph $G$ with a given edge density.

\subsection{Equations and SOS lower bounds for knapsack, the MOD 2 principle, and a triangle problem}
To state our SOS lower bounds on knapsack, the MOD 2 principle, and the triangle problem, we must first express these problems as infeasible systems of polynomial equations. We do this because as we will discuss in subsection \ref{SOSdefinitionsubsection}, SOS gives a proof system for proving that systems of polynomial equations over $\mathbb{R}$ are infeasible. Our lower bounds show that SOS requires high degree to prove that the systems of equations corresponding to knapsack, the MOD 2 principle, and the triangle problem are infeasible.

In the knapsack problem, which is a classic NP-complete problem, we have a knapsack with a capacity $k$ and $n$ items of weights $\{w_1,\dots,w_n\}$. We are then asked if it is possible to fill the knapsack with items whose total weight is $k$. To express the knapsack problem with equations, we create variables $\{x_i: i \in [1,n]\}$ where we want that $x_i = 1$ if the ith weight is taken and $x_i = 0$ otherwise. We encode this and the claim that the knapsack is at full capacity with the following equations.
\begin{enumerate}
\item $\forall i \in [1,n], x^2_i - x_i = 0$ (every item is either taken or not taken)
\item $\sum_{i=1}^{n}{{x_i}w_i} - k = 0$ (the total weight is $k$)
\end{enumerate}
In this paper, we consider the case when all of the weights are $1$, in which case our equations are:
\begin{enumerate}
\item $\forall i, x^2_i - x_i = 0$
\item $\sum_{i=1}^{n}{x_i} - k = 0$
\end{enumerate}
These equations are clearly infeasible whenever $k \notin \mathbb{Z}$. However, as Grigoriev \cite{Gri01} showed, since SOS has difficulty capturing integrality arguments, SOS requires high degree to refute these equations. 
\begin{theorem}[Grigoriev's SOS lower bound for knapsack] \ \\
Degree $\min{\{2\lfloor\min{\{k,n-k\}}\rfloor + 3,n\}}$ SOS fails to prove that the knapsack equations are infeasile.
\end{theorem}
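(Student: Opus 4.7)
The plan is to apply the three-step framework from the introduction. First I would exhibit a natural $S_n$-symmetric pseudo-expectation. By symmetry and the idempotence relations $x_i^2 = x_i$, any such $\tilde{E}$ is determined by the numbers $E_j := \tilde{E}[x_1 x_2 \cdots x_j]$ with $E_0 = 1$. Applying $\tilde{E}$ to $(x_1 \cdots x_{j-1})(\sum_{i=1}^n x_i - k) = 0$ and splitting the sum into ``inside'' and ``outside'' contributions (using $x_i^2 = x_i$) forces the recursion $(j-1)E_{j-1} + (n-j+1)E_j = k E_{j-1}$, whose unique solution is
\[
E_j \;=\; \frac{k(k-1)\cdots(k-j+1)}{n(n-1)\cdots(n-j+1)} \;=\; \frac{\binom{k}{j}}{\binom{n}{j}}.
\]

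Questions (2) and (3) of the framework then come nearly for free. Each $E_j$ is manifestly a degree-$j$ polynomial in $k$ divided by a fixed nonzero integer, hence a rational function of the parameter $k$. Moreover, for every integer $k' \in \{0, 1, \ldots, n\}$ the numbers $E_j(k')$ are exactly the moments of the uniform distribution on $\binom{[n]}{k'}$ (which, with unit weights, is supported on actual knapsack solutions), so the moment matrix $M(k')$ is automatically PSD. Writing $D$ for one less than the theorem's threshold, so that $D = 2\lfloor \min\{k, n-k\} \rfloor + 2$, the moment matrix is indexed by subsets of size at most $\lfloor D/2 \rfloor = \lfloor \min\{k, n-k\} \rfloor + 1$, and the range of integer parameters at which we have genuine realizations comfortably brackets the non-integer $k$ of interest.

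The main obstacle is concluding PSDness of $M(k)$ for non-integer $k$ from PSDness at those integer parameter values, and this is precisely what the paper's machinery is designed to do. Concretely, I would observe that $M(k)$ commutes with the $S_n$-action on its indexing subsets and therefore block-diagonalizes, in a basis independent of $k$, into blocks indexed by the irreducible representations appearing in the action on subsets of size at most $\lfloor D/2 \rfloor$. Each resulting block has polynomial entries in $k$, and Johnson-scheme theory supplies a factorization of its characteristic polynomial into linear factors of the form $(k-a)$ and $(n-k-b)$ with small nonnegative integers $a,b$. Nonnegativity of $M(k')$ at the integer realizations pins down the signs of these factors, and a short case analysis then shows that throughout the open interval between the extreme realizations all factors retain a sign pattern making every eigenvalue nonneg; in particular $M(k)$ is PSD at the non-integer $k$ we care about. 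Almost all the technical work lies in setting up this factorization and sign analysis once, at the level of the general machinery; its application to the knapsack equations then reduces to plugging in the formula for $E_j$ derived above.
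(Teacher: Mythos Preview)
Your first two steps---deriving $\tilde E\bigl[\prod_{i\in I}x_i\bigr]=\binom{k}{|I|}/\binom{n}{|I|}$, observing that these are rational in the parameter, and recognising that at each integer $k'\in\{0,\dots,n\}$ they are the genuine moments of the uniform distribution over $\binom{[n]}{k'}$---line up exactly with what the paper does in Subsection~\ref{findingEsubsection} and Subsection~\ref{confirmingproblemstoriessubsection}.

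Where you diverge is in the mechanism for deducing PSDness at non-integer $k$. You propose to block-diagonalise $M(k)$, invoke Johnson-scheme eigenvalue formulas to factor the characteristic polynomial into linear pieces $(k-a)$ and $(n-k-b)$, and then argue by sign analysis. That route is workable for the arity-$1$ knapsack problem and is in the spirit of Grigoriev's original argument and of later work such as \cite{Lau03,KLM16}, but it is \emph{not} the machinery this paper builds. The paper's point is precisely to avoid any eigenvalue computation. After the symmetry reduction (Theorem~\ref{squarereductiontheorem}), which leaves only $g$ invariant under $S_{[1,n]\setminus I}$ with $|I|\le r$, the paper does not look at spectra at all. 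Instead it queries the indices in $I$: because only $r=\lfloor\min\{k,n-k\}\rfloor+1$ indices are queried, the story assigns \emph{nonnegative} conditional probabilities, so $\tilde E$ decomposes as a convex combination of conditional maps $\tilde E_j$. Each $\tilde E_j$ is then shown, by polynomial interpolation over the honest integer parameter values (Lemma~\ref{interpolationlemma} and Lemma~\ref{verifyingmimicslemma}), to be a \emph{single graph mimic}: it satisfies $\tilde E_j[fg]=\tilde E_j[f]\,\tilde E_j[g]$ for all $f,g$ symmetric on $[1,n]\setminus I$. PSDness is then one line: $\tilde E_j[g^2]=(\tilde E_j[g])^2\ge 0$.

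So your spectral plan is a legitimate alternative for knapsack, but it misidentifies what the paper's machinery actually is. The multiplicativity-via-interpolation argument is the new idea here, and it is what allows the same template to handle the MOD~$2$ principle and the triangle problem, where there is no Johnson-scheme structure to lean on and a direct eigenvalue factorisation would be far harder to carry out.
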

In this paper, we observe that Grigoriev's lower bound (which is tight) follows immediately from our machinery.

For the MOD 2 principle, we are asked whether the complete graph $K_n$ has a perfect matching. To express this problem with equations, we take a variable $x_{ij}$ for each possible edge $(i,j)$ and we want that $x_{ij} = 1$ if the edge $(i,j)$ is in our matching and $x_{ij} = 0$ otherwise. We encode this and the claim that we have a perfect matching as follows:
\begin{enumerate}
\item For all $i,j \in [1,n]$ such that $i < j$, $x^2_{ij} - x_{ij} = 0$
\item For all $i \in [1,n]$, $\sum_{j \in [1,n]:j \neq i}{x_{ij}}  - 1 = 0$ (where we take $x_{ij} = x_{ji}$ whenever $i > j$)
\end{enumerate}
These equations are infeasible whenever $n$ is odd. However, Grigoriev \cite{Gri01b} showed that SOS requires high degree to refute these equations. While Grigoriev's lower bound is shown via a reduction from the Tseitin equations and is tight up to a constant factor, in this paper we use our machinery to obtain the following tight SOS lower bound directly.
\begin{theorem}[SOS lower bound for the MOD 2 principle] \ \\
Degree $\frac{n-1}{2}$ SOS fails to prove that the equations for the MOD 2 principle are infeasible.
\end{theorem}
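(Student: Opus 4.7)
The plan is to apply the three-step machinery outlined in the introduction: identify a natural pseudo-expectation, check that it is rational in the parameters, and verify that for enough parameter values it comes from an actual $S_n$-symmetric distribution over a single solution orbit. For the MOD 2 principle the only parameter is $n$, and the natural candidate comes from the uniform distribution over perfect matchings of $K_n$ when $n$ is even. I would define, for any set $S$ of edges of $K_n$,
$$\tilde{\mathbb{E}}\!\left[\prod_{e \in S} x_e\right] \;=\; \begin{cases} 0 & \text{if two edges of $S$ share a vertex,} \\[4pt] \displaystyle\prod_{i=0}^{|S|-1} \frac{1}{n-1-2i} & \text{if $S$ is a matching.}\end{cases}$$
In the nonzero case this equals $(n-2|S|-1)!!/(n-1)!!$, which for even $n$ is precisely the probability that a uniformly random perfect matching of $K_n$ contains $S$.

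Rationality in $n$ is transparent from the product form, and it explains the threshold $\frac{n-1}{2}$ in the theorem. A degree $d$ pseudo-expectation only needs values on monomials of degree at most $d$, i.e.\ on matchings $S$ of size at most $d$. For $n$ odd and $d \leq \frac{n-1}{2}$, every denominator $n-1-2i$ with $0 \leq i \leq d-1$ is a positive even integer, so no pole appears; just above this bound the formula would ask us to invert zero, which is exactly the analytic shadow of the parity obstruction.

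For the third step, for every even $n \geq 2d+2$ the values above coincide with the moments of the uniform distribution on perfect matchings of $K_n$. Since $S_n$ acts transitively on perfect matchings, this distribution is the uniform distribution over permutations of a single perfect matching, exactly as the machinery requires. The defining polynomial identities $x_{ij}^2 - x_{ij} = 0$ and $\sum_{j \neq i} x_{ij} - 1 = 0$ are therefore satisfied under $\tilde{\mathbb{E}}$ for infinitely many even $n$, so by rationality they continue to hold for odd $n$ in the admissible range.

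The main obstacle, as usual, is positive semidefiniteness of the moment matrix, which is not a polynomial condition in $n$ and so cannot be transferred from even to odd $n$ by naive interpolation. This is where the symmetry half of the machinery is essential: the $S_n$-invariance of $\tilde{\mathbb{E}}$ block-diagonalizes the moment matrix along the irreducible representations appearing in the degree-$d$ monomial module, the blocks have size bounded independently of $n$, and their entries are rational functions of $n$. Nonnegativity of each block at all sufficiently large even $n$ then propagates to odd $n$ in the range $d \leq \frac{n-1}{2}$, yielding the stated lower bound.
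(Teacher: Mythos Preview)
Your pseudo-expectation values are exactly the ones the paper uses, and your first three paragraphs correctly identify the rationality, the threshold, and the honesty at even $n$. The gap is in the last paragraph, where you assert that positive semidefiniteness of the (bounded-size, rational-entry) blocks at all large even $n$ ``propagates to odd $n$ in the range $d \leq \frac{n-1}{2}$.'' That inference is not valid as stated: PSD-ness is not a polynomial identity, and a rational matrix-valued function that is PSD at all even integers can fail to be PSD at intermediate odd integers. What survives interpolation is only that each block is PSD for all \emph{sufficiently large} $n$, with no control on the threshold; you would need a separate argument to pin it down at exactly $n = 2d+1$.

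The paper avoids this trap by never interpolating an inequality. Its mechanism is: after querying a set $I$ of at most $r = \lfloor n/2 \rfloor + 1$ indices, the story expresses $\tilde{E}$ as a mixture $\sum_j p_j \tilde{E}_j$ where the weights $p_j$ are the conditional probabilities from the story and are checked to be nonnegative \emph{directly for the given odd $n$} (this is where the range $r \leq \lfloor n/2 \rfloor + 1$ enters). What interpolation is used for is the polynomial identity $\tilde{E}_j[fg] = \tilde{E}_j[f]\tilde{E}_j[g]$ for $f,g$ symmetric under $S_{[n]\setminus I}$; this is an equality of rational functions, so it legitimately transfers from even $n$ to all $n$. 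Combining, $\tilde{E}_j[g^2] = (\tilde{E}_j[g])^2 \geq 0$ and hence $\tilde{E}[g^2] = \sum_j p_j \tilde{E}_j[g^2] \geq 0$. The symmetry reduction (Theorem~\ref{squarereductiontheorem}) is then used only to reduce arbitrary $g$ to such partially symmetric $g$, not to propagate PSD-ness across parameter values. Your outline is missing precisely this two-layer structure: nonnegativity of the mixture weights checked pointwise, multiplicativity of the components checked by interpolation.
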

For the triangle problem, we want to minimize the number of triangles in a graph with edge density $\rho$. For this problem, Goodman \cite{Goo59} showed the following lower bound.
\begin{theorem}[Goodman's bound]
The minimal number of triangles in a graph $G$ with $n$ vertices and edge density $\rho$ is at least 
\[
t(n,\rho) := \binom{n}{3} - \frac{n(n-1)(1 - \rho)}{6}((1+2\rho)n - 2 - 2\rho)
\]
\end{theorem}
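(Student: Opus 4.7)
The plan is to prove Goodman's bound by a simple double-counting argument: count, for each edge of $G$, the number of triangles through that edge. Let $T$ denote the number of triangles in $G$. Since every triangle contains exactly three edges, and a triangle through an edge $(u,v)$ corresponds to a common neighbor of $u$ and $v$, we have the identity
\[
3T \;=\; \sum_{(u,v) \in E(G)} |N(u) \cap N(v)|.
\]

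The first real step is to lower bound $|N(u) \cap N(v)|$ for each edge $(u,v)$ by inclusion-exclusion. Since $N(u) \cup N(v) \subseteq V(G)$, this union has size at most $n$, so $|N(u) \cap N(v)| \ge d(u) + d(v) - n$. Summing over all edges and invoking the handshake identity $\sum_{(u,v) \in E(G)} \bigl(d(u)+d(v)\bigr) = \sum_v d(v)^2$ yields $3T \ge \sum_v d(v)^2 - n\,|E(G)|$. Applying Cauchy--Schwarz to obtain $\sum_v d(v)^2 \ge (2|E(G)|)^2/n$ then produces the clean form
\[
T \;\ge\; \frac{|E(G)|\bigl(4|E(G)| - n^2\bigr)}{3n}.
\]

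It remains to substitute $|E(G)| = \rho \binom{n}{2} = \rho n(n-1)/2$ and check that the resulting lower bound matches $t(n,\rho)$ as defined in the statement. This reduces to verifying the polynomial identity
\[
(n-2) - (1-\rho)\bigl((1+2\rho)n - 2 - 2\rho\bigr) \;=\; \rho\bigl(2\rho(n-1) - n\bigr),
\]
which is routine. The only nontrivial input in the argument is the edge-wise inclusion-exclusion bound on common neighbors; the hardest part is really just the algebraic simplification, where one has to confirm that two rather different-looking cubic polynomials in $n$ (the one arising from Cauchy--Schwarz, and the one written as $t(n,\rho)$) actually coincide. Once that check is done, the bound $T \ge t(n,\rho)$ follows immediately, and in particular is trivially valid (vacuous) in the dense regime where $\rho$ is small enough to make $t(n,\rho)$ negative.
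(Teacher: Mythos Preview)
Your proof is correct. The identity $3T = \sum_{(u,v)\in E}|N(u)\cap N(v)|$, the inclusion--exclusion $|N(u)\cap N(v)|\ge d(u)+d(v)-n$, the handshake rewrite $\sum_{(u,v)\in E}(d(u)+d(v))=\sum_v d(v)^2$, and Cauchy--Schwarz all combine exactly as you say, and the algebraic check that $\frac{m(4m-n^2)}{3n}=t(n,\rho)$ for $m=\rho\binom{n}{2}$ goes through.

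The paper, however, proves the bound by a genuinely different decomposition. It works on the complement side: writing $N_{3,j}$ for the number of $3$-vertex induced subgraphs with exactly $j$ edges, it derives the exact identity
\[
N_{3,3}=\binom{n}{3}-(n-2)(1-\rho)\binom{n}{2}+\tfrac{2}{3}\sum_i\binom{\bar d(i)}{2}+\tfrac{1}{3}N_{3,1},
\]
where $\bar d(i)=\sum_{j\neq i}(1-x_{ij})$, and then lower-bounds $\sum_i\binom{\bar d(i)}{2}$ by expanding $\sum_i(\bar d(i)-(1-\rho)(n-1))^2\ge 0$ and drops $N_{3,1}\ge 0$. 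Your convexity step (Cauchy--Schwarz on the degrees) and theirs (the square on the non-degrees) are morally the same inequality, but the surrounding decompositions differ: you lose information at the per-edge step $|N(u)\cap N(v)|\ge d(u)+d(v)-n$, whereas they keep an exact expression until the very end. The payoff of the paper's route, and the reason they chose it, is twofold in context: (i) every step is visibly a polynomial identity or a sum of squares in the variables $x_{ij}$, so it exhibits Goodman's bound as a low-degree SOS certificate, which is the point of that section; and (ii) the slack terms $N_{3,1}$ and $\sum_i(\bar d(i)-(1-\rho)(n-1))^2$ vanish exactly on complete multipartite graphs with equal parts, so the proof simultaneously identifies the extremal graphs. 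Your argument is shorter and perfectly fine as a proof of the inequality, but it does not make the SOS structure explicit, nor does it isolate why the balanced Tur\'an graphs are tight.
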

As we will discuss in Section \ref{triangleproblemsection}, this bound is tight if there is an integer $k$ such that 
\begin{enumerate}
\item $\frac{n}{k} - 1 = (1-\rho)(n-1)$
\item $n$ is divisible by $k$.
\end{enumerate}
If so, then we can take $G$ to have $k$ independent sets of size $\frac{n}{k}$ and have all of the edges between different independent sets, which  minimizes the number of triangles in $G$ and matches Goodman's bound. Otherwise, Goodman's bound cannot be achieved.

To express this problem using equations, we again create a variable $x_{ij}$ for each possible edge $(i,j)$ and we want $x_{ij} = 1$ if the edge $(i,j)$ is in the graph and $x_{ij} = 0$ if the edge $(i,j)$ is not in the graph. We encode this, the requirement the edge density is $\rho$, and the claim that Goodman's bound can be achieved with the following equations
\begin{enumerate}
\item For all $i,j \in [1,n]$ such that $i < j$, $x^2_{ij} - x_{ij} = 0$
\item $\sum_{i,j \in [1,n]: i<j}{x_{ij}} - \rho\binom{n}{2} = 0$
\item $\sum_{i,j,k \in [1,n]:i<j<k}{x_{ij}x_{ik}x_{jk}} - t(n,\rho) = 0$ where $t(n,\rho) = \binom{n}{3} - \frac{n(n-1)(1 - \rho)}{6}((1+2\rho)n - 2 - 2\rho)$
\end{enumerate}
Using our machinery, we show the following SOS lower bound which is completely new and was the motivation for developing our machinery.
\begin{theorem}[SOS lower bound for the triangle problem] \ \\
Letting $k$ be the number such that $\frac{n}{k} - 1 = (1 - \rho)(n-1)$, degree $\lfloor\min{\{k,\frac{n}{k}\}}\rfloor + 1$ SOS fails to refute the triangle problem equations.
\end{theorem}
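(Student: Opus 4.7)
I would apply the three-step framework of this paper, constructing the candidate pseudo-expectation from the very distribution that witnesses the tight case of Goodman's bound. Label the vertices of $[n]$ uniformly at random with $k$ colors, each used exactly $n/k$ times, and set $x_{ij} = 1$ whenever $\sigma(i) \neq \sigma(j)$. Then define
\[
\tilde{E}\!\left[\prod_{e \in S} x_e\right] \ = \ \Pr_{\sigma}\!\left[\sigma(i) \neq \sigma(j)\text{ for every edge } (i,j) \in S\right].
\]
By inclusion-exclusion over the proper $k$-colorings of the vertex set of $S$, this probability is a rational function of $n$ and $k$, depending only on the isomorphism type of $S$. This handles conditions (1) and (2) of the framework.

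\textbf{Honest distributions at infinitely many points.} For any fixed integer $k \geq 1$ and any $n$ that is a multiple of $k$, the formula above is literally an expectation over the uniform $S_n$-orbit of the complete balanced $k$-partite graph $K_{n/k,\ldots,n/k}$. Since that graph has edge density $\rho = 1 - \tfrac{n/k - 1}{n - 1}$ and attains Goodman's bound with exactly $t(n,\rho)$ triangles, all three defining equations of the triangle problem hold pointwise, not just in expectation. Thus condition (3) of the framework is satisfied by infinitely many pairs $(n,k)$, and at every such point the moment matrix is automatically PSD. Applying the paper's machinery then extends PSDness off the integer locus to all real parameter values in the relevant range.

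\textbf{Degree bound and main obstacle.} The threshold $\lfloor \min\{k, n/k\}\rfloor + 1$ should emerge as the first degree at which the rational formula is combinatorially forced to take impossible values. Concretely, a monomial whose edge set contains a clique on $k + 1$ vertices must satisfy $\tilde{E} = 0$ by pigeonhole, since more than $k$ vertices cannot all receive distinct colors; dually, the ``co-clique'' quantity $\tilde{E}\!\left[\prod (1 - x_e)\right]$ on a clique of $n/k + 1$ vertices must vanish because more than $n/k$ vertices cannot share a color. These two obstructions produce the two terms of the minimum. The main obstacle I foresee is not the verification of (1)--(3) but the precise accounting that isolates this exact degree: one must show that, after decomposing the symmetry-reduced moment matrix into $S_n$-isotypic blocks, every block remains a rational function of $(n,k)$ that is non-negative on enough integer points for the machinery's interpolation argument to promote non-negativity through the continuous range, up to but not beyond the stated degree. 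It is here that the combination of $S_n$-symmetry with the abundance of honest distributional extensions does the heavy lifting that would otherwise demand a hand-crafted PSD proof.
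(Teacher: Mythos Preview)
Your overall construction and the verification of conditions (1)--(3) match the paper exactly: the pseudo-expectation is the one coming from a uniformly random balanced $k$-partition, the entries are rational in $(n,k)$, and the honest points are the pairs with $k \in \mathbb{Z}$ and $k \mid n$. So the skeleton is right.

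The genuine gap is in your account of how PSDness is established off the integer locus. You write that one should show each $S_n$-isotypic block is ``non-negative on enough integer points for the machinery's interpolation argument to promote non-negativity through the continuous range.'' This is not what the machinery does, and it cannot work as stated: non-negativity of a rational function at arbitrarily many integer points does not force non-negativity at a real point in between. What the paper interpolates is not an inequality but an \emph{identity}. After restricting to a set $I$ of at most $r$ indices, the conditional pseudo-expectations $\tilde{E}_j$ (one for each outcome on $I$) satisfy $\tilde{E}_j[fg] = \tilde{E}_j[f]\,\tilde{E}_j[g]$ for all $f,g$ symmetric in $[1,n]\setminus I$; this multiplicativity is an equation in $(n,k)$, holds at every honest point, and therefore holds identically by interpolation. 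Then, \emph{at the actual target parameters}, one checks directly (no interpolation involved) that the branching probabilities in the story are non-negative for up to $r$ queried indices. Combining these two facts gives $\tilde{E}[g^2] = \sum_j p_j (\tilde{E}_j[g])^2 \geq 0$ for every $g$ symmetric outside some $I$ with $|I|\le r$, and Theorem~4.1 reduces the general case to this one.

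This also corrects your account of where the threshold $\lfloor \min\{k,n/k\}\rfloor + 1$ comes from. It is not that $\tilde{E}$ vanishes on a $(k{+}1)$-clique or a co-$(n/k{+}1)$-clique (those are consequences at integer parameters, not the operative mechanism). The threshold is precisely the largest number $r$ of indices one can query in the story before some conditional probability --- e.g.\ ``the next vertex lies in a new part,'' with numerator $n - j\cdot n/k$, or ``the next vertex lies in the current part,'' with numerator $n/k - j$ --- becomes negative at the given real $(n,k)$. That direct sign check, not any interpolation, is what fixes $r$.
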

\subsection{Relation to previous work on symmetry and SOS}
There is a considerable body of prior research on symmetry and SOS. Several works built on the difficulty on knapsack and/or further investigated symmetric polynomials on the variables $\{x_1,\dots,x_n\}$. Laurent \cite{Lau03} used the difficulty of knapsack to show that degree $\lceil\frac{n}{2}\rceil$ SOS is required to capture the CUT polytope of the complete graph. Bleckherman, Gouveia, and Pfeiffer \cite{BGP14} used the difficulty of knapsack to construct degree 4 polynomials which are non-negative but cannot be written as a sum of squares of low degree \emph{rational} functions. Lee, Prakash, Wolf, and Yuen \cite{LPWY16} showed that there are symmetric non-negative polynomials on the variables $\{x_1,\dots,x_n\}$ which cannot be approximated with low degree sums of squares. Kurpisz, Lepp{\"a}nen, and Mastrolilli \cite{KLM16} gave a general criterion for determining if a symmetric polynomial on $\{x_1,\dots,x_n\}$ is a sum of squares or not.

While these prior works give more precise results for symmetric problems on the variables $\{x_1,\dots,x_n\}$, they do not show how to handle problems which are symmetric under permutations of $[1,n]$ but have variables such as $\{x_{ij}: i < j\}$ which depend on $2$ or more indices. Thus, these prior works are incomparable with this work. 

Another line of research on symmetry and SOS which is more closely connected to this work uses symmetry to reduce the algorithmic complexity of implementing SOS. Gatermann and Parrilo \cite{GP04} showed how representation theory can be used to greatly reduce the search space for pseudo-expectation values, allowing SOS to be run more efficiently on symmetric problems. Recently, Raymond et. al. \cite{RSST18} combined the analysis of Gatermann and Parrilo with Razborov's flag algebras \cite{Raz07} to show that in the case of $k$-subset hypercubes, the resulting semidefinite program has size which is independent of $n$. These results are quite general and apply to all of the problems we are considering. That said, these results do not tell us how to find or verify pseudo-expectation values by hand, which is generally what is needed for SOS lower bounds.

In this paper, we show how the representation theory which allows Gatermann and Parrilo \cite{GP04} and Raymond et. al. \cite{RSST18} to dramatically reduce the size of the semidefinite programs for SOS on symmetric problems can also be used to help prove theoretical SOS lower bounds on symmetric problems. In particular, Theorem \ref{squarereductiontheorem}, which is a crucial part of our machinery, essentially follows from Corollary 2.6 of Raymond et. al. \cite{RSST18}. We obtain our lower bounds by combining this theorem with the additional assumption that the unsatisfiability of the problem we are analyzing comes from integrality arguments.
\subsection{Paper outline}
The remainder of the paper is organized as follows. In Section \ref{preliminarysection}, we give some preliminaries. In Section \ref{informalsummarysection} we informally state our main techniques and results. In Section \ref{symmetrylowerboundsection} we highlight how symmetry is useful for proving SOS lower bounds even without additional assumptions. In Section \ref{goodstorysection}, we rigorously define what stories and good stories are and show that good stories imply SOS lower bounds. In Section \ref{verifyingstoriessection}, we show a method for verifying that stories are good stories. Finally, in Section \ref{triangleproblemsection} we discuss the triangle problem and why the SOS lower bound for the triangle problem is noteworthy.
\section{Preliminaries}\label{preliminarysection}
Before we can describe our machinery, we must first give some preliminaries. We begin by describing the class of symmetric problems which our machinery can be applied to. We then define the sum of squares hierarchy and discuss some notation for the paper.
\subsection{Symmetric problems}\label{exampleproblemsubsection}
\begin{definition}
We make the following assumptions about the problem $P$ we are analyzing:
\begin{enumerate}
\item We assume that $P$ is a problem about hypergraphs $G$ with vertices $V(G) = [1,n]$ and a set of possible hyperedges $E_P$. We view the hyperedges $e \in E_P$ as subsets of $[1,n]$ which may be unordered or ordered depending on $P$. 
If all of these subsets have the same size $t \geq 1$ then we say that the problem $P$ has arity $t$.
\item We assume that $P$ has variables $\{x_{e}: e \in E_P\}$ and $P$ is a YES/NO question which is described by a set of problem equations $\{s_i(\{x_e:e \in E_P\}) = 0\}$. The answer to $P$ is YES if all of these equations can be satisfied simultaneously and NO otherwise.
\item We assume that the set $E_P$ of possible hyperedges and the set $\{s_i(\{x_e:e \in E_P\}) = 0\}$ of problem equations are both symmetric under permutations of $[1,n]$.
\end{enumerate}
If a problem $P$ satisfies all of these assumptions then we say that $P$ is a symmetric hypergraph problem. Since we only consider problems of this type, for brevity we will just say symmetric problem rather than symmetric hypergraph problem.
\end{definition}
\begin{example}
Symmetric problems $P$ of arity $1$ are YES/NO questions on the variables $\{x_1,\dots,x_n\}$ which are symmetric under permutations of $[1,n]$.
\end{example}
\begin{example}
For symmetric problems $P$ of arity $2$, $E_P$ is the set of subsets of $[1,n]$ of size $2$. If the subsets in $E_P$ are unordered then $G$ is an undirected graph and we have variables $\{x_{ij}:  i,j \in [1,n], i \neq j\}$ where we take $x_{ji} = x_{ij}$. If the subsets in $E_P$ are ordered then $G$ is a directed graph and we have distinct variables $\{x_{ij}:  i,j \in [1,n], i \neq j\}$.
\end{example}
\begin{remark}
While our machinery can handle symmetric problems of any arity, the examples we focus on all have arity $1$ or $2$. Knapsack with unit weights has arity $1$ while the MOD $2$ principle and the triangle problem have arity $2$ and are about undirected graphs.
\end{remark}
\begin{remark}
Since our machinery is based on polynomial interpolation, it is important that the symmetric problem $P$ does not have inequalities as well as equalities. If $P$ has inequalities then our machinery does not immediately give an SOS lower bound and more analysis is needed.
\end{remark}

\subsection{Index degree}
For our results, rather than considering the degree of a polynomial $f$, it is more natural to consider the largest number of indices mentioned in any one monomial of $f$. We call this the index degree of $f$.
\begin{definition}[Index degree] \ 
\begin{enumerate}
\item Given a monomial $p = \prod_{e \in E_p}{x_e}$, we define $I(p) = \{i: \exists e \in E_p: i \in e\}$ and we define the index degree of $p$ to be 
\[
indexdeg(p) = indexdeg_{[1,n]}(p) = |I(p)|
\] 
In other words, $indexdeg(p)$ is the number of indices which $p$ depends on.
\item Given a polynomial $f$, if $f = \sum_{j}{c_j}{p_j}$ is the decomposition of $f$ into monomials then we define the index degree of $f$ to be $indexdeg(f) = \max_{j}{\{indexdeg(p_j)\}}$
\end{enumerate}
\end{definition}
\begin{example}
If $p$ is the monomial $p = x_{12}x_{34}$ then $p$ has degree $2$ and index degree $4$.
\end{example}
\begin{example}
If $f = x_{12}x_{13} + x^4_{24}$ then $f$ has degree $4$ and index degree $3$.
\end{example}
We will also need an analagous definition where we only consider the indices outside of a subset $I \subseteq [1,n]$.
\begin{definition}[Index degree outside of I] Let $I \subseteq [1,n]$ be a subset of indices.
\begin{enumerate}
\item Given a monomial $p = \prod_{e \in E_p}{x_e}$, we define the index degree of $p$ on $[1,n] \setminus I$ to be 
\[
indexdeg_{[1,n] \setminus I}(p) = |I(p) \setminus I|
\] 
In other words, $indexdeg_{[1,n] \setminus I}(p)$ is the number of indices in $[1,n] \setminus I$ which $p$ depends on.
\item Given a polynomial $f$, if $f = \sum_{j}{c_j}{p_j}$ is the decomposition of $f$ into monomials then we define the index degree of $f$ on $[1,n] \setminus I$ to be $indexdeg_{[1,n] \setminus I}(f)  = \max_{j}{\{indexdeg_{[1,n] \setminus I}(p_j)\}}$
\end{enumerate}
\end{definition}
\subsection{SOS and pseudo-expectation values}\label{SOSdefinitionsubsection}
We now define SOS and pseudo-expectation values, which are used to prove SOS lower bounds. One way to describe SOS is through SOS/Positivstellensatz proofs, which are defined as follows:
\begin{definition}
Given a system of polynomial equations $\{s_i = 0\}$ over $\mathbb{R}$, an index degree $d$ SOS/Positivstellensatz proof of infeasibility is an equality of the form
\[
-1 = \sum_{i}{{f_i}s_i} + \sum_{j}{g^2_j}
\]
where 
\begin{enumerate}
\item $\forall i, indexdeg(f_i) + indexdeg(s_i) \leq d$
\item $\forall j, indexdeg(g_j) \leq \frac{d}{2}$
\end{enumerate}
\end{definition}
\begin{remark}
This is a proof of infeasibility because the terms ${f_i}s_i$ should all be $0$ by the problem equations and the terms $g^2_j$ must all be non-negative, so they can't possibly sum to $-1$ if all of the problem equations are satisfied.
\end{remark}
\begin{definition}
Index degree $d$ SOS gives the following feasibility test for whether a system of polynomial equations over $\mathbb{R}$ is feasible or not. If there is an index degree $d$ Positivstellensatz proof of infeasibiblity then index degree $d$ SOS says NO. Otherwise, index degree $d$ SOS says YES.
\end{definition}
\begin{remark}
Index degree d SOS may give false positives by failing to say NO on systems of equations which are infeasible but will never give a false negative.
\end{remark}
In this paper, we show SOS lower bounds for the infeasible systems of equations described in subsection \ref{exampleproblemsubsection} by showing that for small $d$ there is no index degree $d$ Positivstellensatz proof of infeasibility for our system of equations. This can be done with index degree $d$ pseudo-expectation values, which are defined as follows:
\begin{definition}\label{pseudoexpectationdefinition}
Given a system of polynomial equations $\{s_i = 0\}$ over $\mathbb{R}$, index degree $d$ pseudo-expectation values are a linear mapping $\tilde{E}$ from polynomials of index degree $\leq d$ to $\mathbb{R}$ which satisfies the following conditions:
\begin{enumerate}
\item $\tilde{E}[1] = 1$
\item $\forall i,f, \tilde{E}[fs_{i}] = 0$ whenever $indexdeg(f) + indexdeg(s_i) \leq d$
\item $\forall g, \tilde{E}[g^2] \geq 0$ whenever $indexdeg(g) \leq \frac{d}{2}$
\end{enumerate}
\end{definition}
\begin{proposition}
If there are index degree $d$ pseudo-expectation values $\tilde{E}$ for a system of polynomial equations $s_1 = 0$, $s_2 = 0$, etc. over $\mathbb{R}$, then there is no index degree $d$ Positivstellensatz proof of infeasibility for these equations.
\end{proposition}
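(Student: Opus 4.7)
The plan is to argue by contradiction: assume both an index degree $d$ pseudo-expectation $\tilde{E}$ and an index degree $d$ Positivstellensatz proof of infeasibility exist, then apply $\tilde{E}$ to the Positivstellensatz identity and derive a numerical contradiction. The whole proof is a one-line calculation once the definitions are unpacked, so the work is really just verifying that every term in the identity lies in the domain where the axioms of Definition \ref{pseudoexpectationdefinition} apply.

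Concretely, suppose we have a refutation
\[
-1 = \sum_{i} f_i s_i + \sum_{j} g_j^2
\]
with $\mathrm{indexdeg}(f_i) + \mathrm{indexdeg}(s_i) \le d$ and $\mathrm{indexdeg}(g_j) \le d/2$. First I would note that each summand on the right-hand side has index degree at most $d$, so the entire right-hand side has index degree at most $d$ and thus lies in the domain of $\tilde{E}$; the constant polynomial $-1$ trivially does as well. Then I would apply $\tilde{E}$ to both sides, using linearity to split the right-hand side into $\sum_{i} \tilde{E}[f_i s_i] + \sum_{j} \tilde{E}[g_j^2]$.

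Now I invoke the three axioms. By condition (1), $\tilde{E}[-1] = -\tilde{E}[1] = -1$. By condition (2), the index degree bounds on $f_i$ and $s_i$ imply $\tilde{E}[f_i s_i] = 0$ for every $i$. By condition (3), the bound on $g_j$ implies $\tilde{E}[g_j^2] \ge 0$ for every $j$. Summing, the right-hand side evaluates to a nonnegative real, while the left-hand side equals $-1$, which is the desired contradiction.

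There is no real obstacle here; the only thing one has to be a little careful about is the index-degree bookkeeping — specifically that $\mathrm{indexdeg}(f_i s_i) \le \mathrm{indexdeg}(f_i) + \mathrm{indexdeg}(s_i) \le d$ and $\mathrm{indexdeg}(g_j^2) = \mathrm{indexdeg}(g_j) \le d/2 \le d$, so that the pseudo-expectation is defined on, and the appropriate axioms apply to, each piece of the Positivstellensatz certificate. Once that is noted the proof is complete.
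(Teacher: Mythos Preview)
Your proof is correct and follows essentially the same approach as the paper: assume both exist, apply $\tilde{E}$ to the Positivstellensatz identity, and use the three axioms of Definition~\ref{pseudoexpectationdefinition} to derive $-1 \geq 0$. Your write-up is slightly more explicit about the index-degree bookkeeping than the paper's one-line version, but the argument is identical.
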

\begin{proof}
Assume that we have both index degree $d$ pseudo-expectation values and an index degree $d$ Positivstellensatz proof of infeasibility. Applying the pseudo-expectation values to the Positivstellensatz proof, we get the following contradiction:
\[
-1 = \tilde{E}[-1] = \sum_{i}{\tilde{E}[{f_i}{s_i}]} + \sum_{j}{\tilde{E}[g^2_j]} \geq 0
\]
\end{proof}
\begin{remark}
Condition $3$ of definition \ref{pseudoexpectationdefinition} is equivalent to the statement that the moment matrix $M$ is PSD (positive semidefinite) where $M$ is indexed by monomials $p,q$ of index degree $\leq \frac{d}{2}$ and has entries $M_{pq} = \tilde{E}[pq]$. Proving SOS lower bounds usually involves proving that $M \succeq 0$, which can be quite difficult. In this paper we can instead analyze $\tilde{E}[g^2]$ more directly.
\end{remark}
\begin{remark}
The idea behind pseudo-expectation values is that they should mimic actual expected values over a distribution of solutions. In particular, as shown by the following proposition, if $\tilde{E}$ comes from a distribution over actual solutions then it automatically gives pseudo-expectation values. This fact is crucial for our results.
\end{remark}
\begin{proposition}
If the equations $\{s_i = 0\}$  are feasible over $\mathbb{R}$ and $\Omega$ is a probability distribution over actual solutions then the linear mapping $\tilde{E}[p] = E_{\Omega}[p]$ gives index degree $d$ pseudo-expectation values for these equations for all $d$.
\end{proposition}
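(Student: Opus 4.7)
The plan is to verify each of the three defining conditions of Definition \ref{pseudoexpectationdefinition} by direct appeal to elementary properties of expectation, using that every point in the support of $\Omega$ is an actual solution of the system $\{s_i = 0\}$.

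First I would observe that the map $p \mapsto E_\Omega[p]$ is linear in $p$ by linearity of expectation and is defined on all polynomials (hence in particular on those of index degree $\le d$), so it is a legitimate candidate for a pseudo-expectation functional. This handles the structural requirement that $\tilde E$ be a linear mapping into $\mathbb{R}$.

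Next I would verify the three properties in turn. Condition 1 is immediate from $E_\Omega[1] = 1$, which holds because $\Omega$ is a probability distribution. Condition 2 follows from the observation that for any assignment $x$ in the support of $\Omega$ we have $(f s_i)(x) = f(x)\cdot s_i(x) = f(x)\cdot 0 = 0$, so the polynomial $f s_i$ vanishes pointwise on the support of $\Omega$ and hence has expectation $0$; the index degree hypothesis plays no role beyond ensuring that $\tilde E[f s_i]$ is within the domain on which $\tilde E$ is being considered. Condition 3 follows from the pointwise inequality $g(x)^2 \ge 0$ for every real assignment $x$ in the support, which gives $E_\Omega[g^2] \ge 0$ by monotonicity of expectation.

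I do not expect any real obstacle here: the proof is a three line verification. The only subtlety worth flagging is that none of the index degree constraints in Definition \ref{pseudoexpectationdefinition} are actually used in this direction — the pointwise identities $f s_i \equiv 0$ on solutions and $g^2 \ge 0$ hold unconditionally, so the degree bounds are free constraints, which is why the conclusion can be asserted \emph{for all $d$} simultaneously.
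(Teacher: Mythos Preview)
Your proposal is correct and follows essentially the same approach as the paper: verify the three conditions of Definition \ref{pseudoexpectationdefinition} by the pointwise identities $1=1$, $f(x)s_i(x)=0$ for $x$ in the support of $\Omega$, and $g(x)^2\ge 0$, then take expectations. Your additional remarks on linearity and on the irrelevance of the index degree bounds are accurate and slightly more explicit than the paper's proof, but the argument is the same.
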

\begin{proof}
Observe that:
\begin{enumerate}
\item For any $x \sim \Omega$, $1 = 1$. Thus, $\tilde{E}[1] = E_{\Omega}[1] = 1$
\item For any $x \sim \Omega$, for all $i,f$, $f(x)s_i(x) = 0$. Thus, for all $i,f$, $\tilde{E}[f{s_i}] = E_{\Omega}[fs_{i}] = 0$
\item For any $x \sim \Omega$, for all $g$, $g(x)^2 \geq 0$. Thus, for all $g$, $\tilde{E}[g^2] = E_{\Omega}[g^2] \geq 0$
\end{enumerate}
\end{proof}
\subsection{Sequences of distinct indices}
We will need the following definitions about sequences of distinct indices in $[1,n]$.
\begin{definition}[Operations on sequences] \ 
\begin{enumerate}
\item Given a sequence of distinct indices $A = (i_1,\dots,i_m)$, we define the set $I_A$ to be $I_A = \{i_1,\dots,i_m\}$. In other words, $I_A$ is just $A$ without the ordering.
\item Given two sequences of distinct indices $A = (i_1,\dots,i_{m_1})$ and $B = (i'_1,\dots,i'_{m_2})$, we say that $A \subseteq B$ if $m_1 \leq m_2$ and $\forall j \in [1,m_1], i'_j = i_j$.
\item Given two sequences of distinct indices $A = (i_1,\dots,i_{m_1})$ and $B = (i'_1,\dots,i'_{m_2})$ such that $I_A \cap I_B = \emptyset$, we define $A \cup B$ to be the sequence $A \cup B = (i_1,\dots,i_{m_1},i'_1,\dots,i'_{m_2}) $
\end{enumerate}
\end{definition}
In this paper, we will never consider sequences of indices which are not distinct, so we assume without stating it explicitly that all of our sequences contain distinct indices.
\section{Informal statement of techniques and results}\label{informalsummarysection}
In this section, we informally describe our techniques and results. We first show how to obtain pseudo-expectation values for symmetric problems based on stories for these problems. We then informally state our main result and sketch how to prove it.
\subsection{Finding pseudo-expectation values: Stories and a verifier/adversary game for SOS}\label{findingEsubsection}
In this subsection, we describe a verifier/adversary game which we use to find pseudo-expectation values and deduce SOS lower bounds. We then describe how the adversary can play this game using stories and describe the resulting pseudo-expectation values for knapsack, the MOD 2 principle, and the triangle problem.

The verifier/adversary game is as follows. The verifier queries sequences of indices $\{A_i\}$. For each sequence of indices $A = (i_1,\dots,i_m)$ the verifier queries, for each $j \in [1,m]$ and every possibility for what happens with the previous indices $(i_1,\dots,i_{j-1})$, the adversary must provide a probability distribution for what happens with the index $i_j$. Taken together, these answers give a probability distribution for all of the possibilities for what happens with the indices in $A$. From these probability distributions, we can obtain pseudo-expectation values.

The verifier wins if he/she detects one of the following flaws in the adversary's answers
\begin{enumerate}
\item The adversary gives a probability for some event which is either negative or undefined.
\item The adversary's answers do not result in well-defined pseudo-expectation values because they are inconsistent. More precisely, there exist two sequences of indices $A = (i_1,\dots,i_m)$ and $A' = (i'_1,\dots,i'_m)$ such that $A'$ and $A$ are equal as sets (i.e. $\{i'_1,\dots,i'_m\}$ is a permutation of $\{i_1,\dots,i_m\}$) and the resulting probability distributions for what happens with the indices $\{i_1,\dots,i_m\}$ do not match.
\item The adversary's answers result in pseudo-expectation values such that some problem equation $s_i = 0$ is violated i.e. $\tilde{E}[f{s_i}] \neq 0$ for some polynomial $f$.
\end{enumerate}
If the verifier is unable to find such a flaw then the adversary wins.
\begin{remark}
Roughly speaking, when we say that the adversary specifies what happens with a set of indices $I$ we mean that the adversary assigns values to all variables $x_e$ such that the indices of $e$ are contained in $I$. We make this more precise in Section \ref{goodstorysection}.
\end{remark}
The adversary often has a strategy for this game based on a story for what happens with the indices. For the problems we are analyzing, the adversary's stories are as follows:
\begin{enumerate}
\item For knapsack, the adversary's story is that we set $k$ out of the $n$ $x_i$ to be $1$ and set the rest to $0$.
\item For the MOD 2 principle, the adversary's story is that for each vertex $i$, the perfect matching contains precisely one of the edges which are incident to $i$.
\item For the triangle problem, the adversary's story is that we have $k$ independent sets of size $\frac{n}{k}$.
\end{enumerate}
\begin{remark}
The adversary's stories are not convicing to us, as we can understand integrality arguments. However, the adversary just has to fool SOS, which is poor at capturing integrality arguments.
\end{remark}

We now demonstrate how these stories naturally give probability distributions for what happens with the indices and thus give pseudo-expectation values.
\begin{example}[Knapsack]
For knapsack, if the verifier first queries vertex $i$, the adversary says that $x_i = 1$ with probability $\frac{k}{n}$ and $x_i = 0$ with probability $\frac{n-k}{n}$. Thus, according to the adversary the expected value of $x_i$ is $\frac{k}{n}$ so we take $\tilde{E}[x_i] = \frac{k}{n}$

If the verifier then queries $x_j$, if we have $x_i = 1$ then the adversary says that $x_j = 1$ with probability $\frac{k-1}{n-1}$ and $x_j = 0$ with probability $\frac{n-k}{n-1}$ as the adversary wants to set $k-1$ of the remaining $n-1$ variables to $1$. If we have $x_i = 0$ then the adversary instead says that $x_j = 1$ with probability $\frac{k}{n-1}$ and $x_j = 0$ with probability $\frac{n-k-1}{n-1}$ as the adversary wants to set $k$ of the remaining $n-1$ variables to $1$. Thus, according to the adversary the expected value of ${x_i}{x_j}$ is $\frac{k(k-1)}{n(n-1)}$ so we take $\tilde{E}[{x_i}{x_j}] = \frac{k(k-1)}{n(n-1)}$.

We can find $\tilde{E}[{x_i}{x_j}{x_k}]$ in a similar way. Let's say the verifier now queries $x_k$. Unless we have that $x_i = x_j = 1$, ${x_i}{x_j}{x_k} = 0$ so we can focus on the case where $x_i = x_j = 1$, which according to the adversary happens with probability $\frac{k(k-1)}{n(n-1)}$. When $x_i = x_j = 1$ the adversary says that $x_j = 1$ with probability $\frac{k-2}{n-2}$ and $x_j = 0$ with probability $\frac{n-k}{n-2}$ as the adversary wants to set $k-2$ of the remaining $n-2$ variables to $1$. Thus, according to the adversary the expected value of ${x_i}{x_j}{x_k}$ is $\frac{k(k-1)(k-2)}{n(n-1)(n-2)}$ so we take $\tilde{E}[{x_i}{x_j}{x_k}] = \frac{k(k-1)(k-2)}{n(n-1)(n-2)}$.

Following similar logic, we obtain that for all $I \subseteq [1,n]$ such that $|I| \leq d$, $\tilde{E}[\prod_{i \in I}{x_i}] = \frac{\binom{k}{|I|}}{\binom{n}{|I|}}$
\end{example}
\begin{example}[MOD 2 principle]
For the MOD 2 principle, if the verifier first queries $i$, the adversary gives no information because there is nothing distinguishing $i$ from other vertices. If the verifier then queries $j$, the adversary says that $x_{ij} = 1$ with probability $\frac{1}{n-1}$ and $x_{ij} = 0$ with probability $\frac{n-2}{n-1}$ because the adversary wants to match $1$ out of the remaining $n-1$ vertices with $i$. Thus, we take $\tilde{E}[x_{ij}] = \frac{1}{n-1}$

We now consider $\tilde{E}[x_{ij}x_{kl}]$ where $i,j,k,l$ are all distinct. $x_{ij}x_{kl} = 0$ unless $x_{ij} = 1$ so we can focus on the case when $x_{ij} = 1$, which according to the adversasry happens with probability $\frac{1}{n-1}$. In this case, if the verifier queries $k$, the adversary gives no additional information because there is nothing distinguishing $k$ from other vertices in $[1,n] \setminus (i,j)$. If the verifier then queries $l$, the adversary says that $x_{kl} = 1$ with probabililty $\frac{1}{n-3}$ and $x_{kl} = 0$ with probabililty $\frac{n-4}{n-3}$ because the adversary wants to match $1$ of the $n-3$ remaining vertices with $k$. Thus, we take $\tilde{E}[x_{ij}x_{kl}] = \frac{1}{(n-1)(n-3)}$

Following similar logic, we obtain that for all $E \subseteq \{(i,j):i,j \in [1,n], i<j\}$ such that $|E| \leq d$, $\tilde{E}[\prod_{(i,j) \in E}{x_{ij}}] = \frac{1}{\prod_{j=1}^{|E|}{(n-2j+1)}}$ if $E$ is a partial matching and $\tilde{E}[\prod_{(i,j) \in E}{x_{ij}}] = 0$ otherwise.
\end{example}
\begin{example}[Triangle Problem]
For the triangle problem, if the verifier first queries $i$, the adversary gives no information because there is nothing distinguishing $i$ from other vertices. If the verifier then queries $j$, the adversary says that $j$ is in the same independent set as $i$ with probability $\frac{\frac{n}{k}-1}{n-1}$ and is in a different independent set with probability $\frac{n-\frac{n}{k}}{n-1}$.

If the verifier then queries $k$, if $i,j$ are in the same independent set then the adversary says that $k$ is in the same independent set as $i,j$ with probability $\frac{\frac{n}{k}-2}{n-2}$ and is in a different independent set with probability $\frac{n-\frac{n}{k}}{n-2}$. If $i,j$ are in different independent sets then the adversary says that $k$ is in the same independent set as $i$ with probability $\frac{\frac{n}{k} - 1}{n-2}$, $k$ is in the same independent set as $j$ with probability $\frac{\frac{n}{k} - 1}{n-2}$, and $k$ is in a different independent set with probability $\frac{n-2\frac{n}{k}}{n-2}$. Thus, the adversary gives the following probabilities for what happens with $i,j,k$:
\begin{enumerate}
\item The probability that $i,j,k$ are all in the same independent set is $\frac{(\frac{n}{k} - 1)(\frac{n}{k} - 2)}{(n-1)(n-2)}$
\item The probability that $i,j$ are in the same independent set and $k$ is in a different independent set is $\frac{(\frac{n}{k} - 1)(n - \frac{n}{k})}{(n-1)(n-2)}$. This is also the probability that $i,k$ are in the same independent set and $j$ is in a different independent set and the probability that $j,k$ are in the same independent set and $i$ is in a different independent set.
\item The probability that $i,j,k$ are all in different independent sets is $\frac{(n - \frac{n}{k})(n - 2\frac{n}{k})}{(n-1)(n-2)}$
\end{enumerate}
This gives the following pseudo-expectation values:
\begin{enumerate}
\item $\tilde{E}[x_{ij}] = \frac{n-\frac{n}{k}}{n-1}$
\item $\tilde{E}[x_{ij}x_{ik}] = \tilde{E}[x_{ij}x_{jk}] = \tilde{E}[x_{ik}x_{jk}] = \frac{(\frac{n}{k} - 1)(n - \frac{n}{k})}{(n-1)(n-2)} + \frac{(n - \frac{n}{k})(n - 2\frac{n}{k})}{(n-1)(n-2)} = \frac{(n - \frac{n}{k})(n - \frac{n}{k}-1)}{(n-1)(n-2)}$
\item $\tilde{E}[x_{ij}x_{ik}x_{jk}] = \frac{(n - \frac{n}{k})(n - 2\frac{n}{k})}{(n-1)(n-2)}$
\end{enumerate}
\end{example}
\begin{remark}
For the triangle problem, it is difficult to write down the general expression for $\tilde{E}$ explicitly. Fortunately, as we will show, we can verify the conditions of Definition \ref{pseudoexpectationdefinition} based on the story for $\tilde{E}$
\end{remark}
\subsection{Informal statement and proof of main result}
Roughly speaking, our main result is as follows:
\begin{theorem}[Informal statement of Theorem \ref{verifyinggoodstorytheorem}]\label{informalmaintheorem}
If $P$ is a symmetric problem described by a set of problem parameters and $S$ is a story for $P$ such that
\begin{enumerate}
\item $S$ is symmetric under permutations of $[1,n]$
\item Whenever the verifier queries a sequence of indices $A$ of length at most $r$, $S$ gives a probability distribution for what happens with the indices in $A$.
\item Whenever the verifier queries a sequence of indices $A$ of length at most $n'$, $S$ gives values for the probabilities which may be negative but are still well-defined expressions.
\item All probabilities given by $S$ are rational functions of the problem parameters.
\item For sufficiently many settings of the problem parameters, $S$ corresponds to taking the uniform distribution over permutations of a single solution $G_0$ for $P$. 
\end{enumerate}
then $S$ is a level $(r,n')$ good story for $P$ which implies that index degree $\min{\{2r,n'\}}$ SOS fails to refute the equations for $P$.
\end{theorem}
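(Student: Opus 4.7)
The plan is to convert the story $S$ into pseudo-expectation values $\tilde{E}$ of index degree $d = \min\{2r, n'\}$ and verify the three conditions of Definition \ref{pseudoexpectationdefinition} directly; by the proposition that pseudo-expectation values rule out any Positivstellensatz refutation, this will give the claimed SOS lower bound. To build $\tilde{E}$, for a monomial $p$ with $indexdeg(p) \leq n'$ I would pick any sequence $A$ with $I_A \supseteq I(p)$ of length at most $n'$, let $\mu_A$ be the (possibly signed, but by condition 3 well-defined) distribution that $S$ produces on the outcomes of the indices in $A$, and set $\tilde{E}[p] = E_{\mu_A}[p]$; then extend linearly to all polynomials of index degree at most $n'$.

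The first task is to check that $\tilde{E}$ is well-defined, i.e.\ independent of the choice of $A$. This is precisely the verifier's second flaw condition, and is ensured by the symmetry of $S$ under permutations of $[1,n]$ together with the consistency demanded by the verifier/adversary game: the marginals of $\mu_A$ on $I(p)$ agree across different orderings and across sequences of different lengths containing $I(p)$. The normalization $\tilde{E}[1] = 1$ then follows at once, since each $\mu_A$ has total mass $1$.

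The PSD condition $\tilde{E}[g^2] \geq 0$ for $indexdeg(g) \leq r$ uses condition 2 essentially: choose $A$ with $I_A = I(g)$ and $|A| \leq r$; by condition 2, $\mu_A$ is an honest (nonnegative) probability distribution, so $\tilde{E}[g^2] = E_{\mu_A}[g^2] \geq 0$. Since $g^2$ can have index degree up to $2r$, this is exactly what supports the $2r$ piece of the SOS degree bound. The equation condition $\tilde{E}[fs_i] = 0$ whenever $indexdeg(f) + indexdeg(s_i) \leq n'$ combines conditions 4 and 5 via polynomial interpolation: each probability produced by $S$ is a rational function of the problem parameters by condition 4, so $\tilde{E}[fs_i]$ is itself a rational function of the parameters; by condition 5 it vanishes at sufficiently many parameter settings (namely those where $\tilde{E}$ really is the expectation over a genuine distribution of solutions to $P$, on which $s_i$ vanishes identically), and hence must vanish as a rational function everywhere.

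The main obstacle I expect is the bookkeeping around the signed distributions $\mu_A$, and in particular making the definition of $\tilde{E}$ and its independence of $A$ sufficiently rigorous when the ``probabilities'' for sequences of length between $r$ and $n'$ can be negative. A secondary subtlety is quantifying exactly what ``sufficiently many'' parameter settings means so that the interpolation argument closes; but since conditions 4 and 5 are both about the same parameter space, the standard fact that a rational function vanishing on a Zariski-dense set vanishes identically should suffice. Once this foundation is in place, the PSD step is a one-line invocation of condition 2 and the equation step is a clean rational-function argument, so the crux really is the careful setup rather than any deep inequality.
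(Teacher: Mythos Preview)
Your argument for the PSD condition has a genuine gap. You write ``choose $A$ with $I_A = I(g)$ and $|A| \leq r$; by condition 2, $\mu_A$ is an honest (nonnegative) probability distribution, so $\tilde{E}[g^2] = E_{\mu_A}[g^2] \geq 0$.'' But $indexdeg(g) \leq r$ only means that each \emph{monomial} of $g$ depends on at most $r$ indices; the polynomial $g$ itself may depend on all $n$ indices. For example, $g = \sum_{i=1}^n x_i$ has index degree $1$ but depends on every index. So there is no single $A$ of size $\leq r$ on which $g$ (let alone $g^2$) is measurable, and the one-line invocation of condition 2 collapses. The cross-terms $p_j p_{j'}$ in $g^2$ can have index degree up to $2r$, and for sequences of that length the story may assign negative ``probabilities,'' so nonnegativity of $\tilde{E}[g^2]$ is not immediate.

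This is exactly where the paper's proof does real work. First, it invokes Theorem~\ref{squarereductiontheorem} (a representation-theoretic decomposition, essentially Corollary 2.6 of \cite{RSST18}) to reduce to the case where $g$ is symmetric under permutations of $[1,n] \setminus I$ for some $|I| \leq r$. Second, for such $g$, querying the $\leq r$ indices in $I$ decomposes $\tilde{E}$ as a genuine convex combination of conditional maps $\tilde{E}_j$ on $[1,n] \setminus I$. Third, and crucially, the paper uses interpolation (condition 5) to show that each $\tilde{E}_j$ is a \emph{single graph mimic}: it satisfies $\tilde{E}_j[fg] = \tilde{E}_j[f]\tilde{E}_j[g]$ for polynomials symmetric on $[1,n] \setminus I$. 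This multiplicativity gives $\tilde{E}_j[g^2] = (\tilde{E}_j[g])^2 \geq 0$, and hence $\tilde{E}[g^2] \geq 0$. Your proposal is missing both the symmetry reduction and the single-graph-mimic factorization, and without them the PSD step does not go through. (A smaller point: your well-definedness argument also needs interpolation, not just the game structure; consistency across different query orders is not assumed but must be proved from condition 5.)
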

\begin{remark}
We will rigorously define stories and good stories and prove that good stories imply SOS lower bounds in Section \ref{goodstorysection}.
\end{remark}
\begin{remark}
Condition 5 will be made more precise in Section \ref{verifyingstoriessection}. 
\end{remark}
\begin{proof}[Proof sketch of Theorem \ref{informalmaintheorem}]
Let $\{s_i = 0\}$ be the problem equations for $P$. As described in subsection \ref{findingEsubsection}, we can obtain pseudo-expectation values $\tilde{E}$ from the story $S$. Since we obtain values for the probabilities when we query up to $n'$ indices, we can obtain a value for $\tilde{E}[p]$ whenever $p$ has index degree at most $n'$. To prove the SOS lower bound, we need to verify the following:
\begin{enumerate}
\item $\tilde{E}$ is well-defined, i.e. for any monomial $p$ of index degree $\leq n'$ we get the same values for $\tilde{E}[p]$ regardless of which order we query the indices which $p$ depends on.
\item $\tilde{E}[f{s_i}] = 0$ whenever $indexdeg(f) + indexdeg(s_i) \leq n'$
\item $\tilde{E}[g^2] \geq 0$ whenever $indexdeg(g) \leq r$
\end{enumerate}
The key observation is that whenever $S$ corresponds to taking the uniform distribution over permutations of a single solution $G_0$, all of these statements are automatically satisfied. Since this happens sufficiently often, we can use polynomial interpolation to show that the first two statements must always be satisfied.

For the third statement, we first use symmetry (see Theorem \ref{squarereductiontheorem}) to show that it is sufficient to verify that $\tilde{E}[g^2] \geq 0$ whenever $indexdeg(g) \leq r$ and $g$ is symmetric under permutations of $[1,n] \setminus I$ for some subset $I \subseteq [1,n]$ of size at most $r$. Given such a polynomial $G$, we query the indices in $I$.

Since $|I| \leq r$ and the adversary gives non-negative probabilities when we query sequences $A$ of at most $r$ indices, we can view $\tilde{E}$ as a probability distribution over pseudo-expectations values $\tilde{E}_j$ for polynomials on the indices $[1,n] \setminus I$. Moreover, we will show that $\tilde{E}_j[fg] = \tilde{E}_j[f]\tilde{E}_j[g]$ for any polynomials $f,g$ such that $f,g$ are symmetric under permutations of $[1,n] \setminus I$ and $indexdeg_{[1,n] \setminus I}(f) + indexdeg(g)_{[1,n] \setminus I} \leq n' - |I|$. Given this, 
\[
\tilde{E}[g^2] = \sum_{j}{Pr[\tilde{E}_j]\tilde{E}_j[g^2]} = \sum_{j}{Pr[\tilde{E}_j](\tilde{E}_j[g])^2} \geq 0
\]
as needed.

To show that $\tilde{E}_j[fg] = \tilde{E}_j[f]\tilde{E}_j[g]$ for any polynomials $f,g$ such that $f,g$ are symmetric under permutations of $[1,n] \setminus I$ and $indexdeg_{[1,n] \setminus I}(f) + indexdeg(g)_{[1,n] \setminus I} \leq n' - |I|$, we observe that
whenever $S$ corresponds to taking the uniform distribution over permutations of a single solution $G_0$, each $\tilde{E}_j$ will correspond to taking the uniform distribution over permutations of a single solution $G_j$ over the variables $[1,n] \setminus I$. This implies that if $f,g$ are symmetric under permutations of $[1,n] \setminus I$, $\tilde{E}_j[fg] = \tilde{E}_j[f]\tilde{E}_j[g]$. We can then show using polynomial interpolation that we must always have $\tilde{E}_j[fg] = \tilde{E}_j[f]\tilde{E}_j[g]$.
\end{proof}
Sections \ref{goodstorysection} and \ref{verifyingstoriessection} are devoted to making this argument precise.
\begin{corollary} \ 
\begin{enumerate}
\item For all positive integers $n$ and all $k \in [0,n]$ such that $k \notin \mathbb{Z}$, index degree $\min\{2\lfloor\min{\{k,n-k\}}\rfloor + 2,n\}$ SOS fails to refute the equations for knapsack with unit weights.
\item For all positive odd integers $n$, index degree $n$ SOS fails to refute the equations for the MOD 2 principle.
\item For all positive integers $n \geq 6$ and all $k \in [1,n]$ such that $k \notin \mathbb{Z}$ or $\frac{n}{k} \notin \mathbb{Z}$, index degree $2\lfloor\min{\{k,\frac{n}{k}\}}\rfloor + 2$ SOS fails to refute the claim that Goodman's bound can be achieved for the triangle problem.
\end{enumerate}
\end{corollary}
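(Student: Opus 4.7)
The plan is to apply Theorem \ref{informalmaintheorem} (the informal version of Theorem \ref{verifyinggoodstorytheorem}) separately to each of the three problems, using precisely the stories described in subsection \ref{findingEsubsection}. For each problem I would read off the formal pseudo-expectation values from the probability tree produced by the story, then determine the largest index degree $r$ at which those conditional probabilities are still non-negative and the largest $n'$ at which the formulas remain well defined. The bound contributed to the corollary is then $\min\{2r, n'\}$ as in Theorem \ref{informalmaintheorem}.

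First I would dispatch the "easy" conditions 1--4. Symmetry under permutations of $[1,n]$ is built into each story (we never distinguish indices except through prior queries). The probability of every branching event is a ratio of falling factorials in the parameters $k$ (for knapsack and the triangle problem) or $n$ (for the MOD 2 principle), so the values produced by the story are manifestly rational functions of the problem parameters. Well-definedness of these rational expressions holds whenever no denominator vanishes; by inspection the denominators are products of terms like $(n-j)$, $(n-2j+1)$, which are nonzero until $j$ exceeds $n$, so I can take $n'=n$ in each case.

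For condition 5, I would argue that whenever the parameters take integer values for which the described combinatorial configuration actually exists, the story \emph{literally is} the uniform distribution over permutations of a single solution $G_0$: for knapsack, any integer $k\in[0,n]$ gives the uniform distribution over $k$-subsets of $[1,n]$; for the MOD 2 principle, any even $n$ gives the uniform distribution over perfect matchings of $K_n$ acted on by $S_n$ orbit of any fixed matching; for the triangle problem, any pair $(n,k)$ with $k\mid n$ gives the uniform distribution over $S_n$-images of the fixed partition into $k$ blocks of size $n/k$. Since each of these happens for infinitely many parameter values along a Zariski-dense subset, polynomial interpolation (as invoked in the proof sketch of Theorem \ref{informalmaintheorem}) lifts conditions (well-definedness and vanishing on problem equations) to the general rational-function setting.

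The main obstacle, and the only place where the three problems differ in a substantive way, is pinning down the threshold $r$ for non-negativity of the branching probabilities, because these probabilities become negative precisely when the story "runs out of room." For knapsack, after observing $j$ indices, the next-step probabilities are $(k - \#\text{ones})/(n-j)$ and $((n-k)-\#\text{zeros})/(n-j)$; both stay non-negative through $j\le\lfloor\min\{k,n-k\}\rfloor$, so the largest safe query length is $r=\lfloor\min\{k,n-k\}\rfloor+1$, giving $2r=2\lfloor\min\{k,n-k\}\rfloor+2$ and the bound $\min\{2r,n\}$ of part (1). For the MOD 2 principle with $n$ odd, every denominator $n-2j+1$ is a positive even integer for $j\le(n-1)/2$ and every numerator equals $1$, so non-negativity holds all the way up to $r=n$, and the bound is $\min\{2r,n'\}=n$, matching part (2). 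For the triangle problem, the branching probabilities for "same independent set" and "new independent set" at step $j$ are $(n/k - \#\text{same})/(n-j)$ and a positive multiple of $(k - \#\text{used blocks})/(n-j)$; these stay non-negative through $j\le\lfloor\min\{k,n/k\}\rfloor$, giving $r=\lfloor\min\{k,n/k\}\rfloor+1$ and the bound $2r=2\lfloor\min\{k,n/k\}\rfloor+2$ of part (3). After these three bookkeeping calculations the corollary follows immediately from Theorem \ref{informalmaintheorem}.
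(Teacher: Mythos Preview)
Your plan is exactly the paper's: verify the hypotheses of Theorem~\ref{informalmaintheorem} for each of the three stories, read off $n'$ and $r$, and output $\min\{2r,n'\}$. Your treatment of knapsack and of the triangle problem matches the paper's and is correct.

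Your analysis of the MOD~2 case, however, contains a real error. You assert that ``every numerator equals $1$'' and conclude that non-negativity persists up to $r=n$. This is false: the probability that a newly queried vertex is \emph{un}matched to all previously queried vertices does not have numerator $1$. If the first $m$ queried vertices are pairwise unmatched, then the conditional probability that vertex $m+1$ is also unmatched to all of them is $(n-2m)/(n-m)$, which becomes negative once $m>\lfloor n/2\rfloor$. Hence one can query only $\lfloor n/2\rfloor+1$ indices before a negative probability appears, giving $r=\lfloor n/2\rfloor+1=(n+1)/2$ for odd $n$, as the paper states. Fortunately $2r=n+1>n=n'$, so $\min\{2r,n'\}=n$ and the bound in part~(2) is unaffected; but the argument ``$r=n$'' is wrong and the line justifying it (checking $j\le (n-1)/2$ and then jumping to $r=n$) is a non sequitur that should be replaced by the correct computation of $r$.
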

\begin{proof}[Proof sketch]
The first and fourth conditions of Theorem \ref{informalmaintheorem} are clear from the description of the stories for knapsack, the MOD 2 priniciple, and the triangle problem and how they result in pseudo-expectation values. We now observe that 
\begin{enumerate}
\item Whenever $n,k \in \mathbb{Z}$ and $0 \leq k \leq n$, the story for knapsack corresponds to taking the uniform distribution over permutations of a single solution $G_0$ which takes the first $k$ elements.
\item Whenever $n$ is even, the story for the MOD 2 principle corresponds to taking the uniform distribution over permutations of a single solution $G_0$ which takes the matching $\{(2i-1,2i): i \in [1,\frac{n}{2}]\}$)
\item Whenever $n,k$ are positive integers and $\frac{n}{k}$ is also a positive integer, the story for the triangle problem corresponds to taking the uniform distribution over permutations of a single input $G_0$ which has independent sets 
$\{[\frac{jn}{k} + 1,(j+1)\frac{n}{k}]: j \in [0,k-1]\}$ and has all edges between the independent sets.
\end{enumerate}
As we will confirm in subsection \ref{confirmingproblemstoriessubsection}, this is sufficiently often for our purposes, so the fifth condition of Theorem \ref{informalmaintheorem} holds as well. Now we just need to determine $n'$ and $r$. For all three problems, we obtain well-defined expressions for the pseudo-expectation values whenever we query $\leq n$ indices, so we may take $n' = n$.

For knapsack, as long as we have queried at most $\lfloor\min{\{k,n-k\}}\rfloor$ indices, the next index will have a non-negative probability of being taken and a non-negative probability of not being taken. Thus, we can query $\lfloor\min{\{k,n-k\}}\rfloor + 1$ indices without encountering a negative probability so we can take $r = \lfloor\min{\{k,n-k\}}\rfloor + 1$.

For the MOD 2 principle, as long as we have queried at most $\lfloor{\frac{n}{2}}\rfloor$ indices, the next index will have a non-negative probability of being matched with any of the indices which are currently unmatched and a non-negative probability of not being matched with any previous index. Thus, we can query $\lfloor{\frac{n}{2}}\rfloor + 1$ indices without encountering a negative probability so we can take $r = \lfloor{\frac{n}{2}}\rfloor + 1$.

For the triangle problem, as long as we have queried at most $\lfloor{\min\{k,\frac{n}{k}\}}\rfloor$ indices, the next index will have a non-negative probability of being in any of the current independent sets and a non-negative probability of being in a new independent set. Thus, we can query $\lfloor{\min\{k,\frac{n}{k}\}}\rfloor + 1$ indices without encountering a negative probability so we can take $r = \lfloor{\min\{k,\frac{n}{k}\}}\rfloor + 1$.
\end{proof}

\section{Symmetry and SOS lower bounds}\label{symmetrylowerboundsection}
In this section, we highlight how symmetry can help prove SOS lower bounds even without additional assumptions. In particular, we have the following theorem which essentially follows from Corollary 2.6 of \cite{RSST18}.
\begin{theorem}\label{squarereductiontheorem}
If $\tilde{E}$ is a linear map from polynomials to $\mathbb{R}$ which is symmetric with respect to permutations of $[1,n]$ then for any polynomial $g$, we can write
\[
\tilde{E}[g^2] = \sum_{I \subseteq [1,n],j:|I| \leq indexdeg(g)}{\tilde{E}[g^2_{Ij}]}
\]
where for all $I,j$,
\begin{enumerate}
\item $g_{Ij}$ is symmetric with respect to permutations of $[1,n] \setminus I$.
\item $indexdeg(g_{Ij}) \leq indexdeg(g)$
\item $\forall i \in I, \sum_{\sigma \in S_{[1,n] \setminus (I \setminus \{i\})}}{\sigma(g_{Ij})} = 0$
\end{enumerate}
\end{theorem}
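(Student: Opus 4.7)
The plan is to produce the pieces $g_{Ij}$ via an isotypic-style decomposition of polynomials of index degree at most $indexdeg(g)$ under the natural $S_n$-action, and then exploit the $S_n$-invariance of the bilinear form $\langle f_1, f_2 \rangle := \tilde{E}[f_1 f_2]$ to kill all cross terms. Once the decomposition $g = \sum_{I,j} g_{Ij}$ and the orthogonality $\tilde{E}[g_{Ij}g_{I'j'}] = 0$ for $(I,j) \neq (I',j')$ are in place, the theorem follows immediately by expanding the square.

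First I would construct the pieces $g_I$ by inclusion--exclusion on the symmetrization operator: for each $I$, let $\mathrm{Sym}_{[1,n] \setminus I}(f)$ denote the average of $f$ over permutations fixing $I$, start from $\mathrm{Sym}_{[1,n] \setminus I}(g)$, and inductively subtract off the pieces already assigned to proper subsets $I' \subsetneq I$. By design, the resulting $g_I$ is $S_{[1,n] \setminus I}$-symmetric (condition 1), has index degree at most $indexdeg(g)$ since symmetrization never raises index degree (condition 2), and satisfies condition 3 because for each $i \in I$ the ``$I \setminus \{i\}$-level'' contribution has been explicitly subtracted off, so averaging $g_I$ back over $S_{[1,n] \setminus (I \setminus \{i\})}$ returns zero. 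The index $j$ is a further refinement inside $V_I$, the space of polynomials satisfying conditions 1--3 for this particular $I$: I would pick a basis of the span of the relevant $S_n$-orbit that is orthogonal with respect to $\tilde{E}[\cdot\,\cdot]$ and let the $g_{Ij}$ be the components of $g$ in this basis.

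The crux is the orthogonality $\tilde{E}[g_{Ij}g_{I'j'}] = 0$ for $(I,j) \neq (I',j')$. When $|I| \neq |I'|$, the two pieces live in distinct $S_n$-isotypic components, so the $S_n$-invariant form $\langle \cdot,\cdot \rangle$ annihilates them by Schur's lemma. When $|I| = |I'|$, $I \neq I'$, and $|I \cap I'| = |I| - 1$, pick $i \in I \setminus I'$: condition 3 for $g_{Ij}$ says $\sum_{\sigma \in S_{[1,n] \setminus (I \setminus \{i\})}} \sigma(g_{Ij}) = 0$, and since $g_{I'j'}$ is $S_{[1,n] \setminus (I \setminus \{i\})}$-symmetric in this particular case, we can slide the averaging operator inside $\tilde{E}$ and conclude that the inner product vanishes. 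Within a single $V_I$, orthogonality across different $j$ holds by construction of the basis.

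The main obstacle is the remaining case $|I| = |I'|$ with $|I \cap I'| < |I| - 1$: here there is no single $i \in I$ such that $S_{[1,n] \setminus (I \setminus \{i\})}$ fixes $g_{I'j'}$, so one cannot directly transfer the vanishing coming from condition 3 to the product $g_{Ij} g_{I'j'}$. The resolution is representation-theoretic: one works inside a single $S_n$-isotypic component of the polynomial space, where the harmonic subspaces at different $I$ of the same size are linked by explicit $S_n$-equivariant maps, and Schur's lemma then forces the cross term to vanish. This is precisely the content of Corollary 2.6 of Raymond et al., which I would invoke (or re-derive in the present notation) at this step.
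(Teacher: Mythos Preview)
Your overall strategy --- write $g = \sum_{I,j} g_{Ij}$ and then show $\tilde{E}[g_{Ij}g_{I'j'}] = 0$ for $(I,j) \neq (I',j')$ --- cannot work, because the cross terms at the same level simply do not vanish. The paper's own warm-up example makes this explicit: with $g_{1i} = x_i - \frac{1}{n-1}\sum_{j \neq i} x_j$ (which satisfies all three conditions for $I = \{i\}$), one computes $\tilde{E}[g_{1i_1}g_{1i_2}] = -\frac{1}{n-1}\tilde{E}[g_{1i_1}^2]$, which is nonzero in general. Your specific argument for the case $|I| = |I'|$, $|I \cap I'| = |I|-1$ breaks at the line ``since $g_{I'j'}$ is $S_{[1,n]\setminus(I\setminus\{i\})}$-symmetric'': $g_{I'j'}$ is only $S_{[1,n]\setminus I'}$-symmetric, and since $|I'| = |I| > |I\setminus\{i\}|$ you cannot have $I' \subseteq I\setminus\{i\}$, so the stabilizer you need is strictly larger than the one you have. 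Also, choosing an orthogonal basis inside $V_I$ with respect to $\tilde{E}[\cdot\,\cdot]$ presupposes that this form is well-behaved (e.g.\ PSD), which is exactly what one is ultimately trying to establish.

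The theorem does \emph{not} assert $g = \sum g_{Ij}$ with orthogonal summands; it only asserts that $\tilde{E}[g^2]$ can be rewritten as a sum of terms $\tilde{E}[g_{Ij}^2]$. The paper's proof exploits this freedom: it applies Corollary~2.6 of Raymond et al.\ to the symmetric sum of squares $\mathrm{sym}(g^2)$ directly, obtaining a \emph{new} sum-of-squares decomposition in which each square already lies in some $W_{\tau_\lambda}$; the $g_{Ij}$ are read off from that decomposition, not from $g$. The alternative combinatorial proof in the appendix does decompose $g$ in a suitable basis $\{\phi_{F,L}\}$, but then shows that the nonvanishing cross terms \emph{recombine into additional squares} with positive coefficients --- there is no orthogonality claim. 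Either route is available, but the step you are missing is that same-level cross terms must be absorbed into squares rather than shown to vanish; this is precisely the content of the RSST corollary that you defer to only at the end, whereas it (or its combinatorial analogue) is really the whole argument.
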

Theorem \ref{squarereductiontheorem} is very useful for proving SOS lower bounds on symmetric problems because it implies that instead of checking that $\tilde{E}[g^2] \geq 0$ for all polynomials of index degree $\leq \frac{d}{2}$, it is sufficient to check polynomials which are symmetric under permutations of all but $\frac{d}{2}$ indices. However, despite its simplicity, Theorem \ref{squarereductiontheorem} is quite deep. To prove Theorem \ref{squarereductiontheorem}, we must carefully decompose $g$ and then use symmetry to analyze all of the non-square terms of $g^2$ and either eliminate them or reduce them to square terms. Fortunately, this has already been done by Corollary 2.6 of \cite{RSST18} using representation theory. We now sketch how Theorem \ref{squarereductiontheorem} follows from Corollary 2.6 of \cite{RSST18}.
\begin{proof}[Proof sketch of Theorem \ref{squarereductiontheorem} using Corollary 2.6 of \cite{RSST18}]
We must first recall some definitions.
\begin{definition}
Let $\lambda = (\lambda_1,\dots,\lambda_k)$ be a tuple of positive integers where $\lambda_1 \geq \lambda_2 \geq \dots \lambda_k$ and $\sum_{i=1}^{k}{\lambda_i} = n$. A Young tableau $\tau_{\lambda}$ of shape $\lambda$ consists of $k$ rows of boxes where the ith row has $\lambda_i$ boxes together with an assignment of the numbers [1,n] into the $n$ boxes. These rows of boxes are aarranged so that their left sides line up.

A Young tableau is a standard Young tableau if all of its rows and columns are in increasing order.
\end{definition}
\begin{definition}[Definition 2.1 of \cite{RSST18}]
If $\oplus_{\lambda}{V_{\lambda}}$ is the isotypic decomposition of the vector space of polynomials of degree $\leq d$ and $\tau_{\lambda}$ is a Young tableau of shape $\lambda$, define 
\[
W_{\tau_{\lambda}} := V_{\lambda}^{\mathcal{R}_{\tau_{\lambda}}}
\]
to be the subspace of the isotypic $V_{\lambda}$ fixed by the action of the row group $\mathcal{R}_{\tau_{\lambda}}$ (which keeps each row of $\tau_{\lambda}$ fixed but may permute the elements within each row of $\tau_{\lambda}$)
\end{definition}
Corollary 2.6 of \cite{RSST18} (rephrased slightly) says the following:
\begin{corollary}[Corollary 2.6 of \cite{RSST18}]
Suppose $p$ is a polynomial on the variables $\{x_{ij}: i,j \in [1,n], i < j\}$ such that $p$ is symmetric under permutations of $[1,n]$ and $p$ can be written as a sum of squares of polynomials of degree $\leq d$. For each partition $\lambda \vdash n$, fix a tableau $\tau_{\lambda}$ of shape $\lambda$ and choose a vector space basis $\{b^{\tau_{\lambda}}_1,\dots,b^{\tau_{\lambda}}_{m_{\lambda}}\}$ for $W_{\tau_{\lambda}}$. Then for each partition $\lambda \in \Lambda$, there exists an $m_{\lambda} \times m_{\lambda}$ PSD matrix $Q_{\lambda}$ such that 
\[
p = \sum_{\lambda \in \Lambda}{tr(Q_{\lambda}Y^{\tau_{\lambda}})}
\]
where $\Lambda := \{\lambda \vdash n: \lambda \geq_{lex} (n-2d,1^{2d})\}$ and $Y^{\tau_{\lambda}}_{ij} := sym({b^{\tau_{\lambda}}_i}{b^{\tau_{\lambda}}_j})$
\end{corollary}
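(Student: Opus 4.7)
The plan is to combine the Gram-matrix reformulation of sum-of-squares with the $S_n$-equivariant structure of the polynomial space. First I would rewrite $p = \sum_i q_i^2$ with each $\deg q_i \le d$ as $p = v^T G v$, where $v$ is a column vector of a monomial basis for the space $V$ of polynomials of degree $\le d$ in the variables $\{x_{ij}\}$, and $G \succeq 0$ is the Gram matrix whose rows are the coefficient vectors of the $q_i$. The symmetric group $S_n$ acts on $V$ by permuting the indices $[1,n]$; since $p$ is $S_n$-invariant, replacing $G$ by its group average $\bar G := \frac{1}{n!}\sum_{\sigma \in S_n} \rho(\sigma)^T G\,\rho(\sigma)$ preserves both the identity $p = v^T \bar G v$ and PSD-ness, while making $\bar G$ commute with the $S_n$-action on $V$.

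Next I would invoke the isotypic decomposition $V = \bigoplus_\mu V_\mu$ indexed by partitions $\mu \vdash n$. Because every monomial of degree $\le d$ in $\{x_{ij}\}$ involves at most $2d$ indices, the $S_n$-orbit of any such monomial sits inside a permutation module of the form $M^{(n-k, 1^k)}$ with $k \le 2d$, whose Specht-module constituents $S^\mu$ all satisfy $\mu_1 \ge n-k \ge n-2d$. This is equivalent to $\mu \ge_{lex} (n-2d, 1^{2d})$, which is exactly the index set $\Lambda$. By Schur's lemma, the $S_n$-equivariant PSD operator $\bar G$ is block-diagonal along this decomposition, and on $V_\lambda \cong S^\lambda \otimes \mathbb{C}^{m_\lambda}$ it factors as $\mathrm{id}_{S^\lambda} \otimes Q_\lambda$ for some PSD matrix $Q_\lambda$ on $\mathbb{C}^{m_\lambda}$. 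After fixing $\tau_\lambda$, the multiplicity space is canonically identified with $W_{\tau_\lambda} = V_\lambda^{\mathcal R_{\tau_\lambda}}$ via the row-group projector, so the chosen basis $\{b^{\tau_\lambda}_i\}$ realizes $Q_\lambda$ as an $m_\lambda \times m_\lambda$ PSD matrix.

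Finally I would translate each $\lambda$-block of $\bar G$ back into a polynomial identity. Reading off the contribution of $V_\lambda$ to $v^T \bar G v$ and using that the output $p$ is $S_n$-invariant forces everything but the symmetrized part to survive, yielding $\sum_{i,j}(Q_\lambda)_{ij}\,sym(b^{\tau_\lambda}_i b^{\tau_\lambda}_j) = tr(Q_\lambda Y^{\tau_\lambda})$. Summing over $\lambda \in \Lambda$ then gives the asserted decomposition, with each $Q_\lambda$ PSD because it is a block of the PSD matrix $\bar G$.

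The hard part, and the representation-theoretic heart of the argument in \cite{RSST18}, will be justifying the last identification: one must verify that the contribution of each $\lambda$-block of $\bar G$ to $p$ is literally $tr(Q_\lambda Y^{\tau_\lambda})$ with $Q_\lambda$ realized on $W_{\tau_\lambda}$, rather than on some other choice of multiplicity space, and that the resulting matrix remains PSD in this basis. This reduces to checking that the averaging map $sym(\cdot)$ is non-degenerate when restricted to the span of the symmetrized products $b^{\tau_\lambda}_i b^{\tau_\lambda}_j$, which follows from the Young symmetrizer construction of the $S^\lambda$ together with the orthogonality of distinct isotypic components under any $S_n$-invariant inner product.
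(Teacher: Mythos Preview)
This statement is not proved in the paper at all: it is quoted verbatim as Corollary~2.6 of \cite{RSST18} and invoked as a black box inside the proof sketch of Theorem~\ref{squarereductiontheorem}. The paper's own contribution is the \emph{application} of this corollary (extracting the $g_{Ij}$'s and checking their three properties), together with an alternative combinatorial proof of Theorem~\ref{squarereductiontheorem} in Appendix~\ref{decompositionsection} that bypasses the corollary entirely. So there is no ``paper's own proof'' of this particular statement to compare against.

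That said, your sketch is essentially the standard Gatermann--Parrilo / \cite{RSST18} symmetry-reduction argument and is broadly on target: Gram matrix, group-average to get an equivariant PSD operator, Schur's lemma to block-diagonalize along isotypics, and identification of the multiplicity space with $W_{\tau_\lambda}$. Two points deserve more care. First, your justification that only $\lambda$ with $\lambda_1 \ge n-2d$ occur is slightly loose: you invoke permutation modules $M^{(n-k,1^k)}$, but the relevant fact is that the space of degree-$\le d$ polynomials, as an $S_n$-module, is a quotient of such modules (each monomial is fixed by the stabilizer of its index set), and then Young's rule gives the constraint on $\lambda_1$. Second, the identification of $W_{\tau_\lambda}$ with the multiplicity space is the genuinely nontrivial step in \cite{RSST18}; your appeal to ``the row-group projector'' is the right idea, but one needs that the map $\mathrm{Hom}_{S_n}(S^\lambda, V) \to W_{\tau_\lambda}$ given by evaluating at the $\mathcal{R}_{\tau_\lambda}$-fixed vector of $S^\lambda$ is an isomorphism, which requires knowing that $(S^\lambda)^{\mathcal{R}_{\tau_\lambda}}$ is one-dimensional. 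Your final paragraph correctly flags this as the crux, though the ``non-degeneracy of $sym$'' framing is not quite how the argument goes.
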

Using Corollary 2.6 of \cite{RSST18}, we can prove Theorem \ref{squarereductiontheorem} as follows. Since $\tilde{E}$ is symmetric, $\tilde{E}[g^2] = \tilde{E}[sym(g^2)]$ where $sym(g^2) = \frac{1}{n!}\sum_{\sigma \in S_n}{(\sigma(g))^2}$. Since $sym(g^2)$ is symmetric and a sum of squares, by Corollary 2.6 of \cite{RSST18}, there exist PSD matrices $Q_{\lambda}$ such that 
\[
\tilde{E}[g^2] = \sum_{\lambda \in \Lambda}{\tilde{E}[tr(Q_{\lambda}Y^{\tau_{\lambda}})]}
\]
Since $\tilde{E}$ is symmetric, this implies that 
\[
\tilde{E}[g^2] = \sum_{\lambda \in \Lambda}{\tilde{E}[tr(Q_{\lambda}{Y'}^{\tau_{\lambda}})]}
\]
where ${Y'}^{\tau_{\lambda}}_{ij} := {b^{\tau_{\lambda}}_i}{b^{\tau_{\lambda}}_j}$. Now consider each $\lambda \in \Lambda$ separately and observe that since $Q_{\lambda} \succeq 0$, we can write 
$Q_{\lambda} = \sum_{j}{{q^j}{{q^j}^T}}$ for some vectors $\{q^1,\dots,q^{m_{\lambda}}\}$. Thus,
\[
tr(Q_{\lambda}Y^{\tau_{\lambda}}) = tr(\sum_{j}{{q^j}{{q^j}^T}b^{\tau_{\lambda}}{b^{\tau_{\lambda}}}^T}) = \sum_{j}{{{q^j}^T}b^{\tau_{\lambda}}{b^{\tau_{\lambda}}}^T{q^j}} = \sum_{j}{\left(\sum_{i \in m_{\lambda}}{q^{j}_{i}b^{\tau_{\lambda}}_i}\right)^2}
\]
which means we can reexpress $sym(g^2)$ as a sum of squares, each of which has the form $\left(\sum_{i=1}^{m_{\lambda}}{{c_i}b^{\tau_{\lambda}}_i}\right)^2$ for some partition $\lambda \vdash n$, tableau $\tau_{\lambda}$ of shape $\lambda$, and coefficients $\{c_i\}$

For each square $\left(\sum_{i=1}^{m_{\lambda}}{{c_i}b^{\tau_{\lambda}}_i}\right)^2$, let $I$ be the set of indices which are not in the top row of $\tau_{\lambda}$. To show the first statement of Theorem \ref{squarereductiontheorem}, observe that permuting the indices of $[1,n] \setminus I$ is just permuting the top row of $\tau_{\lambda}$. By definition, the elements of $W_{\tau_{\lambda}}$ are all invariant under such permutations, so $\sum_{i=1}^{m_{\lambda}}{{c_i}b^{\tau_{\lambda}}_i}$ is invariant under permutations of $[1,n] \setminus I$, as needed.
\begin{remark}
In the setting of Corollary 2.6 of \cite{RSST18} the variables are $\{x_{ij}: i,j \in [1,n], i < j\}$ so if $g$ has degree $d$, $g$ can have index degree $2d$ which matches the fact that $\Lambda := \{\lambda \vdash n: \lambda \geq_{lex} (n-2d,1^{2d})\}$. To prove Thorem \ref{squarereductiontheorem} as stated using Corollary 2.6 of \cite{RSST18}, Corollary 2.6 of \cite{RSST18} must be restated in terms of index degree and the proof adjusted accordingly.
\end{remark}
The second statement of Theorem \ref{squarereductiontheorem} is trivial as all of the $b^{\tau_{\lambda}}_i$ are in the vector space of polynomials of degree $\leq d$ and thus index degree $\leq 2d$.

To show the third statement of Theorem \ref{squarereductiontheorem}, we need to prove the following lemma
\begin{lemma}\label{averagingtozerolemma}
For any $\tau_{\lambda}$, letting $I$ be the set of indices which are not in the top row of $\tau_{\lambda}$, for any $i \in I$ and any $p \in W_{\tau_{\lambda}}$,
\[
\sum_{\sigma \in S_{([1,n] \setminus I) \cup \{i\}}}{\sigma(p)} = 0
\]
\end{lemma}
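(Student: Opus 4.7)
The plan is to reinterpret the sum $\sum_{\sigma \in S_{([1,n]\setminus I)\cup\{i\}}}{\sigma(p)}$ as (a multiple of) a projection onto an invariant subspace and then force that subspace to be trivial. Set $T := [1,n]\setminus I$, so $T$ is the top row of $\tau_{\lambda}$ and $|T| = \lambda_1$, and let $H := S_{T \cup \{i\}}$. Then $\sum_{\sigma \in H}{\sigma}$ equals $|H|$ times the projector onto the subspace of $H$-invariants in the ambient $S_n$-representation. Because $V_{\lambda}$ is $S_n$-stable and hence $H$-stable, this projector restricts to a map $V_{\lambda} \to V_{\lambda}^{H}$. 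Since $p \in W_{\tau_{\lambda}} \subseteq V_{\lambda}$, it suffices to establish the stronger claim that $V_{\lambda}^{H} = 0$.

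The group $H = S_{T \cup \{i\}}$ permutes the $\lambda_1 + 1$ elements of $T \cup \{i\}$ and fixes every other index, so it is exactly the Young subgroup of $S_n$ corresponding to the partition $\mu := (\lambda_1+1,\,1^{\,n-\lambda_1-1})$. Frobenius reciprocity then yields
\[
V_{\lambda}^{H} \;\cong\; \operatorname{Hom}_{S_n}\!\bigl(\operatorname{Ind}_{H}^{S_n}\mathbf{1},\,V_{\lambda}\bigr) \;=\; \operatorname{Hom}_{S_n}\!\bigl(M^{\mu},\,V_{\lambda}\bigr),
\]
where $M^{\mu}$ is the Young permutation module. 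By Young's rule, $M^{\mu} = \bigoplus_{\nu}{K_{\nu\mu}\,S^{\nu}}$, where $K_{\nu\mu}$ is the Kostka number counting semistandard Young tableaux of shape $\nu$ and content $\mu$. Since $V_{\lambda}$ is a direct sum of copies of the Specht module $S^{\lambda}$ and hom-spaces between distinct irreducibles of $S_n$ vanish, the entire hom-space collapses as soon as $K_{\lambda\mu} = 0$.

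The final step is to verify $K_{\lambda\mu} = 0$, and this is precisely where the hypothesis $i \in I$ is used: it guarantees that $\lambda$ has more than one row, so $\mu_1 = \lambda_1 + 1$ strictly exceeds $\lambda_1$, and hence $\lambda$ does not dominate $\mu$. By the standard dominance criterion for Kostka numbers this forces $K_{\lambda\mu} = 0$. Combinatorially the same obstruction is visible: any semistandard tableau of shape $\lambda$ with content $\mu$ would have to place $\lambda_1 + 1$ entries equal to $1$ into distinct columns (as $1$ is the minimal symbol and columns must be strictly increasing), but the Young diagram of $\lambda$ has only $\lambda_1$ columns. The only real conceptual obstacle is recognizing that this averaging identity is equivalent to the vanishing of a specific Kostka number; once the identification of $H$ with a Young subgroup is made, the rest is a short application of Frobenius reciprocity plus Young's rule.
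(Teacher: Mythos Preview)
Your proof is correct and takes a somewhat different route from the paper's primary argument. The paper works directly with the polytabloid construction of the Specht module: since $V_\lambda$ is spanned by elements $E_T = \sum_{\sigma \in Q_T} \operatorname{sign}(\sigma)\{\sigma(T)\}$, and since the set $([1,n]\setminus I)\cup\{i\}$ has $\lambda_1 + 1$ elements while $T$ has only $\lambda_1$ columns, pigeonhole forces two of these elements to lie in a common column of $T$; the transposition swapping them belongs to $Q_T$ and negates $E_T$, so the full symmetrization vanishes. You instead recast the claim as $V_\lambda^{H} = 0$ for the Young subgroup $H = S_\mu$ with $\mu = (\lambda_1+1,\,1^{\,n-\lambda_1-1})$, and then invoke Frobenius reciprocity together with Young's rule to reduce to $K_{\lambda\mu} = 0$, which follows from the dominance criterion. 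Both arguments bottom out in the same pigeonhole obstruction (one cannot place $\lambda_1+1$ objects into $\lambda_1$ columns), but yours is packaged representation-theoretically while the paper's is hands-on with the Specht basis. It is worth noting that the Remark the paper places immediately after this lemma sketches an argument via Young modules, Young's rule, and Kostka numbers that is essentially the one you give, so your approach coincides with the paper's \emph{alternative} explanation rather than its main proof.
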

\begin{proof}
In fact, a stronger statement is true. For any $p \in V_{\lambda}$, if $\lambda_1$ is the length of the first row of $\lambda$ then for any $I'$ of size at most $n-\lambda_1-1$, 
\[
\sum_{\sigma \in S_{[1,n] \setminus I'}}{\sigma(p)} = 0
\]
To see this directly, observe that $V_{\lambda}$ is isomorphic to a direct sum of copies of the Specht module associated with $\lambda$, which according to Wikipedia \cite{Wiki18} is defined as follows:
\begin{definition}
Given a Young tableau $T$ of shape $\lambda$, define $\{T\}$ to be the equivalence class of all Young tableau which have the same elements as $T$ in each row (though possibly in a different order). $\{T\}$ is called a tabloid.
\end{definition}
\begin{definition}
Given a Young tableau $T$ of shape $\lambda$, 
\begin{enumerate}
\item Define $Q_T = \{\sigma \in S_n: \sigma \text{ preserves the columns of } T\}$
\item Define $E_T = \sum_{\sigma \in Q_T}{sign(\sigma)\{\sigma(T)\}}$
\end{enumerate}
The Specht module associated with $\lambda$ is $span{\{E_T: T \text{ is a Young tableau of shape } \lambda\}}$
\end{definition}
Now observe that if $I'$ has size at most $n-\lambda_1-1$, for any young tableau $T$ of shape $\lambda$, $[1,n] \setminus I'$ will contain two elements $i,j$ in a single column of $T$. Swapping $i,j$ flips the sign of $E_T$ which implies that 
\[
\sum_{\sigma \in S_{[1,n] \setminus I'}}{\sigma(E_T)} = 0
\]
Since $\{E_T: T \text{ is a Young tableau of shape } \lambda\}$ is a basis for the Specht module, this implies that for all $p \in V_{\lambda}$, $\sum_{\sigma \in S_{[1,n] \setminus I'}}{\sigma(p)} = 0$, as needed.
\end{proof}
\end{proof}
\begin{remark}
Corollary 2.6 of \cite{RSST18} does not give us an explicit expression for $\tilde{E}[g^2]$, so we can ask whether we can obtain an explicit expression for $\tilde{E}[g^2]$. It turns out that there is such an expression but it is quite complicated. For an alternative proof of Theorem \ref{squarereductiontheorem} which is explicit and combinatorial but technical, see Appendix \ref{decompositionsection}.
\end{remark}
\begin{remark}
In this analysis, we are essentially focusing on the length of the first row of $\lambda$ and ignoring the lengths of the remaining rows of $\lambda$. In particular, we make the following claim which gives an alternative explanation for why Lemma \ref{averagingtozerolemma} is true.
\begin{definition}
Define $U_r = span\{p: \exists I \subseteq [1,n]: |I| = r, \forall \sigma \in S_{[1,n] \setminus I}, \sigma(p) = p\}$ and define 
\begin{align*}
V_r = U_r/U_{r-1} = &span\{p: \exists I \subseteq [1,n]: |I| = r, \forall \sigma \in S_{[1,n] \setminus I}, \sigma(p) = p, \\
&\forall J \subseteq [1,n]: |J| \leq r-1, \sum_{\sigma \in S_{[1,n] \setminus J}}{\sigma(p) = 0}\}
\end{align*}
\end{definition}
\begin{claim}
$V_r = \oplus_{\lambda: \lambda_1 = n-r}{V_{\lambda}}$
\end{claim}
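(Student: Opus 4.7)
The plan is to work at the level of $U_r$ and prove
\[
U_r = \bigoplus_{\lambda \vdash n:\ \lambda_1 \geq n-r} V_\lambda,
\]
after which the claim follows immediately by quotienting: $U_{r-1}$ corresponds to the strictly stronger condition $\lambda_1 \geq n-r+1$, so the difference leaves exactly the isotypic components with $\lambda_1 = n-r$.

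For the containment $\bigoplus_{\lambda_1 \geq n-r} V_\lambda \subseteq U_r$, I would use the explicit Specht basis already invoked in the proof of Lemma \ref{averagingtozerolemma}. If $\lambda_1 \geq n-r$, then $\lambda$ has at most $r$ cells outside its first row, so for any Young tableau $T$ of shape $\lambda$ the set $I$ of labels not appearing in the top row satisfies $|I| \leq r$. The polytabloid $E_T$ is a signed sum of tabloids $\{\sigma(T)\}$, and tabloids are by definition invariant under arbitrary permutations within each row; in particular $E_T$ is fixed by $S_{[1,n]\setminus I}$ and therefore lies in $U_r$. Since the polytabloids span the Specht module attached to $\lambda$, the entire $V_\lambda$ isotypic component sits inside $U_r$.

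For the reverse inclusion, I would invoke the Young branching rule. Fix $I$ of size $r$ and consider the fixed subspace of the ambient polynomial space under $S_{[1,n]\setminus I} \cong S_{n-r}$. The intersection of this fixed subspace with a given isotypic component $V_\lambda$ is nonzero if and only if the trivial representation of $S_{n-r}$ occurs in the restriction of the $\lambda$-Specht module to $S_{n-r}$. By $r$-fold iteration of branching (equivalently, by Frobenius reciprocity), this happens precisely when one can remove $r$ boxes from $\lambda$ one at a time and end at the one-row shape $(n-r)$, which in turn holds if and only if $\lambda_1 \geq n-r$. Summing over all $r$-element subsets $I$ yields $U_r \subseteq \bigoplus_{\lambda_1 \geq n-r} V_\lambda$.

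The main subtlety lies in this branching step. Because the ambient polynomial space carries each irreducible with potentially high multiplicity, the branching computation has to be executed on each irreducible constituent of the isotypic component $V_\lambda$, and one must then verify that varying $I$ over all $r$-subsets genuinely covers the whole of $V_\lambda$ for every $\lambda$ with $\lambda_1 \geq n-r$. This follows because the fixed subspaces for different $I$ are $S_n$-conjugates of one another, so their joint $S_n$-span is a nonzero $S_n$-invariant subspace of $V_\lambda$, and must therefore equal all of $V_\lambda$.
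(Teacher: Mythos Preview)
Your overall strategy---reduce to showing $U_r=\bigoplus_{\lambda_1\ge n-r}V_\lambda$---matches the paper, though the paper only offers an informal argument via Young's rule and Kostka numbers (identifying $U_r$ heuristically with the Young module for the hook $(n-r,1^r)$), whereas you go via the branching rule and Frobenius reciprocity. Your branching argument for the inclusion $U_r\subseteq\bigoplus_{\lambda_1\ge n-r}V_\lambda$ is correct and is in fact more rigorous than what the paper supplies.

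There is, however, a genuine gap in your argument for the reverse inclusion. You assert that the polytabloid $E_T$ is fixed by $S_{[1,n]\setminus I}$, where $I$ is the set of entries outside the first row of $T$, because tabloids are row-invariant. This is false: $E_T=\sum_{\sigma\in Q_T}\mathrm{sign}(\sigma)\{\sigma(T)\}$, and once a column permutation $\sigma$ moves an element of $I$ into the first row of $\sigma(T)$, the group $S_{[1,n]\setminus I}$ is no longer contained in the row stabilizer of $\sigma(T)$. Concretely, for $n=3$, $\lambda=(2,1)$, $T$ with rows $[1\ 2]$ and $[3]$, one has $E_T=\{[1\,2];[3]\}-\{[2\,3];[1]\}$, and the transposition $(12)$ sends this to $\{[1\,2];[3]\}-\{[1\,3];[2]\}\neq E_T$.

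The clean repair is to drop the polytabloid argument entirely and use your branching computation for both directions. Since taking $S_{[1,n]\setminus I}$-invariants commutes with the isotypic decomposition, one has $U_r=\bigoplus_\lambda\sum_I V_\lambda^{S_{[1,n]\setminus I}}$. If $\lambda_1<n-r$ each summand vanishes; if $\lambda_1\ge n-r$ the sum $W:=\sum_I V_\lambda^{S_{[1,n]\setminus I}}$ is a nonzero $S_n$-subrepresentation of $V_\lambda$, and by complete reducibility any complement $W'$ would again be $\lambda$-isotypic and hence (by branching) possess nonzero $S_{[1,n]\setminus I}$-invariants, which are forced to lie in $W\cap W'=0$. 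Thus $W=V_\lambda$. This also absorbs the multiplicity subtlety you flagged at the end, since the argument does not assume $V_\lambda$ is irreducible.
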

Assuming this claim, for any $\tau_{\lambda}$, letting $I$ be the set of indices which are not in the top row of $\tau_{\lambda}$, for any $p \in W_{\tau_{\lambda}} \subseteq V_{\lambda}$ and any $J$ such that $|J| < |I|$, $\sum_{\sigma \in S_{[1,n] \setminus J}}{\sigma(p) = 0}\}$. Taking $J = I \setminus \{i\}$, Lemma \ref{averagingtozerolemma} follows.

To gain intuition for why this claim is true, consider the Young module $span{\{\{T\}: T \text{ has shape } \lambda\}}$ where $\lambda$ is a hook whose first row has length $n-r$ and whose remaining rows all have length $1$. Intuitively, this Young module corresponds to $U_r$. 

Now consider which $V_{\mu}$ are captured by this Young module and thus by $U_r$. By Young's rule (see p.56-57 of the textbook ``Representation Theory: A First Course'' by Fulton and Harris \cite{FH91}), this module is isomorphic to the direct sum of $K_{\mu\lambda}$ copies of the Specht module corresponding to each $\mu$. Here $K_{\mu\lambda}$ is the Kostka number which is nonzero if and only if $\mu \geq \lambda$ i.e. $\forall j, \sum_{i=1}^{j}{\mu_i} \geq \sum_{i=1}^{j}{\lambda_i}$. When $\lambda$ is a hook whose first row has length $n-r$ and whose remaining rows all have length $1$, this is precisely the shapes $\mu$ such that $\mu_1 \geq n-r$. Thus, we expect that 
\[
U_r = \oplus_{\lambda: \lambda_1 \geq n-r}{V_\lambda}
\]
and thus
\[
V_r = U_r/U_{r-1} = \oplus_{\lambda: \lambda_1 = n-r}{V_\lambda}
\]
\end{remark}
\section{Sum of squares lower bounds from symmetry and a good story}\label{goodstorysection}
In this section, we show how strategies for the verifier/adversary game described in subsection \ref{findingEsubsection} with certain properties, which we call good stories, imply SOS lower bounds.
\subsection{Stories}
In this subsection, we rigorously define what we mean by stories. Once the definition is understood, stories are generally recognizable on sight.
\begin{definition}
Given a subset $I$ of $[1,n]$, we define $\mathcal{P}_{I}$ to be the set of all polynomials which only depend on the variables $\{x_e:e \subseteq I\}$
\end{definition}
\begin{definition}[Stories]
Let $P$ be the problem we are anaylzing and let $A = (i_1,\dots,i_m)$ be a sequence of indices. We say that a strategy $S$ for adversary is a level $n'$ story for $(P,A)$, describing what will happen with the remaining indices after we have already queried $A$, if the following is true:
\begin{enumerate}
\item $n' \leq n - |I_A|$
\item $S$ specifies what happened with the indices in $A$. More precisely, there is a linear map $\tilde{E}_{S,A}: \mathcal{P}_{I_A} \to \mathbb{R}$ corresponding to $S$
\item For all $i \in [1,n] \setminus I_A$, $S$ gives values $\{p_{ij}\}$ for the probabilities of level $n'-1$ stories $S_{ij}$ for $(P,A \cup (i))$. 
\item We have that for all $i \in [1,n] \setminus I_A$, $\sum_{j}{p_{ij}} = 1$ and $\forall f \in \mathcal{P}_{I_A},\forall j, \tilde{E}_{S,A}[f] = \tilde{E}_{S_{ij},(A \cup (i))}[f]$
\end{enumerate}
Given a level $n'$ story $S$ for $(P,A)$, for all sequences $B$ such that $A \subseteq B$, letting $i$ be the next element in $B$ after $A$, we define $\tilde{E}_{S,B} = \sum_{j}{p_{ij}\tilde{E}_{S_{ij},B}}$
\end{definition}
\begin{remark}
Note that we do not require the values $p_{ij}$ to be non-negative in this definition.
\end{remark}
\begin{remark}
For all of our examples we will have that $n' = n - |I_A|$ but we do not force this to be the case in the definition.
\end{remark}
\subsection{Useful story properties part 1}
We now define several properties our stories may have which are useful for proving SOS lower bounds. In Section \ref{verifyingstoriessection} we will describe a method for verifying these properties.

The first property we want is that our story $S$ gives the same linear map $\tilde{E}_S$ regardless of the order we query the indices.
\begin{definition}
We say that a level $n'$ story $S$ for $(P,A)$ is self-consistent if whenever $B,B'$ are sequences such that $ A \subseteq B, A \subseteq B', |I_B \setminus I_A| \leq n', |I_B \setminus I_A| \leq n'$,
\[
\forall p \in \mathcal{P}_{I_B \cap I_{B'}}, \tilde{E}_{S,B}[p] = \tilde{E}_{S,B'}[p]
\]
If $S$ is self-consistent then we define $\tilde{E}_{S}:\{f:indexdeg_{[1,n] \setminus I_A}(f) \leq n'\} \to \mathbb{R}$ to be the linear map such that for all monomials $p$ such that $indexdeg_{[1,n] \setminus I_A}(p) \leq n'$, for any sequence $B$ of length at most $n'$ such that $I_B \cap I_A = \emptyset$ and $B$ contains all indices in variables of $p$ which are not in $I_A$, $\tilde{E}_{S}[p] = \tilde{E}_{S,(A \cup B)}[p]$
\end{definition}
A second property we want is that our story sounds like we are taking the expected values over the uniform distribution of permutations of a single input graph $G_0$. To make this precise, we note a useful property such expected values have. We then define single-graph mimics to be stories/pseudo-expectation values which also have this property.
\begin{proposition}
If $\Omega$ is the trivial distribution consisting of a single graph $G_0$ then for any polynomials $f$ and $g$, 
$E_{\Omega}[fg] = E_{\Omega}[f]E_{\Omega}[g]$
\end{proposition}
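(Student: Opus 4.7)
The plan is to observe that a trivial distribution is a point mass, so expectation under it is just evaluation at the single point $G_0$, and evaluation is a ring homomorphism on polynomials.

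Concretely, I would first unwind the definition of $E_{\Omega}$: since $\Omega$ is supported on the single graph $G_0$, for any polynomial $h$ in the variables $\{x_e : e \in E_P\}$ we have $E_{\Omega}[h] = h(G_0)$, where $h(G_0)$ means the evaluation of $h$ at the 0/1 assignment determined by $G_0$. This takes one line once the definitions are in hand.

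Next I would apply this to $h = fg$, $h = f$, and $h = g$ separately. The equality $E_{\Omega}[fg] = (fg)(G_0) = f(G_0) \cdot g(G_0) = E_{\Omega}[f] \cdot E_{\Omega}[g]$ then follows from the fact that polynomial evaluation at a fixed point is multiplicative, i.e.\ that the evaluation map $h \mapsto h(G_0)$ is a ring homomorphism from $\mathbb{R}[\{x_e\}]$ to $\mathbb{R}$.

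There is no real obstacle here; the proposition is essentially the statement that point-mass expectations are multiplicative, which holds because there is no randomness to decorrelate. The only thing worth being careful about is that the statement quantifies over arbitrary polynomials $f,g$ (without any index-degree bound), so I should make sure the proof uses nothing beyond the fact that $\Omega$ is a Dirac distribution and that $f,g$ are genuine polynomial functions of the variables rather than pseudo-expectation objects. This proposition is what motivates the subsequent definition of ``single-graph mimics,'' which will demand the same product rule of stories/pseudo-expectations that do not literally come from a distribution.
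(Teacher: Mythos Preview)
Your proof is correct and matches the paper's own argument essentially verbatim: the paper simply writes $E_{\Omega}[fg] = f(G_0)g(G_0) = E_{\Omega}[f]E_{\Omega}[g]$, which is exactly your observation that expectation under a point mass is evaluation at $G_0$ and evaluation is multiplicative.
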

\begin{proof}
$E_{\Omega}[fg] = f(G_0)g(G_0) = E_{\Omega}[f]E_{\Omega}[g]$
\end{proof}
\begin{proposition}
If $\Omega$ is the uniform distribution over all permutations of a single graph $G_0$ then for all symmetric polynomials $f$ and $g$, 
$E_{\Omega}[fg] = E_{\Omega}[f]E_{\Omega}[g]$
\end{proposition}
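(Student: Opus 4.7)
The plan is essentially to reduce this statement to the immediately preceding proposition by exploiting the symmetry assumption on $f$ and $g$. Concretely, I would first unpack $\Omega$ as the uniform measure on the orbit $\{\sigma(G_0) : \sigma \in S_n\}$ (with appropriate multiplicities, if the stabilizer of $G_0$ is nontrivial), so that
\[
E_{\Omega}[h] \;=\; \frac{1}{n!}\sum_{\sigma \in S_n} h(\sigma(G_0))
\]
for any polynomial $h$. This formula is the only concrete tool I need.

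Next I would use the hypothesis that $f$ is symmetric under permutations of $[1,n]$. By the convention established in the preliminaries, symmetry means exactly that for every $\sigma \in S_n$ the action on variables yields $f(\sigma(G_0)) = f(G_0)$; the same applies to $g$ and to the product $fg$ (which is automatically symmetric whenever $f$ and $g$ are). Plugging this into the averaging formula gives $E_{\Omega}[f] = f(G_0)$, $E_{\Omega}[g] = g(G_0)$, and $E_{\Omega}[fg] = f(G_0)g(G_0)$, all with no cross terms to track.

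Combining these three identities yields the desired equality $E_{\Omega}[fg] = f(G_0)g(G_0) = E_{\Omega}[f]\,E_{\Omega}[g]$. In effect, the symmetry assumption collapses the non-trivial orbit distribution back to the single-graph case of the preceding proposition, so there is no genuine obstacle; the only thing worth emphasizing is the appeal to symmetry as the step that converts the averaged expression into a single evaluation at $G_0$.
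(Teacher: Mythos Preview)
Your proposal is correct and follows exactly the paper's argument: use symmetry to deduce $h(\sigma(G_0)) = h(G_0)$ for any symmetric $h$, so $E_{\Omega}[h] = h(G_0)$, and then conclude $E_{\Omega}[fg] = f(G_0)g(G_0) = E_{\Omega}[f]E_{\Omega}[g]$. The only difference is that you spell out the averaging formula explicitly, which the paper leaves implicit.
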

\begin{proof}
For any symmetric polynomial $h$ and any permutation $\sigma$, $h(\sigma(G_0)) = h(G_0)$ which implies that $E_{\Omega}[h] = h(G_0)$. Thus, we again have that $E_{\Omega}[fg] = f(G_0)g(G_0) = E_{\Omega}[f]E_{\Omega}[g]$, as needed.
\end{proof}
\begin{remark}
The property that $E[fg] = E[f]E[g]$ for all symmetric polynomials $f,g$ is useful because it immediately  implies that for all symmetric polynomials $g$, $E[g^2] = (E[g])^2 \geq 0$.
\end{remark}
We now define single graph mimics.
\begin{definition}\label{mimicdefinition}
Let $P$ be a symmetric problem with equations $\{s_i = 0\}$ and let $I$ be a subset of $[1,n]$. We say that $\tilde{E}$ is a level $n'$ single graph mimic for $P$ on $[1,n] \setminus I$ if the following conditions hold:
\begin{enumerate}
\item $\tilde{E}: \{p:indexdeg_{[1,n]\setminus I}(p) \leq n'\} \to \mathbb{R}$ is a linear map which is symmetric under permutations of $[1,n] \setminus I$
\item For all $i$ and all polynomials $f$ such that $indexdeg_{[1,n]\setminus I}(f) + indexdeg_{[1,n]\setminus I}(s_i) \leq n'$, $\tilde{E}[fs_i] = 0$
\item For all polynomials $f,g$ which are symmetric under permutations of $[1,n]\setminus I$ such that $indexdeg_{[1,n]\setminus I}(f) + indexdeg_{[1,n]\setminus I}(g) \leq n'$, $\tilde{E}[fg] = \tilde{E}[f]\tilde{E}[g]$.
\end{enumerate}
We say that $S$ is a level $n'$ single-graph mimic for $(P,A)$ if $S$ is a self-consistent level $n'$ story for $(P,A)$ and $\tilde{E}_{S}$ is a level $n'$ single-graph mimic for $P$ on $[1,n] \setminus I_A$.
\end{definition}
A third property we want is that is that our story assigns non-negative probabilities to its substories as long as we don't query too many indices. If our story and all of its substories satisfy these three properties then we call it a good story.
\begin{definition}\label{goodstorydefinition}
We say that $S$ is a level $(r,n')$ good story for $(P,A)$ if the following conditions hold:
\begin{enumerate}
\item $S$ is a level $n'$ single graph mimic for $(P,A)$.
\item If $r > 0$ then for any $i \in [1,n] \setminus I_A$, the values $p_{ij}$ are non-negative and the stories $\{S_{ij}\}$ are all level $(r-1,n'-1)$ good stories for $(P,A \cup (i))$.
\end{enumerate}
\end{definition}
\subsection{SOS lower bounds from good stories}
We now prove that good stories imply SOS lower bounds.
\begin{theorem}\label{maintheorem}
Let $P$ be a symmetric problem with equations $\{s_i = 0\}$. If we have a level $(r,n')$ good story for $P$ then index degree $d = \min{\{2r,n'\}}$ SOS fails to refute the equations for $P$.
\end{theorem}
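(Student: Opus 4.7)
The plan is to take the pseudo-expectation values $\tilde{E} = \tilde{E}_S$ obtained from the level $(r, n')$ good story $S$ (viewed as a story for $(P, \emptyset)$) and verify all three conditions of Definition \ref{pseudoexpectationdefinition} at index degree $d = \min\{2r, n'\}$. Because $d \le n'$, $\tilde{E}$ is defined on every polynomial of index degree at most $d$. The normalization $\tilde{E}[1] = 1$ comes from the setup of the story, and $\tilde{E}[f s_i] = 0$ whenever $indexdeg(f) + indexdeg(s_i) \le d \le n'$ follows immediately from property 2 of the single graph mimic built into a good story. The real content is to show $\tilde{E}[g^2] \ge 0$ for every $g$ with $indexdeg(g) \le d/2$.

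For that, I would first use the symmetry of $\tilde{E}$ under $S_n$ together with Theorem \ref{squarereductiontheorem} to write
\[
\tilde{E}[g^2] \;=\; \sum_{I, j : |I| \le indexdeg(g)} \tilde{E}[g_{Ij}^2],
\]
so it suffices to show $\tilde{E}[g_{Ij}^2] \ge 0$ for each summand, where $g_{Ij}$ is symmetric under $S_{[1,n] \setminus I}$ and $indexdeg(g_{Ij}) \le indexdeg(g) \le d/2$. A short argument using the third conclusion of Theorem \ref{squarereductiontheorem} shows that any nonzero $g_{Ij}$ depends on every index of $I$: otherwise $g_{Ij}$ would already be symmetric under $S_{[1,n] \setminus (I \setminus \{i\})}$ for some $i \in I$, making the averaging sum equal $|S_{[1,n] \setminus (I \setminus \{i\})}| \cdot g_{Ij}$ and forcing $g_{Ij} = 0$. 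Hence $indexdeg_{[1,n] \setminus I}(g_{Ij}) = indexdeg(g_{Ij}) - |I|$.

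Now fix such an $(I, j)$ and sequentially query the $|I|$ indices of $I$ against $S$. Because $|I| \le indexdeg(g) \le d/2 \le r$, the recursive definition of a good story guarantees that at every step the probabilities are non-negative and the resulting substories are still good stories; after $|I|$ queries we obtain a convex decomposition $\tilde{E} = \sum_{\ell} p_\ell \tilde{E}_\ell$ with $p_\ell \ge 0$ and each $\tilde{E}_\ell$ a level $n' - |I|$ single graph mimic on $[1,n] \setminus I$. Since $g_{Ij}$ is symmetric under $S_{[1,n] \setminus I}$ and
\[
2\, indexdeg_{[1,n] \setminus I}(g_{Ij}) \;\le\; d - 2|I| \;\le\; n' - |I|
\]
(using $d \le n'$), the product property of each $\tilde{E}_\ell$ applies to give $\tilde{E}_\ell[g_{Ij}^2] = (\tilde{E}_\ell[g_{Ij}])^2 \ge 0$. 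Summing with the non-negative weights $p_\ell$ yields $\tilde{E}[g_{Ij}^2] \ge 0$, and hence $\tilde{E}[g^2] \ge 0$.

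The main obstacle, as I see it, is the delicate index-degree bookkeeping in the last step: after burning $|I|$ of the $n'$ available indices by querying, we still need room on the remaining indices to accommodate two copies of $g_{Ij}$ simultaneously, and this only works because $g_{Ij}$ depends on \emph{every} index of $I$, so that its index degree outside $I$ drops by exactly $|I|$ rather than just by at most $|I|$. The other ingredients (Theorem \ref{squarereductiontheorem} and the recursive structure of good stories) do most of the structural work, but without this tight accounting one would only get $d \le \tfrac{2}{3} n'$ rather than $d \le n'$.
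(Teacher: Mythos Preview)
Your proof follows the paper's approach exactly: invoke Theorem~\ref{squarereductiontheorem} to reduce to squares of $g_{Ij}$, argue that each $g_{Ij}$ uses up all of $I$ so that its index degree outside $I$ drops by $|I|$, query the indices of $I$ to write $\tilde{E}$ as a convex combination of single-graph mimics on $[1,n]\setminus I$, and then use the multiplicativity property to conclude $\tilde{E}[g_{Ij}^2]\ge 0$. The bookkeeping you highlight is precisely the paper's Lemma inside the proof of Theorem~\ref{maintheorem}.

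One wrinkle: your justification for the drop in index degree is not quite right as stated. You claim that if $g_{Ij}$ did not depend on some $i\in I$ then it would already be symmetric under $S_{[1,n]\setminus(I\setminus\{i\})}$, but this is false in general (e.g.\ $g=x_2+x_3$ with $I=\{1\}$, $n=3$ is symmetric in $\{2,3\}$ and does not involve $1$, yet it is not $S_3$-symmetric). What you actually need, and what the paper proves, is the \emph{monomial-level} statement: every monomial of $g_{Ij}$ involves every index of $I$. The argument goes monomial by monomial: if some monomial $p$ with nonzero coefficient omits $i$, then by the $S_{[1,n]\setminus I}$-symmetry its entire orbit has the same coefficient, and the averaging-to-zero condition over $S_{[1,n]\setminus(I\setminus\{i\})}$ then forces that coefficient to vanish. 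With this correction your inequality should read $indexdeg_{[1,n]\setminus I}(g_{Ij}) \le indexdeg(g_{Ij}) - |I|$ (inequality, not equality), which is all you need.
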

\begin{proof}
We need two components to prove this theorem. The first component is the following theorem which shows that if we have a good story then we satisfy all of the linear constraints on $\tilde{E}$ and we have that $\tilde{E}[g^2] \geq 0$ whenever $g$ is symmetric under permutations of all but a few indices.
\begin{theorem}\label{componentone}
Let $P$ be a symmetric graph problem with constraints $\{s_i = 0\}$ (where the $\{s_i\}$ are polynomials in the input variables). If we have a level $(r,n')$ good story $S$ for $P$ then the corresponding linear map $\tilde{E}_S: \{f:indexdeg(f) \leq n'\} \to \mathbb{R}$ satisfies the following properties
\begin{enumerate}
\item $\tilde{E}_S$ is symmetric under permutations of $[1,n]$
\item If $I \subseteq [1,n]$ is a subset of indices of size at most $r$ and $g$ is a polynomial which is symmetric under permutations of $[1,n] \setminus I$ such that $indexdeg_{[1,n] \setminus I}(g) \leq \frac{n'-|I|}{2}$ then $\tilde{E}_{S}[g^2] \geq 0$
\item For all $i$ and all $f$ such that $indexdeg(f) + indexdeg(s_i) \leq n'$, $\tilde{E}_{S}[fs_{i}] = 0$
\end{enumerate}
\end{theorem}
\begin{proof}
Since $S$ is a single graph mimic and single graph mimics are symmetric with respect to permutations of $[1,n]$, the first statement follows. Similarly, the third statement follows directly from condition 2 of Definition \ref{mimicdefinition}

For the second statement, by conditions 1 and 2 of Definition \ref{goodstorydefinition}, we can express $\tilde{E}_S$ as a probability distribution $\Omega$ over level $n-|I|$ single graph mimics $\tilde{E}_j$ for $P$ on $[1,n] \setminus I$. Since $g$ is symmetric under permutations of $[1,n] \setminus {I}$, for all of the $\tilde{E}_j$, $\tilde{E}_j[g^2] = \tilde{E}_j[g]\tilde{E}_j[g] \geq 0$. We now have that $\tilde{E}_{S}[g^2] = E_{E_j \sim \Omega}\left[\tilde{E}_j[g^2]\right] \geq 0$, as needed.
\end{proof}
The second component we need is Theorem \ref{squarereductiontheorem}, which shows that it is sufficient to verify that $\tilde{E}_S[g^2] \geq 0$ whenever $g$ is symmetric with respect to permutations of all but a few indices. which is exactly what is shown by Theorem \ref{componentone}.

With these components in hand, we now prove Theorem \ref{maintheorem}. 
We need to check the following:
\begin{enumerate}
\item Whenever $indexdeg(f) + indexdeg(s_i) \leq d = \min{\{2r,n'\}}$, $\tilde{E}_{S}[fs_i] = 0$.
\item Whenever $indexdeg(g) \leq \frac{d}{2} = \min{\{r,\frac{n'}{2}\}}$, $\tilde{E}_{S}[g^2] \geq 0$
\end{enumerate}
For the first statement, note that $indexdeg(f) + indexdeg(s_i) \leq n'$, so by Theorem \ref{componentone}, $\tilde{E}_S[fs_{i}] = 0$. For the second statement, given a polynomial $g$ of index degree at most $\frac{d}{2}$, by Theorem \ref{squarereductiontheorem} we can write
\[
\tilde{E}_{S}[g^2] = \sum_{I \subseteq [1,n],j:|I|\leq indexdeg(g)}{\tilde{E}_{S}[g^2_{Ij}]}
\]
where for all $I,j$, 
\[
\forall i \in I, \sum_{\sigma \in S_{[1,n] \setminus (I \setminus \{i\})}}{\sigma(g_{Ij})} = 0
\]
We now use the following lemma to upper bound $indexdeg_{[1,n] \setminus I}(g_{Ij})$:
\begin{lemma}
If $g_{Ij}$ is symmetric with respect to permutations of $[1,n] \setminus I$ and 
\[
\forall i \in I, \sum_{\sigma \in S_{[1,n] \setminus (I \setminus \{i\})}}{\sigma(g_{Ij})} = 0
\]
then all monomials in $g_{Ij}$ depend on all of the indices in $I$
\end{lemma}
\begin{proof}
Assume that there is an $i \in I$ and some monomial $p$ which does not depend on $i$ which has a nonzero coefficient in $g_{Ij}$. By symmetry, for all permutations $\sigma$ of $[1,n] \setminus I$, the coefficient of $\sigma(p)$ is the same as the coefficient of $p$. However, these are also the coefficients of $\sigma_2(p)$ for permutations $\sigma_2$ of $[1,n] \setminus (I \setminus \{i\})$. Since $\forall i \in I, \sum_{\sigma \in S_{[1,n] \setminus (I \setminus \{i\})}}{\sigma(g_{Ij})} = 0$, all of these coefficients must be $0$, which is a contradiction.
\end{proof}
This lemma implies that for all of the $g_{Ij}$, $indexdeg_{[1,n] \setminus I}(g_{Ij}) \leq \frac{n'}{2} - |I| \leq \frac{n' - |I|}{2}$. Thus, by Theorem \ref{componentone}, $\tilde{E}_{S}[g^2_{Ij}] \geq 0$. Since this holds for all $I,j$, $\tilde{E}_{S}[g^2] \geq 0$, as needed.
\end{proof}
\section{Verifying good stories}\label{verifyingstoriessection}
In this section, we describe a method to verify that a story $S$ is a good story. For this method, we make the following assumption.
\begin{definition}
We assume that the problem equations and $S$ depend on a set of parameters and we take $\alpha_1,\dots,\alpha_m$ to be these parameters. 
\end{definition}
\begin{remark}
For knapsack and the triangle problem, we have two parameters $n$ and $k$. For the MOD 2 principle we only have the parameter $n$.
\end{remark}
\subsection{Useful story properties part 2}
We now describe two additional properties our stories may have which are useful for verifying that they are good stories. Once the definitions are understood, these properties are generally recognizable on sight.

One property $S$ usually has is that the linear maps $\tilde{E}_{S,B}$ assign values to monomials which are rational functions of the parameters $\alpha_1,\dots,\alpha_m$.
\begin{definition}
We say that a level $n'$ story $S$ for $(P,A)$ is rational if the following conditions hold
\begin{enumerate}
\item For all $B$ such that $A \subseteq B$ and $|I_B \setminus I_A| \leq n'$, for all monomials $p$ such that $I(p) \subseteq I_B$, $\tilde{E}_{S,B}[p]$ is a rational function of the parameters $\alpha_1,\dots,\alpha_m$.
\item The rational functions $\{\tilde{E}_{S,B}[p]: A \subseteq B, |I_B \setminus I_A| \leq n', I(p) \subseteq I_B\}$ have a common denominator $q_S(\alpha_1,\dots,\alpha_m)$ and the degree of the numerator is bounded by a function of $n'$ and $indexdeg(p)$.
\end{enumerate} 
\end{definition}
\begin{remark}
If $S$ is symmetric under permutations of $[1,n] \setminus I_A$ then for a given $n'$ and $deg(p)$ there are only a finite number of $\tilde{E}_{S,B}[p]$ we need to consider, so the second condition is in fact redundant. We state this condition anyways to emphasize it, as we will be using it to verify that our stories are self-consistent and single graph mimics.
\end{remark}

A second property our stories may have is that there are many settings of the parameters $\alpha_1,\dots,\alpha_m$ for which $S$ and $\tilde{E}_S$ actually correspond to probabilities and expected values of the uniform distribution over permutations of a single input $G_0$.
\begin{definition} 
Let $S$ be a story for $(P,A)$
\begin{enumerate}
\item We say that $S$ is honest for $(\alpha_1,\dots,\alpha_m)$ if $S$ corresponds to what happens if we take the uniform distribution for all permutations of an actual input graph $G_0$ over $[1,n] \setminus I_A$ and $G_0$ satisfies the equations for $P$. Note that if this is the case then $S$ is automatically a single graph mimic for $(P,A)$ for the parameter values $(\alpha_1,\dots,\alpha_m)$ and $\tilde{E}_{S}[p] = E_{\sigma \in S_{[1,n] \setminus I_A}}[p(\sigma(G_0))]$
\item We say that $S$ is $z$-honest for $(\alpha_1,\dots,\alpha_{m-1})$ if there are at least $z$ values of $\alpha_m$ such that $S$ is honest for $(\alpha_1,\dots,\alpha_m)$.
\item For all $j \in [1,m-2]$, we say that $S$ is $z$-honest for $(\alpha_1,\dots,\alpha_{j})$ if there are at least $z$ values of $\alpha_{j+1}$ such that $S$ is $z$-honest for $(\alpha_1,\dots,\alpha_{j+1})$.
\item We say that $S$ is $z$-honest if there are at least $z$ values of $\alpha_{1}$ such that $S$ is $z$-honest for $(\alpha_1)$.
\end{enumerate}
\end{definition}
The intution is that it is difficult for SOS to determine whether the parameters take one of these values for which we actually have a solution or we are in between these values.

The following lemma is very useful
\begin{lemma}\label{interpolationlemma}
Let $S$ be a story which is $z$-honest. If $p(\alpha_1,\dots,\alpha_m)$ is a polynomial such that $deg(p) < z$ and $p(\alpha_1,\dots,\alpha_m) = 0$ whenever $S$ is honest for $(\alpha_1,\dots,\alpha_m)$ then $p(\alpha_1,\dots,\alpha_m) = 0$
\end{lemma}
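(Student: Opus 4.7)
The plan is to prove this by multivariate polynomial interpolation, induction on the number of free parameters. To get the induction to go through, I would first strengthen the statement to: for each $j \in [0,m-1]$ and each fixed tuple $(\alpha_1^*,\dots,\alpha_j^*)$ at which $S$ is $z$-honest, any polynomial $q(\alpha_{j+1},\dots,\alpha_m)$ of degree $<z$ that vanishes whenever $S$ is honest for $(\alpha_1^*,\dots,\alpha_j^*,\alpha_{j+1},\dots,\alpha_m)$ is identically zero. Taking $j=0$ recovers the lemma.

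The base case is $j=m-1$. Here the hypothesis that $S$ is $z$-honest for $(\alpha_1^*,\dots,\alpha_{m-1}^*)$ means, by definition, that there are at least $z$ values of $\alpha_m$ at which $S$ is honest for the full tuple. At each such value $q(\alpha_m)=0$, so the univariate polynomial $q$ of degree $<z$ has $\geq z$ roots and must be the zero polynomial.

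For the inductive step I assume the result for tuples of length $j+1$. Given $(\alpha_1^*,\dots,\alpha_j^*)$ at which $S$ is $z$-honest, the definition supplies $z$ values $\alpha_{j+1}^{(1)},\dots,\alpha_{j+1}^{(z)}$ at which $S$ is $z$-honest for $(\alpha_1^*,\dots,\alpha_j^*,\alpha_{j+1}^{(\ell)})$. The inductive hypothesis, applied at each $\ell$, then tells me that $q(\alpha_{j+1}^{(\ell)},\alpha_{j+2},\dots,\alpha_m)$ is the zero polynomial in $\alpha_{j+2},\dots,\alpha_m$. Writing $q=\sum_i c_i(\alpha_{j+2},\dots,\alpha_m)\,\alpha_{j+1}^i$ with $\deg c_i<z$, I would then fix an arbitrary point $(\alpha_{j+2}^*,\dots,\alpha_m^*)$ and note that the univariate polynomial $\sum_i c_i(\alpha_{j+2}^*,\dots,\alpha_m^*)\,\alpha_{j+1}^i$ has degree $<z$ and vanishes at the $z$ distinct points $\alpha_{j+1}^{(\ell)}$, so each coefficient $c_i(\alpha_{j+2}^*,\dots,\alpha_m^*)$ is zero. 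Since $(\alpha_{j+2}^*,\dots,\alpha_m^*)$ was arbitrary, each $c_i\equiv 0$ and hence $q\equiv 0$.

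There is no real obstacle here: the argument is a clean double application of the univariate fact that a polynomial of degree $<z$ with $z$ roots must vanish. The only bookkeeping to be careful about is the recursive definition of $z$-honesty, and in particular that at each step of the induction the $z$ supplied values of $\alpha_{j+1}$ give $z$-honesty (not merely honesty) for the longer tuple, which is exactly what allows the inductive hypothesis to be invoked at the next level.
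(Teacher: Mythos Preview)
Your proposal is correct and takes essentially the same approach as the paper: both argue by downward induction on the number $j$ of fixed parameters, using the recursive definition of $z$-honesty to peel off one variable at a time and reduce to the univariate fact that a degree-$<z$ polynomial with $z$ roots is zero. The only cosmetic difference is that the paper keeps the full polynomial $p$ throughout (viewing it as a polynomial in $\alpha_{j+1},\dots,\alpha_m$ with coefficients in $\alpha_1,\dots,\alpha_j$), whereas you explicitly restate the inductive hypothesis for a general polynomial $q$ in the trailing variables; your formulation is slightly more explicit but the content is identical.
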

\begin{proof}
We prove this lemma by induction. Assume that $p(\alpha_1,\dots,\alpha_m) = 0$ whenever $S$ is $z$-honest for $\alpha_1,\dots,\alpha_j$. 

Consider $p$ as a polynomial in the variables $\alpha_{j+1},\dots,\alpha_m$. Each monomial has a coefficient which is a polynomial $c(\alpha_1,\dots,\alpha_j)$ and we must have that $c(\alpha_1,\dots,\alpha_j) = 0$ whenever $S$ is $z$-honest for $\alpha_1,\dots,\alpha_j$. We now show that all of these coefficients $c(\alpha_1,\dots,\alpha_j)$ must be $0$ whenever $S$ is $z$-honest for $\alpha_1,\dots,\alpha_{j-1}$. To see this, consider such a polynomial $c(\alpha_1,\dots,\alpha_j)$ and assume that we have $\alpha_1,\dots,\alpha_{j-1}$ such that $S$ is $z$-honest for $\alpha_1,\dots,\alpha_{j-1}$. Considering $c$ as a polynomial in $\alpha_j$, $c(\alpha_j) = 0$ whenever $S$ is $z$-honest for $\alpha_1,\dots,\alpha_j$, which by definition happens for at least $z$ values of $\alpha_j$. Since $deg(c) < z$, we must have that $c(\alpha_1,\dots,\alpha_j) = c(\alpha_j) = 0$. Thus, $p(\alpha_1,\dots,\alpha_m) = 0$ whenever $S$ is $z$-honest for $\alpha_1,\dots,\alpha_{j-1}$, as needed.
\end{proof}
\subsection{Sufficient conditions for single graph mimics}
With these definitions, we can now give sufficient conditions for showing that a story $S$ is a single graph mimic.
\begin{lemma}\label{verifyingmimicslemma}
Let $S$ be a level $n'$ story for $(P,A)$. If $S$ and the parameter values $\alpha_1,\dots,\alpha_m$ satisfy the following conditions
\begin{enumerate}
\item $S$ is rational and symmetric with respect to permutations of $[1,n] \setminus I_A$.
\item For all $z > 0$, $S$ is $z$-honest.
\item Letting $q_S(\alpha_1,\dots,\alpha_m)$ be the common denominator for $\{\tilde{E}_{S,B}[p]: A \subseteq B, |I_B \setminus I_A| \leq n', I(p) \subseteq I_B\}$, $q_S(\alpha_1,\dots,\alpha_m) \neq 0$
\end{enumerate}
then for the parameter values $\alpha_1,\dots,\alpha_m$, $S$ is a level $n'$ single graph mimic for $(P,A)$.
\end{lemma}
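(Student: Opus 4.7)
The plan is to verify the three conditions of Definition \ref{mimicdefinition} (together with self-consistency, which is needed even to define $\tilde{E}_S$) by a uniform polynomial interpolation argument using Lemma \ref{interpolationlemma}. The key observation is that each property we need is an identity between rational functions of $\alpha_1,\dots,\alpha_m$; clearing denominators with $q_S$ (or $q_S^2$ in the multiplicative case) turns each such identity into a polynomial identity whose total degree is bounded by some function of $n'$ and the index degrees of the polynomials involved, thanks to the rationality condition. When $S$ is honest at a parameter point, $\tilde{E}_{S,B}[p]$ equals an honest expectation $E_{\sigma}[p(\sigma(G_0))]$, so each of the desired identities holds automatically; since $S$ is $z$-honest for every $z$, Lemma \ref{interpolationlemma} forces the cleared polynomial identity to hold everywhere, and dividing by $q_S(\alpha_1,\dots,\alpha_m) \neq 0$ gives the identity at our chosen parameter values.

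Concretely, I would proceed in four steps. First, I would show self-consistency: given two sequences $B,B'$ extending $A$ with $|I_B \setminus I_A|, |I_{B'} \setminus I_A| \leq n'$ and a monomial $p$ with $I(p) \subseteq I_B \cap I_{B'}$, consider $N(\alpha) := q_S(\alpha)\bigl(\tilde{E}_{S,B}[p] - \tilde{E}_{S,B'}[p]\bigr)$. This is a polynomial of bounded degree in the parameters, vanishing at every honest parameter point because honest expectations do not depend on query order; interpolation kills it, so $\tilde{E}_S[p]$ is well-defined. Second, symmetry of $\tilde{E}_S$ under permutations of $[1,n]\setminus I_A$ is immediate from hypothesis~1. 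Third, to show $\tilde{E}_S[fs_i]=0$ for $f$ of appropriate index degree, clear denominators to get a polynomial identity in the parameters that vanishes on every honest point (since $G_0$ satisfies the problem equations) and apply Lemma \ref{interpolationlemma}. Fourth, for the multiplicative property, take $f,g$ symmetric under permutations of $[1,n]\setminus I_A$ of bounded index degree and consider
\[
q_S(\alpha)^2\bigl(\tilde{E}_S[fg] - \tilde{E}_S[f]\,\tilde{E}_S[g]\bigr),
\]
which is again a polynomial of bounded degree. At any honest parameter point, the proposition preceding Definition \ref{mimicdefinition} (the one about uniform distributions over permutations of a single $G_0$ and symmetric polynomials) forces this expression to vanish, and interpolation finishes the job.

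The main obstacle will be bookkeeping the degrees so that Lemma \ref{interpolationlemma} truly applies: for each fixed identity, the cleared polynomial has some degree $D$ determined by $n'$, the index degrees of $p$ (or of $f,g,s_i$), and the structure of $q_S$, but because the hypothesis gives $z$-honesty for \emph{all} $z > 0$, we can always pick $z > D$ and conclude. The only other subtle point is that the collection of monomials and polynomials to which we apply this argument is, for each identity, finite (since the index degrees are bounded and $S$ is symmetric on $[1,n]\setminus I_A$), so a single common bound on $z$ suffices for each of the finitely many identities we need. Once all four steps are in place, combining them with the hypothesis $q_S(\alpha_1,\dots,\alpha_m)\neq 0$ gives all three conditions of Definition \ref{mimicdefinition} at the chosen parameter values, which is exactly what it means for $S$ to be a level $n'$ single graph mimic for $(P,A)$.
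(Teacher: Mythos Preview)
Your proposal is correct and follows essentially the same approach as the paper: clear denominators using $q_S$ (or $q_S^2$ for the multiplicative identity), observe that the resulting polynomial in the parameters vanishes at every honest point, and invoke Lemma~\ref{interpolationlemma} together with $q_S(\alpha_1,\dots,\alpha_m)\neq 0$. The paper additionally pauses to explain why the index-degree hypotheses $indexdeg_{[1,n]\setminus I_A}(f)+indexdeg_{[1,n]\setminus I_A}(s_i)\leq n'$ (and similarly for $f,g$) are needed for the cleared expression to actually be a rational function of the parameters when $s_i$ or $f,g$ are symmetric polynomials whose number of terms varies with $n$, a point your ``bookkeeping'' remark gestures at but could make more explicit.
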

\begin{proof}
We need to verify the following for the given values of $\alpha_1,\dots,\alpha_m$:
\begin{enumerate}
\item $S$ is self-consistent.
\item For all $i$ and all polynomials $f$ such that $indexdeg_{[1,n] \setminus I_A}(f) + indexdeg_{[1,n] \setminus I_A}(s_i) \leq n'$, $\tilde{E}_S[fs_i] = 0$
\item For any polynomials $f,g$ such that $f,g$ are symmetric under permutations of $[1,n] \setminus I_A$ and $indexdeg_{[1,n] \setminus I_A}(f) + indexdeg_{[1,n] \setminus I_A}(g) \leq n'$, $\tilde{E}_S[fg] = \tilde{E}_S[f]\tilde{E}_S[g]$.
\end{enumerate}
We first verify that $S$ is self-consistent for the given values of $\alpha_1,\dots,\alpha_m$. Let $p$ be a monomial and let $B,B'$ be sequences of indices such that $A \subseteq B$, $A \subseteq B'$, and $I(p) \subseteq I_B \cap I_{B'}$. Since $S$ is rational, $\tilde{E}_{S,B}[p] = \frac{p_1(\alpha_1,\dots,\alpha_m)}{q(\alpha_1,\dots,\alpha_m)}$ and $\tilde{E}_{S,B'}[p] =\frac{p_2(\alpha_1,\dots,\alpha_m)}{q(\alpha_1,\dots,\alpha_m)}$ are rational functions of the parameters $\alpha_1,\dots,\alpha_m$. Now note that whenever $S$ is honest for $(\alpha_1,\dots,\alpha_m)$, $\tilde{E}_{S,B'}[p] = \tilde{E}_{S,B}[p]$ which implies that \[
p_1(\alpha_1,\dots,\alpha_m)q_S(\alpha_1,\dots,\alpha_m) = p_2(\alpha_1,\dots,\alpha_m)q_S(\alpha_1,\dots,\alpha_m)
\] 

Since $S$ is $z$-honest for all $z > 0$, by Lemma \ref{interpolationlemma} we have that $p_{1}q_S = p_{2}q_S$ as polynomials in $\alpha_1,\dots,\alpha_m$. Plugging in our actual values of $\alpha_1,\dots,\alpha_m$, $q_S(\alpha_1,\dots,\alpha_m) \neq 0$ so $p_1(\alpha_1,\dots,\alpha_m) = p_2(\alpha_1,\dots,\alpha_m)$ and thus $\tilde{E}_{S,B'}[p] = \tilde{E}_{S,B}[p]$, as needed.

We can use similar ideas to prove the second and third statements but there is a subtle point we must be careful of. A problem equations $s_i$ may be a polynomial which is symmetric in $n \setminus I_A$ rather than being a fixed polynomial. In this case, $\tilde{E}_S[s_i]$ and $\tilde{E}_S[fs_i]$ will still be rational functions in the parameters $\alpha_1,\dots,\alpha_m$. However, the equality $\tilde{E}_S[fs_i] = \frac{p_{fs_i}(\alpha_1,\dots,\alpha_m)}{q_S(\alpha_1,\dots,\alpha_m)}$ may break down if 
\[
indexdeg_{[1,n] \setminus I_A}(f) + indexdeg_{[1,n] \setminus I_A}(s_i) > n'
\]
\begin{example}
If $f = {x_1}{x_2}$ and $s_i = \sum_{i=1}^{n}{x_i} - k$ then 
\[
f{s_i} = {x^2_1}{x_2} + {x_1}{x^2_2} + {x_1}{x_2}\sum_{i \in [1,n] \setminus \{1,2\}}{{x_i}} - k{x_1}{x_2}
\]
and by symmetry
\[
\tilde{E}_{S}[f{s_i}] = \tilde{E}_{S}[{x^2_1}{x_2}] + \tilde{E}_{S}[{x_1}{x^2_2}] + (n-2)\tilde{E}_{S}[{x_1}{x_2}{x_3}] - k\tilde{E}_{S}[{x_1}{x_2}]
\]
Thus, $f{s_i}$ generally has index degree $3$ and $\tilde{E}_{S}[fs_{i}] = \frac{p_{fs_i}(\alpha_1,\dots,\alpha_m)}{q_S(\alpha_1,\dots,\alpha_m)}$ is a rational function of the parameters $\alpha_1,\dots,\alpha_m$. However, if $n' = n = 2$ then we are missing the term ${x_1}{x_2}\sum_{i \in [1,n] \setminus \{1,2\}}{{x_i}}$ from $f{s_i}$ which may break the equality $\tilde{E}_S[fs_i] = \frac{p_{fs_i}(\alpha_1,\dots,\alpha_m)}{q_S(\alpha_1,\dots,\alpha_m)}$. Note that this problem will not occur as long as 
\[
indexdeg_{[1,n] \setminus I_A}(f) + indexdeg_{[1,n] \setminus I_A}(s_i) \leq n'
\]
\end{example}
With this point in mind, for the second statement, note that since $S$ is rational and $indexdeg(f) + indexdeg(s_i) \leq n'$, we can write $\tilde{E}_S[fs_i] = \frac{p_{fs_i}(\alpha_1,\dots,\alpha_m)}{q(\alpha_1,\dots,\alpha_m)}$. Now observe that $\tilde{E}[fs_i] = 0$ whenever $\tilde{E}$ is honest for $(\alpha_1,\dots,\alpha_m)$ and thus $p_{fs_i}(\alpha_1,\dots,\alpha_m) = 0$ whenever $S$ is honest for $(\alpha_1,\dots,\alpha_m)$. Since $S$ is $z$-honest for all $z > 0$, by Lemma \ref{interpolationlemma}, $p_{fs_i}(\alpha_1,\dots,\alpha_m) = 0$ as a polynomial. Plugging in the given values of $\alpha_1,\dots,\alpha_m$, $q(\alpha_1,\dots,\alpha_m) \neq 0$ so $\tilde{E}_S[fs_i] = \frac{p_{fs_i}(\alpha_1,\dots,\alpha_m)}{q(\alpha_1,\dots,\alpha_m)} = 0$, as needed.

Similarly, for the third statement we want to view $f$, $g$, and $fg$ as polynomials which depend on $n$ rather than being fixed polynomials. Still, since $S$ is rational and $indexdeg(f) + indexdeg(g) \leq n'$, we can write $\tilde{E}_{S}[f] = \frac{p_f(\alpha_1,\dots,\alpha_m)}{q(\alpha_1,\dots,\alpha_m)}$, $\tilde{E}_{S}[g] = \frac{p_g(\alpha_1,\dots,\alpha_m)}{q(\alpha_1,\dots,\alpha_m)}$, and $\tilde{E}_{S}[fg] = \frac{p_{fg}(\alpha_1,\dots,\alpha_m)}{q(\alpha_1,\dots,\alpha_m)}$.
Now observe that $\tilde{E}_{S}[fg] = \tilde{E}_{S}[f]\tilde{E}_{S}[g]$ whenever $S$ is honest for $(\alpha_1,\dots,\alpha_m)$ and thus 
\[
{p_f}(\alpha_1,\dots,\alpha_m){p_g}(\alpha_1,\dots,\alpha_m) - q(\alpha_1,\dots,\alpha_m)p_{fg}(\alpha_1,\dots,\alpha_m) = 0
\] whenever $S$ is honest for $(\alpha_1,\dots,\alpha_m)$. Since $S$ is $z$-honest for all $z$, by Lemma \ref{interpolationlemma}, ${p_f}{p_g} - qp_{fg} = 0$ as a polynomial.
Plugging in the given parameters $\alpha_1,\dots,\alpha_m$, $q(\alpha_1,\dots,\alpha_m) \neq 0$ so 
\[
\tilde{E}_S[fg] = \frac{p_{fg}(\alpha_1,\dots,\alpha_m)}{q(\alpha_1,\dots,\alpha_m)} = \frac{p_f(\alpha_1,\dots,\alpha_m)p_g(\alpha_1,\dots,\alpha_m)}{(q(\alpha_1,\dots,\alpha_m))^2} = 
\tilde{E}_S[f]\tilde{E}_S[g]
\]
\end{proof}
\subsection{Verifying good stories}
We are now ready to give sufficient conditions for a story to be a good story.
\begin{theorem}\label{verifyinggoodstorytheorem}
If $S$ is a story for $(P,A)$ such that
\begin{enumerate}
\item $S$ is symmetric with respect to permutations of $[1,n] \setminus I_A$
\item $S$ is rational
\item For all $z > 0$, $S$ is $z$-honest.
\end{enumerate}
then for a given choice of parameters $\alpha_1,\dots,\alpha_m$, if $n'$ and $r$ are numbers such that $n' \leq n - |I_A|$ and
\begin{enumerate}
\item If we consider up to $r$ further indices, the probabilities $p_{ij}$ are always non-negative.
\item If we consider up to $n'$ further indices, we may get negative values for some $p_{ij}$ but these values are always well-defined (i.e. the denominator is nonzero).
\end{enumerate}
then $S$ is a level $(n',r)$ good story for $(P,A)$.
\end{theorem}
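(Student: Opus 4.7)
The plan is to induct on $r$, with Lemma \ref{verifyingmimicslemma} doing the single-graph-mimic work at each level. Unpacking Definition \ref{goodstorydefinition}, being a level $(r,n')$ good story for $(P,A)$ amounts to two things: being a level $n'$ single graph mimic for $(P,A)$, and, when $r > 0$, producing non-negative probabilities $p_{ij}$ for the next queried index together with substories $S_{ij}$ that are themselves level $(r-1,n'-1)$ good stories for $(P, A \cup (i))$. The base case $r = 0$ is therefore exactly the single-graph-mimic claim, and the inductive step just needs to propagate the hypotheses of the theorem from $S$ down to each $S_{ij}$.

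To handle the single-graph-mimic part I would apply Lemma \ref{verifyingmimicslemma}. Its symmetry, rationality, and $z$-honesty hypotheses are precisely the three standing hypotheses of the present theorem, so the only additional check is that the common denominator $q_S(\alpha_1,\dots,\alpha_m)$ is nonzero at the given parameter values. This is where the well-definedness condition enters: up to $n'$ further queries, every $p_{ij}$ in the resulting tree of substories is well-defined, and since each $\tilde{E}_{S,B}[p]$ with $|I_B \setminus I_A| \leq n'$ is a combinatorial sum of products of these probabilities, we can take $q_S$ to be a product of their individual denominators. Non-vanishing of each factor at the chosen parameter values then propagates to $q_S$.

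For the inductive step, assume $r > 0$. Non-negativity of the $p_{ij}$ for the next query is directly assumed, so it suffices to show that each $S_{ij}$ satisfies the three standing hypotheses of the theorem and the numerical conditions at level $(r-1, n'-1)$, after which the inductive hypothesis applies. Symmetry of $S_{ij}$ under permutations of $[1,n] \setminus (I_A \cup \{i\})$ is inherited from symmetry of $S$ under permutations of $[1,n] \setminus I_A$, since any such permutation fixes $i$ and acts on the substories without mixing different orbits of outcomes $j$. Rationality is inherited because each $\tilde{E}_{S_{ij},B}[p]$ is defined by the same sum-of-products-of-probabilities construction as the corresponding $\tilde{E}_{S,B}[p]$, merely starting one layer deeper. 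Honesty is inherited because if $S$ corresponds to the uniform distribution over permutations of a single graph $G_0$, then conditioning on the outcome at index $i$ being $j$ produces the uniform distribution over permutations of a fixed restriction $G_{0,ij}$ of $G_0$ to $[1,n] \setminus \{i\}$; hence $z$-honesty of $S$ for every $z$ implies $z$-honesty of each $S_{ij}$. The numerical conditions at level $(r-1, n'-1)$ transfer because any query at position $\leq r-1$ (respectively $\leq n'-1$) after $A \cup (i)$ is a query at position $\leq r$ (respectively $\leq n'$) after $A$, so it is governed by the corresponding hypothesis on $S$.

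The step I expect to require the most care is the honesty inheritance: one must verify that conditioning the uniform distribution over permutations of $G_0$ on a specific value $j$ of the outcome at the queried index $i$ really does yield a uniform distribution over permutations of a fixed smaller graph, rather than some more complicated conditional distribution. Once this is in hand, everything else is a formal propagation through the layered structure of stories, with Lemma \ref{verifyingmimicslemma} supplying the single-graph-mimic conclusion at every step of the induction.
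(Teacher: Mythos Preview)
Your proposal is correct and follows essentially the same approach as the paper: induct on $r$, apply Lemma \ref{verifyingmimicslemma} at each level for the single-graph-mimic property, and propagate symmetry, rationality, $z$-honesty, and the numerical conditions from $S$ to each substory $S_{ij}$. Your treatment is in fact slightly more thorough than the paper's in two places---you spell out why the well-definedness hypothesis on the $p_{ij}$ gives $q_S(\alpha_1,\dots,\alpha_m)\neq 0$, and you flag the honesty inheritance as the step needing care---whereas the paper simply asserts that $S_{ij}$ is honest for $(\alpha_1,\dots,\alpha_m)$ whenever $S$ is.
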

\begin{proof}
Since we can consider up to $n'$ further indices and get well-defined values for the $p_{ij}$, $S$ is a level $n'$ story for $(P,A)$. Now by Lemma \ref{verifyingmimicslemma}, $S$ is a level $n'$ single graph mimic for $(P,A)$.

We now prove the theorem by induction on $r$. The base case $r = 0$ is trivial. If $r > 0$ then for all $i \in [1,n] \setminus I_A$, $S$ gives non-negative values $\{p_{ij}\}$ for the probabilities of level $n'-1$ stories $S_{ij}$ for $(P,A \cup (i))$. Now note that for each of these $S_{ij}$, the values of subsequent $p_{ij}$ will always be non-negative if we consider up to $r-1$ further indices and will be well-defined if we consider up to $n'-1$ further indices. Moreover, $S_{ij}$ is symmetric with respect to permutations of $[1,n] \setminus (I_A \cup \{i\})$, rational, and is $z$-honest because $S_{ij}$ is honest for $(\alpha_1,\dots,\alpha_m)$ whenever $S$ is honest for $(\alpha_1,\dots,\alpha_m)$. Thus, by the inductive hypothesis, each $S_{ij}$ is a level $(r-1,n'-1)$ good story for $(P,A \cup (i))$ so $S$ is a level $(r,n')$ good story for $(P,A)$, as needed.
\end{proof}
\subsection{Good stories for knapsack, the MOD 2 principle, and the triangle problem}\label{confirmingproblemstoriessubsection}
In this subsection, we apply Theorem \ref{verifyinggoodstorytheorem} to verify that our stories for knapsack, the MOD 2 principle, and the triangle problem are good stories.
\begin{theorem} \ 
\begin{enumerate}
\item Saying that we take $k$ out $n$ elements is a level $(\lfloor\min{\{k,n-k\}}\rfloor + 1,n)$ good story for the knapsack problem.
\item Saying that we every vertex is incident with precisely one edge is a level $(\lfloor{\frac{n}{2}}\rfloor + 1,n)$ good story for the MOD 2 principle.
\item Saying that we have $k$ independent sets of size $\frac{n}{k}$ is a level $(\lfloor\min{\{k,\frac{n}{k}\}}\rfloor + 1,n)$ good story for the triangle problem.
\end{enumerate}
\end{theorem}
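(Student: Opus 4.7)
The plan is to apply Theorem~\ref{verifyinggoodstorytheorem} with $A = \emptyset$ (so $I_A = \emptyset$ and $n - |I_A| = n$) to each of the three stories. The theorem requires: symmetry of $S$ under permutations of $[1,n]$, rationality, $z$-honesty for every $z > 0$, non-negativity of the probabilities $p_{ij}$ through depth $r$, and well-definedness of the $p_{ij}$ through depth $n'$ at the actual parameter value.

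Symmetry and rationality I would verify by inspection of each story. In every case the adversary's probabilities depend only on the combinatorial type of the partial configuration (the number of ``taken'' items in knapsack, the partial-matching structure in MOD~2, and the multiset of committed independent-set sizes in the triangle problem), so $S$ is invariant under permutations of $[1,n]$. Each one-step probability is a ratio of affine polynomials in the parameters, and composing over up to $n$ query levels produces a common denominator of the form $\prod_{j=0}^{n-1}(n-j)$ (knapsack, triangle) or a product of factors $n-(2j-1)$ (MOD~2), so all the $\tilde{E}_{S,B}[p]$ live over a fixed denominator polynomial.

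For $z$-honesty I would exhibit, for each $z$, enough integer parameter settings at which $S$ corresponds to the uniform distribution over permutations of an actual solution. These ``honest'' points are exactly those named in the corollary sketch: integer pairs $(n,k)$ with $0 \leq k \leq n$ for knapsack; even $n$ for MOD~2; and positive integer pairs $(n,k)$ with $k \mid n$ for the triangle problem. Unwinding the iterated definition of $z$-honesty: for knapsack, any $z$ integers $n \geq z-1$ each admit at least $z$ integer values $k \in [0,n]$; for MOD~2, one takes $n \in \{2,4,\ldots,2z\}$; for the triangle problem, one takes $z$ integers $n$ each with at least $z$ positive divisors (e.g.\ $n = cN$ for $c = 1,\ldots,z$ with $N$ a single sufficiently composite integer, since any multiple of $N$ has at least as many divisors as $N$).

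Finally, for the non-negativity and well-definedness thresholds I would write each story's probability formula explicitly. For knapsack, after $m$ queries with $t$ taken items the next probabilities are $(k-t)/(n-m)$ and $(n-k-(m-t))/(n-m)$; the denominators are nonzero for $m < n$, and both numerators are non-negative for every reachable $t \in [0,m]$ precisely when $m \leq \min\{k,n-k\}$, which under $k \notin \mathbb{Z}$ translates to $r-1 \leq \lfloor\min\{k,n-k\}\rfloor$. The MOD~2 calculation is analogous and yields $r-1 \leq \lfloor n/2 \rfloor$. For the triangle problem, after $m$ queries producing committed independent sets of sizes $s_1,\ldots,s_j$, the next probabilities are $(n/k - s_i)/(n-m)$ and $(k-j)(n/k)/(n-m)$; these are well-defined for $m < n$, and non-negativity fails only when either some $s_i$ exceeds $n/k$ or $j$ exceeds $k$, which (taking the worst case of either all queries in one set or all in distinct sets) gives $r-1 \leq \lfloor\min\{k,n/k\}\rfloor$. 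The main obstacle is the two-statistic bookkeeping for the triangle problem, where two separate extremal configurations drive the non-negativity threshold; one must check that both obstructions are simultaneously controlled by $\lfloor\min\{k,n/k\}\rfloor + 1$ queries, after which Theorem~\ref{verifyinggoodstorytheorem} immediately supplies the claimed good-story levels.
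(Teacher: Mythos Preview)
Your proposal is correct and follows essentially the same route as the paper's proof: both apply Theorem~\ref{verifyinggoodstorytheorem} with $A=\emptyset$, verify symmetry and rationality by inspection, exhibit enough honest parameter settings for $z$-honesty, and then read off $r$ and $n'$ from the explicit one-step probabilities. Your choices differ only cosmetically---for the triangle problem you take $n$ to be multiples of a highly composite integer while the paper takes $n=a!$, and you track the MOD~2 denominator as $\prod_j(n-(2j-1))$ rather than the coarser $\prod_j(n-j)$---but the logic is identical.
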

\begin{proof}
For knapsack and the triangle problem, we take $\alpha_1 = n$ and $\alpha_2 = k$. For the MOD 2 principle, we just take $\alpha_1 = n$.

Our stories are clearly rational and symmetric with respect to permutations of $[1,n]$. We now check that they are $z$-honest for all $z$.

For knapsack, note that our story is honest for $(n,k)$ whenever $k$ is an integer between $0$ and $n$. Thus, whenever $n \geq z$ there are at least $z$ values of $k$ such that our story is honest for $(n,k)$, which implies that our story is $z$-honest for $(n)$ whenever $n \geq z$. For all $z$ there are infinitely many valules of $n$ such that $n \geq z$ so our story is $z$-honest for all $z$, as needed.

For the MOD 2 principle, note that our story is honest for $(n)$ whenever $n$ is an even integer. There are infinitely many even integers so our story is $z$-honest for all $z$, as needed.

For the triangle problem, note that ourstory is honest for $(n,k)$ whenever $k$ is an integer and $n$ is divisible by $k$. Thus, whenever $n = a!$ and $a \geq z$ then there are at least $z$ values of $k$ such that our description is honest for $(n,k)$, which implies that our story is $z$-honest for $(n)$ whenever $n = a!$ and $a \geq z$. For all $z$ there are infinitely many valules of $n$ such that $n = a!$ where $a \geq z$ so our story is $z$-honest for all $z$, as needed.

All that we have to do now is to determine $n'$ and $r$.

For knapsack, when we consider polynomials of index degree at most $n'$, the common denominator will be $n(n-1)\dots(n - n' + 1)$ as we are choosing $n'$ elements one by one from $[1,n]$. This is well-defined as long as $n' \leq n$ so we may take $n' = n$. The probabilities will be non-negative up to the $(\lfloor\min{\{k,n-k\}}\rfloor + 1)$-th index we consider, so we may take $r = \lfloor\min{\{k,n-k\}}\rfloor + 1$.

For the MOD 2 principle, when we consider polynomials of index degree at most $n'$, the common denominator will be $n(n-1)\dots(n - n' + 1)$ as we are choosing $n'$ elements one by one from $[1,n]$. This is well-defined as long as $n' \leq n$ so we may take $n' = n$. The probabilities will be non-negative up to the $(\lfloor{\frac{n}{2}}\rfloor + 1)$-th index we consider, so we may take $r = \lfloor{\frac{n}{2}}\rfloor + 1$.

For the triangle problem, when we consider polynomials of index degree at most $n'$, the common denominator will be $k^{n'}n(n-1)\dots(n - n' + 1)$. The additional $k^{n'}$ factor appears because there are $\frac{n}{k}$ choices for the first element in an independent set of size $\frac{n}{k}$, $\frac{n-k}{k}$ choices for the second element, etc. Again, this is well-defined as long as $n' \leq n$ so we may take $n' = n$. The probabilities will be non-negative up to the $(\lfloor\min{\{k,\frac{n}{k}\}}\rfloor + 1)$-th index we consider, so we may take $r = \lfloor\min{\{k,\frac{n}{k}\}}\rfloor + 1$
\end{proof}
\begin{corollary} \ 
\begin{enumerate}
\item For all positive integers $n$ and all non-integer $k \in [0,n]$, index degree $\min\{2\lfloor\min{\{k,n-k\}}\rfloor + 2,n\}$ SOS fails to refute the knapsack equations.
\item For all odd $n$, index degree $n$ SOS fails to refute the equations for the MOD 2 principle.
\item For all $n \geq 6$, and all $k \in [1,n]$ such that $k \notin \mathbb{Z}$ or $\frac{n}{k} \notin \mathbb{Z}$, index degree $2\lfloor\min{\{k,\frac{n}{k}\}}\rfloor + 2$ SOS fails to refute the claim that Goodman's bound can be achieved for the triangle problem.
\end{enumerate}
\end{corollary}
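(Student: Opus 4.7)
The plan is to combine the previous theorem of this subsection, which certifies that the three natural stories are good stories with the stated levels, with Theorem \ref{maintheorem}, which converts a level $(r,n')$ good story into an index degree $\min\{2r,n'\}$ SOS lower bound. In all three cases we take $A$ to be the empty sequence, so $I_A=\emptyset$ and the ``further indices'' that the Theorem \ref{verifyinggoodstorytheorem} mechanism queries are all of $[1,n]$.

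First I would handle knapsack. The previous theorem tells us that for every positive integer $n$ and every non-integer $k\in[0,n]$ the knapsack story is a level $(\lfloor\min\{k,n-k\}\rfloor+1,\,n)$ good story. Applying Theorem \ref{maintheorem} with $r=\lfloor\min\{k,n-k\}\rfloor+1$ and $n'=n$ gives index degree
\[
d=\min\{2r,n'\}=\min\bigl\{2\lfloor\min\{k,n-k\}\rfloor+2,\,n\bigr\}
\]
SOS failing to refute the knapsack equations, which is exactly statement (1). The non-integrality of $k$ is what makes the system infeasible, so the claim is nontrivial precisely in this range.

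Next I would handle the MOD 2 principle. The previous theorem says the MOD 2 story is a level $(\lfloor n/2\rfloor+1,\,n)$ good story. When $n$ is odd, $\lfloor n/2\rfloor=(n-1)/2$, so Theorem \ref{maintheorem} gives the SOS failure threshold
\[
\min\{2\lfloor n/2\rfloor+2,\,n\}=\min\{n+1,\,n\}=n,
\]
which is statement (2). The triangle problem is essentially the same: the previous theorem yields a level $(\lfloor\min\{k,n/k\}\rfloor+1,\,n)$ good story, and Theorem \ref{maintheorem} gives index degree $\min\{2\lfloor\min\{k,n/k\}\rfloor+2,\,n\}$ SOS failing. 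The only thing left to check is that the hypothesis $n\ge 6$ lets us drop the outer $\min$ with $n$; since $\min\{k,n/k\}\le\sqrt{n}$, one verifies $2\lfloor\sqrt{n}\rfloor+2\le n$ for all $n\ge 6$ by a short direct check (the borderline cases $n\in\{6,7,8,9\}$ are immediate and monotonicity handles the rest), so the bound simplifies to $2\lfloor\min\{k,n/k\}\rfloor+2$ as claimed.

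No step here is a real obstacle since all the heavy lifting has been done by the previous theorem and Theorem \ref{maintheorem}; the corollary is just a bookkeeping simplification of $\min\{2r,n'\}$ in the three explicit parameter regimes, with the mild arithmetic check for $n\ge 6$ in the triangle case being the only point requiring any attention.
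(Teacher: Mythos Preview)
Your proposal is correct and matches the paper's (implicit) argument: the corollary is stated without proof in the paper, and the intended derivation is exactly to plug the levels $(r,n')$ from the preceding theorem into Theorem~\ref{maintheorem} and simplify $\min\{2r,n'\}$. Your additional arithmetic verification that $2\lfloor\sqrt{n}\rfloor+2\le n$ for $n\ge 6$, justifying the removal of the outer $\min$ in the triangle case, is a detail the paper leaves to the reader.
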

\section{Discussion of the triangle problem}\label{triangleproblemsection}
In this section, we discuss the triangle problem. We first show how SOS captures Goodman's bound. While this proof is not new, it explains why having $k$ independent sets of size $\frac{n}{k}$ is optimal. We then describe the true answer to the triangle problem found by Razborov, why this particular integrality gap is noteworthy, and how the SOS lower bound generalizes when we consider larger cliques instead of triangles.

\subsection{Proof of Goodman's bound}\label{Goodmanproofsubsection}
We recall Goodman's bound \cite{Goo59} below.
\begin{theorem}\label{basictriangletheorem}
The minimal number of triangles in a graph $G$ with $n$ vertices and edge density $\rho$ is at least $t(n,\rho) := \binom{n}{3} - \frac{n(n-1)(1 - \rho)}{6}((1+2\rho)n - 2 - 2\rho)$.
\end{theorem}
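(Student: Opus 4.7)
The plan is to follow the classical double-counting argument of Goodman. Let $m = \rho \binom{n}{2}$ denote the number of edges of $G$ and let $d_v$ denote the degree of each vertex $v$. First I would observe that if $T$ is the number of triangles of $G$, then
\[
3T \;=\; \sum_{uv \in E(G)} |N(u) \cap N(v)|,
\]
where $N(\cdot)$ denotes the open neighborhood, since each triangle $\{u,v,w\}$ contributes one to the summand for each of its three edges.

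Next, for each edge $uv$ I would apply the inclusion-exclusion lower bound $|N(u) \cap N(v)| \ge d_u + d_v - n$, which follows from $N(u) \cup N(v) \subseteq V(G)$ and remains a valid (possibly trivially negative) lower bound in general. Summing over edges and swapping the order of summation gives $\sum_{uv \in E}(d_u + d_v) = \sum_v d_v^2$, since each vertex $v$ appears as an endpoint of exactly $d_v$ edges. Hence
\[
3T \;\ge\; \sum_v d_v^2 - nm.
\]
By Cauchy-Schwarz (or the power mean inequality), $\sum_v d_v^2 \ge (\sum_v d_v)^2 / n = 4m^2/n$, which yields
\[
3T \;\ge\; \frac{4m^2}{n} - nm.
\]

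The final step is routine algebra: substituting $m = \rho n(n-1)/2$ into the right-hand side and simplifying gives $T \ge \frac{n(n-1)\rho\,(2\rho(n-1) - n)}{6}$, and one verifies by expanding the product in $t(n,\rho)$ that the stated expression also reduces to $\frac{n(n-1)\rho\,(2\rho(n-1) - n)}{6}$. The only real obstacle is this algebraic identification, since the statement presents the bound in an unfactored form involving $(1-\rho)((1+2\rho)n - 2 - 2\rho)$; the identity requires a careful expansion of $(1-\rho)(1+2\rho) = 1 + \rho - 2\rho^2$ and cancellation against the $\binom{n}{3}$ term. As a consistency check, equality is attained simultaneously in the intersection bound (when $N(u) \cup N(v) = V$ for every edge $uv$) and in Cauchy-Schwarz (when $G$ is regular), which is precisely the balanced complete multipartite graph $K_{n/k,\dots,n/k}$ flagged in the surrounding discussion as the unique extremal configuration.
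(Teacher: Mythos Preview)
Your argument is correct, and the algebraic identification you worry about does go through: expanding the paper's expression gives exactly $\frac{n(n-1)\rho\bigl(2\rho(n-1)-n\bigr)}{6}$, matching your bound.

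The paper takes a different route, however. Rather than counting common neighbours over edges, it works entirely on the complement side: it decomposes the $\binom{n}{3}$ triples by their number of edges ($N_{3,0},N_{3,1},N_{3,2},N_{3,3}$), derives the identity
\[
N_{3,3} = \binom{n}{3} - (n-2)(1-\rho)\binom{n}{2} + \tfrac{2}{3}\sum_{i}\sum_{j<k}(1-x_{ij})(1-x_{ik}) + \tfrac{1}{3}N_{3,1},
\]
and then lower-bounds the two slack terms separately: $N_{3,1}\ge 0$ trivially, and the ``anti-cherry'' sum by an explicit sum-of-squares $\sum_i\bigl(\sum_{j\neq i}(1-x_{ij}) - (1-\rho)(n-1)\bigr)^2 \ge 0$. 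In fact the two proofs are dual: your slack in the inclusion-exclusion step $|N(u)\cap N(v)|\ge d_u+d_v-n$, summed over edges, is exactly the paper's $N_{3,1}$, and your Cauchy--Schwarz on $\sum d_v^2$ is the degree-side analogue of the paper's sum-of-squares on anti-degrees. What the paper's version buys is that every inequality is visibly a polynomial identity plus a sum of squares in the edge variables $x_{ij}$, which is precisely what is needed for the surrounding discussion of SOS-provability; your version is shorter and is the standard textbook presentation of Goodman's bound.
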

\begin{proof}
For the analysis, we consider induced subgraphs of $G$ with three vertices. The following definitions are helpful.
\begin{definition} \ 
\begin{enumerate}
\item We define $N_{3,3} = \sum_{i,j,k \in [1,n]: i<j<k}{x_{ij}x_{ik}x_{jk}}$ to be the number of triangles of $G$.
\item We define $N_{3,2} = \sum_{i,j,k \in [1,n]: i<j<k}{\left(x_{ij}x_{ik}(1 - x_{jk}) + x_{ij}(1 - x_{ik})x_{jk} + (1 - x_{ij})x_{ik}x_{jk}\right)}$ to be the number of induced subgraphs of $G$ with 3 vertices which have $2$ edges.
\item We define 
\[N_{3,1} = \sum_{i,j,k \in [1,n]: i<j<k}{\left(x_{ij}(1 - x_{ik})(1 - x_{jk}) + (1 - x_{ij})x_{ik}(1 - x_{jk}) + (1 - x_{ij})(1 - x_{ik})x_{jk} \right)}
\] 
to be the number of induced subgraphs of $G$ with 3 vertices which have $1$ edge.
\item We define  $N_{3,0} = \sum_{i,j,k \in [1,n]: i<j<k}{(1 - x_{ij})(1 - x_{ik})(1 - x_{jk})}$ to be the number of induced subgraphs of $G$ with 3 vertices which have $0$ edges.
\end{enumerate}
\end{definition}
\begin{remark}
We write out the equations for these expressions to emphasize that SOS, which works with equations, captures the arguments here
\end{remark}
\begin{proposition}
\begin{align*}
&N_{3,3} + N_{3,2} + N_{3,1} + N_{3,0} = \\
&\sum_{i,j,k \in [1,n]: i<j<k}{\left(x_{ij} + (1 - x_{ij})\right)\left(x_{ik} + (1 - x_{ik})\right)\left(x_{jk} + (1 - x_{jk})\right)} = \binom{n}{3}
\end{align*}
\end{proposition}
\begin{proof}
This proposition is just saying that every induced subgraph on $3$ vertices has either $0$, $1$, $2$, or $3$ edges.
\end{proof}
\begin{lemma}
\[
\sum_{i,j \in [1,n]: i < j}{\sum_{k \notin (i,j)}{(1 - x_{ij})}} = (n-2)\sum_{i,j \in [1,n]: i < j}{(1 - x_{ij})} = (n-2)(1-\rho)\binom{n}{2} = N_{3,2} + 2N_{3,1} + 3N_{3,0}
\]
\end{lemma}
\begin{proof}
The expression on the left counts the number of times an edge is missing from an induced subgraph on $3$ vertices. This happens once for every induced subgraph on $3$ vertices which has $2$ edges, twice for every induced subgraph on $3$ vertices which has $1$ edge, and three times for every induced subgraph on $3$ vertices which has $0$ edges.
\end{proof}
\begin{lemma}
$\sum_{i=1}^{n}{\sum_{j,k: j < k, j \neq i, k \neq i}(1 - x_{ij})(1 - x_{ik})} = N_{3,1} + 3N_{3,0}$
\end{lemma}
\begin{proof}
The expression on the left counts the number of times a pair of edges with a common endpoint are both missing from an induced subgraph on $3$ vertices. This happens once for every induced subgraph on $3$ vertices which has $1$ edge and three times for every induced subgraph on $3$ vertices which has $0$ edges.
\end{proof}
\begin{corollary}\label{trianglelowerboundcorollary}
\[
N_{3,3} = \binom{n}{3} - (n-2)(1-\rho)\binom{n}{2} + \frac{2}{3}\sum_{i=1}^{n}{\sum_{j,k: j < k, j \neq i, k \neq i}(1 - x_{ij})(1 - x_{ik})} + \frac{N_{3,1}}{3}
\]
\end{corollary}
\begin{proof}
This follows immediately from the following facts which were shown above:
\begin{enumerate}
\item $\binom{n}{3} = N_{3,3} + N_{3,2} + N_{3,1} + N_{3,0}$
\item $(n-2)(1-\rho)\binom{n}{2} = N_{3,2} + 2N_{3,1} + 3N_{3,0}$
\item $\frac{2}{3}\sum_{i=1}^{n}{\sum_{j,k: j < k, j \neq i, k \neq i}(1 - x_{ij})(1 - x_{ik})} = \frac{2}{3}N_{3,1} + 2N_{3,0}$
\end{enumerate}
\end{proof}
Using Corollary \ref{trianglelowerboundcorollary}, to lower bound the number of triangles, we should lower bound 
\[\sum_{i=1}^{n}{\sum_{j,k: j < k, j \neq i, k \neq i}(1 - x_{ij})(1 - x_{ik})}
\] and $N_{3,1}$ 
For $N_{3,1}$ we take the trivial lower bound.
\begin{proposition}
$N_{3,1} \geq 0$
\end{proposition}
For $\sum_{i=1}^{n}{\sum_{j,k: j < k, j \neq i, k \neq i}(1 - x_{ij})(1 - x_{ik})}$ we use the following lemma:
\begin{lemma}
$\sum_{i=1}^{n}{\sum_{j,k: j < k, j \neq i, k \neq i}(1 - x_{ij})(1 - x_{ik})} \geq n\binom{{(1-\rho)(n-1)}}{2}$
\end{lemma}
\begin{proof}
The tight case for this lower bound is if for each vertex $i$, there are exactly $(1-\rho)(n-1)$ $j \neq i$ which are not adjacent to $i$. To see that this is tight, consider the expression
\[
\sum_{i=1}^{n}{\left(\left(\sum_{j \in [1,n] \setminus \{i\}}{(1 - x_{ij})}\right) - (1-\rho)(n-1)\right)^2}
\]
This expression must be non-negative and is the sum of the following terms:
\begin{enumerate}
\item $2\sum_{i=1}^{n}{\sum_{j,k: j < k, j \neq i, k \neq i}(1 - x_{ij})(1 - x_{ik})}$
\item $\sum_{i=1}^{n}{\sum_{j \in [1,n] \setminus \{i\}}{(1 - x_{ij})^2}} = (1-\rho)n(n-1)$
\item $-2(1-\rho)(n-1)\left(\sum_{i=1}^{n}{\sum_{j \in [1,n] \setminus \{i\}}{(1 - x_{ij})}}\right) = -2n(1 - \rho)^2(n-1)^2$
\item $\sum_{i=1}^{n}{(1-\rho)^2(n-1)^2} = n(1 - \rho)^2(n-1)^2$
\end{enumerate}
Thus, we obtain that 
\[
2\sum_{i=1}^{n}{\sum_{j,k: j < k, j \neq i, k \neq i}(1 - x_{ij})(1 - x_{ik})} \geq n(1-\rho)(n-1)((1-\rho)(n-1) - 1)
\]
as needed
\end{proof}
Putting these bounds together, Theorem \ref{basictriangletheorem} follows. We have that
\begin{align*}
N_{3,3} &= \binom{n}{3} - (n-2)(1-\rho)\binom{n}{2} + \frac{2}{3}\sum_{i=1}^{n}{\sum_{j,k: j < k, j \neq i, k \neq i}(1 - x_{ij})(1 - x_{ik})} + \frac{N_{3,1}}{3}\\
&\geq \binom{n}{3} - \frac{1}{2}n(n-1)(n-2)(1-\rho) + \frac{1}{3}n(1-\rho)(n-1)((1-\rho)(n-1) - 1) \\
&=\binom{n}{3} - \frac{n(n-1)(1 - \rho)}{6}(3(n - 2) - 2(1-\rho)(n-1) + 2) = \\
&= \binom{n}{3} - \frac{n(n-1)(1 - \rho)}{6}((1+2\rho)n - 2 - 2\rho)
\end{align*}
\end{proof}
\begin{corollary}\label{basictrianglecorollary}
For any edge density $\rho$, the limit of the ratio of the minium number of triangles to $\binom{n}{3}$ as $n$ goes to $\infty$ is at least $1 - (1 - \rho)(1 + 2\rho) = \rho(2\rho - 1)$.
\end{corollary}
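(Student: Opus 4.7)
The plan is to derive Corollary \ref{basictrianglecorollary} as an immediate asymptotic consequence of Theorem \ref{basictriangletheorem}. Since that theorem gives the pointwise bound
\[
N_{3,3} \;\geq\; \binom{n}{3} - \frac{n(n-1)(1-\rho)}{6}\bigl((1+2\rho)n - 2 - 2\rho\bigr),
\]
the first step is simply to divide both sides by $\binom{n}{3} = \frac{n(n-1)(n-2)}{6}$, which yields
\[
\frac{N_{3,3}}{\binom{n}{3}} \;\geq\; 1 - \frac{(1-\rho)\bigl((1+2\rho)n - 2 - 2\rho\bigr)}{n-2}.
\]

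The second step is to take the limit as $n \to \infty$. The factor $\frac{(1+2\rho)n - 2 - 2\rho}{n-2}$ is a ratio of two linear polynomials in $n$ whose leading coefficients are $(1+2\rho)$ and $1$ respectively, so it tends to $1+2\rho$. Hence the liminf of the ratio is at least $1 - (1-\rho)(1+2\rho)$.

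The third (trivial) step is the algebraic identity $1 - (1-\rho)(1+2\rho) = 1 - (1 + \rho - 2\rho^2) = \rho(2\rho - 1)$, which matches the statement of the corollary. There is no real obstacle here: every step is a one-line manipulation, and no combinatorial argument beyond Theorem \ref{basictriangletheorem} is required. The only thing to be slightly careful about is noting that when $\rho < 1/2$ the lower bound $\rho(2\rho-1)$ is negative and therefore vacuous (the trivial bound $N_{3,3} \geq 0$ already suffices), so the statement is really interesting only in the regime $\rho \geq 1/2$, but this does not affect the validity of the claim as stated.
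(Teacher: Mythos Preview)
Your proof is correct and matches the paper's approach: the paper states the corollary without a separate proof, treating it as an immediate consequence of dividing the bound in Theorem~\ref{basictriangletheorem} by $\binom{n}{3}$ and letting $n\to\infty$, exactly as you do.
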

We now show that the bound in Theorem \ref{basictriangletheorem} corresponds to having $k$ independent sets of size $\frac{n}{k}$ and having all remaining edges. This makes sense as such graphs are regular and have $N_{3,1} = 0$. Thus, Corollary \ref{basictrianglecorollary} is tight when $\rho$ equals one of the critical values $\{1 - \frac{1}{k}: k \in \mathbb{Z},k\geq 2\}$
\begin{corollary}\label{tightexamplecorollary}
Given an edge density $\rho$, taking $k$ be the number such that $\frac{n}{k} - 1 = (1 - \rho)(n-1)$, 
\[
N_{3,3} \geq \frac{1}{6}n\left(n - \frac{n}{k}\right)\left(n - 2\frac{n}{k}\right)
\]
\end{corollary}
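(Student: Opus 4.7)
The plan is to derive Corollary \ref{tightexamplecorollary} directly from Theorem \ref{basictriangletheorem} by purely algebraic substitution, exploiting the defining relation $\tfrac{n}{k}-1 = (1-\rho)(n-1)$. No further combinatorics is needed; the corollary is just Goodman's bound rewritten so that the critical parameter is $n/k$ rather than $\rho$, in a form that manifestly vanishes when $k \in \{1,2\}$ and is clearly nonnegative for $k \geq 2$.

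First, I would use the hypothesis to eliminate $1-\rho$ from Goodman's expression. The relation gives $(n-1)(1-\rho) = \tfrac{n}{k}-1$, which immediately converts the prefactor $\tfrac{n(n-1)(1-\rho)}{6}$ appearing in Theorem \ref{basictriangletheorem} into $\tfrac{n(n/k-1)}{6}$. To rewrite the remaining factor $(1+2\rho)n - 2 - 2\rho$, I would regroup it as $(n-2) + 2\rho(n-1)$ and substitute $2\rho(n-1) = 2(n-1) - 2(1-\rho)(n-1) = 2n - 2 - 2(n/k - 1) = 2n - 2n/k$. This yields
\[
(1+2\rho)n - 2 - 2\rho \;=\; 3n - 2 - \tfrac{2n}{k}.
\]

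Second, writing $m = n/k$ for brevity, plug these simplifications into Goodman's bound to get
\[
t(n,\rho) \;=\; \binom{n}{3} - \frac{n(m-1)(3n - 2 - 2m)}{6}.
\]
Expanding both sides as polynomials in $n$ and $m$, Goodman's bound becomes $\tfrac{1}{6}(n^3 - 3n^2 + 2n) - \tfrac{1}{6}(3mn^2 - 2m^2n - 3n^2 + 2n) = \tfrac{1}{6}(n^3 - 3mn^2 + 2m^2 n)$, which factors as $\tfrac{1}{6}n(n-m)(n-2m)$. Restoring $m = n/k$ gives exactly the claimed expression.

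The main obstacle, which is really just bookkeeping, is making sure the two substitutions for $1-\rho$ combine correctly, since one appears linearly and the other inside $(1+2\rho)$. As a sanity check I would separately count triangles in the extremal complete $k$-partite graph with equal parts of size $n/k$: each triangle picks three distinct parts and one vertex from each, giving $\binom{k}{3}(n/k)^3 = \tfrac{n^3(k-1)(k-2)}{6k^2}$, which equals $\tfrac{1}{6}n(n-n/k)(n-2n/k)$. This both confirms the algebra and demonstrates that the bound in Corollary \ref{tightexamplecorollary} is achieved whenever $k$ and $n/k$ are integers, matching the informal claim preceding the corollary.
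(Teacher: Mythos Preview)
Your proposal is correct and takes essentially the same approach as the paper: both verify purely algebraically that Goodman's bound $t(n,\rho)$ equals $\tfrac{1}{6}n(n-\tfrac{n}{k})(n-2\tfrac{n}{k})$ under the relation $(1-\rho)(n-1)=\tfrac{n}{k}-1$, the paper by substituting $n-\tfrac{n}{k}=(n-1)-(1-\rho)(n-1)$ and $n-2\tfrac{n}{k}=(n-2)-2(1-\rho)(n-1)$ and expanding toward $t(n,\rho)$, while you run the same computation in the reverse direction. Your extra sanity check via the complete balanced $k$-partite graph is a nice addition not present in the paper's proof.
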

\begin{proof}
\begin{align*}
\frac{1}{6}n\left(n - \frac{n}{k}\right)\left(n - 2\frac{n}{k}\right) &= \frac{1}{6}n((n-1) - (1 - \rho)(n-1))((n-2) - 2(1 - \rho)(n-1)) \\
&= \binom{n}{3} - \frac{n(n-1)(1-\rho)}{6}\left((n-2) + 2(n-1) - 2(1 - \rho)(n-1)\right) \\ 
&= \binom{n}{3} - \frac{n(n-1)(1 - \rho)}{6}((1+2\rho)n - 2 - 2\rho)
\end{align*}
\end{proof}
\subsection{The true answer to the triangle problem}
For the triangle problem, as discussed above, if $\rho$ equals one of the critical values $\{1 - \frac{1}{t}: t \in \mathbb{Z},t\geq 2\}$ then Goodman's bound $\rho(2\rho-1)$ gives the correct asymptotic answer for the minimal triangle density. If not, then there is a conflict between making the graph regular to minimize $\sum_{i=1}^{n}{\sum_{j,k: j < k, j \neq i, k \neq i}(1 - x_{ij})(1 - x_{ik})}$ and splitting the graph into independent sets so that $N_{3,1} = 0$, so the answer is no longer clear.

It was conjectured that when $\rho = 1 - \frac{1}{t}$ and $t$ is not an integer, the optimal solution is to have $\lceil{t}\rceil - 1$ independent sets of the same size and one independent set which is smaller. This conjecture is indeed asymptotically true, but it took extensive work to prove. Bollob\'as \cite{Bol75} showed that the function $f(\rho)$ which matches Goodman's bound $\rho(2\rho - 1)$ at the critical points and is piecewise linear between them is a lower bound for the triangle density. This showed that Goodman's bound is only tight at the critical points, which was already quite interesting, but it did not give a tight bound elsewhere. Lov\'asz and Simonovits \cite{LS83} proved the conjecture in intervals near the critical points, but these intervals were incredibly small. Fisher \cite{Fis89} proved the conjecture when $\frac{1}{2} \leq \rho \leq \frac{2}{3}$. This was later independently proven by Razborov \cite{Raz07} using different techniques. Finally, Razborov \cite{Raz08} proved the full conjecture by proving the following theorem:
\begin{theorem}\label{trueanswertheorem}
Given an edge density $\rho$, let $t = \lfloor{\frac{1}{1-\rho}}\rfloor$. As $n$ goes to $\infty$, the ratio of the minimal number of triangles to $\binom{n}{3}$ is
\[
\frac{(t-1)\left(t - 2\sqrt{t(t - \rho(t+1))}\right)\left(t + \sqrt{t(t - \rho(t+1))}\right)^2}{{t^2}(t+1)^2}
\]
\end{theorem}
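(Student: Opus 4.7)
The plan is to follow the flag algebra framework that Razborov developed, which in spirit is an infinite-dimensional semidefinite-programming attack on the problem. I would first pin down the extremal construction so as to know what identity the lower bound must be tight on. Given $\rho$ with $t = \lfloor 1/(1-\rho) \rfloor$, partition the $n$ vertices into $t-1$ classes of some common size $a$ and one class of size $b \leq a$ with $(t-1)a + b = n$, insert all cross-class edges and no intra-class edges, and then choose the ratio $b/a$ so that the edge density tends to $\rho$. A direct computation of the triangle density of this complete $t$-partite-with-one-smaller-class graph, followed by elimination of $a,b$ using the edge-density constraint, should reproduce the closed-form expression in the statement. This gives the upper bound and identifies the zero-set of the inequality we must prove.

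For the matching lower bound I would work in the graph-density algebra $\mathcal{A}$, where elements are formal linear combinations of finite graphs modulo the chain rules coming from random-vertex sampling. Inside $\mathcal{A}$, both the edge density $\rho$ and the triangle density $K_3$ are single generators, and the goal becomes proving that $K_3 - F(\rho)$ is a nonnegative element, where $F$ is the right-hand side of Theorem \ref{trueanswertheorem}. The natural strategy is to exhibit a sum-of-squares certificate: find typed flag elements $f_1,\dots,f_\ell$ and coefficients $c_i \geq 0$ such that $K_3 - F(\rho) = \sum_i c_i \llbracket f_i^2 \rrbracket + R$, where $\llbracket \cdot \rrbracket$ denotes the unlabeling operator averaging over a random type injection, and $R$ is a manifestly nonnegative remainder. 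The $f_i$ should be chosen to vanish on the extremal partite construction, so that the certificate is tight there; this means indexing the $f_i$ by small rooted subgraphs (three- and four-vertex configurations suffice in Razborov's proof) and solving a small SDP in their coefficients at each interval $[1-1/t, 1-1/(t+1)]$.

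The hard part, and the reason this theorem is deep while Goodman's bound in subsection \ref{Goodmanproofsubsection} is elementary, is that the certificate cannot be polynomial in $\rho$: the true curve is piecewise-algebraic with non-smooth breakpoints at the critical densities $\rho = 1 - 1/t$, so a single sum-of-squares over $\mathcal{A}$ cannot match it globally. Razborov's ``differential method'' gets around this by fixing a base density $\rho_0$, writing the tangent affine function $L_{\rho_0}(\rho)$ to $F$ at $\rho_0$, and proving the linear inequality $K_3 \geq L_{\rho_0}(\rho)$ via a single SDP whose solution depends on $\rho_0$. Taking the supremum of these tangent inequalities over $\rho_0$ reconstructs $F$. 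The main obstacle I expect is therefore twofold: guessing the correct zero pattern of $f_i$ on the non-regular extremal graph (where some classes differ in size), and verifying positivity of the resulting SDP with entries that are rational functions of $\rho_0$ and $t$. Once that SDP positivity is established analytically for all $t \geq 2$ and all admissible $\rho_0$, the theorem follows by combining the tangent bounds with the upper-bound construction from the first paragraph.
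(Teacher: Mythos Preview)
The paper does not contain a proof of this theorem. Theorem \ref{trueanswertheorem} is stated in Section \ref{triangleproblemsection} purely as background: the paper attributes it to Razborov \cite{Raz08} and cites it without any argument, using it only to explain why the SOS lower bound for the triangle problem constitutes a genuine integrality gap. So there is no ``paper's own proof'' to compare your proposal against.

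That said, your sketch is a reasonable high-level summary of the strategy Razborov actually uses in \cite{Raz08}: the upper bound via the complete $t$-partite graph with one smaller class, the flag-algebra SOS framework, and in particular the observation that no single polynomial certificate can work globally because the extremal curve is only piecewise algebraic. Your description of the ``differential method'' (proving a one-parameter family of tangent linear inequalities and taking their envelope) is also in the right spirit. Where your proposal remains only a plan rather than a proof is exactly where the real difficulty lies: you have not specified which typed flags $f_i$ to use, and you acknowledge that verifying the SDP positivity analytically for all $t$ and all $\rho_0$ is the main obstacle. In Razborov's paper this verification is the bulk of the work and is far from routine, so what you have written is an accurate roadmap but not yet a proof.
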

\subsection{A constant factor integrality gap for graph densities}
Since SOS is a very powerful tool for graph density problems, several papers \cite{Raz07,Lov08,LS12} asked whether constant degree SOS can always solve these problems asymptotically. 
Hatami and Norine \cite{HN11} answered this question negatively, showing that graph density problems are undecidable in general and giving an explicit example of a non-negative function of graph densities which cannot be written as a finite sum of squares. However, their example is not particularly natural. 

We observe here that since degree $d$ SOS fails to refute Goodman's bound when $t \geq d$ and $n$ is sufficiently large, for an appropriately chosen $\rho$ degree $d$ SOS has error $\Omega(\frac{1}{d^2})$ in finding the minimum triangle density, which is
\[
\frac{(t-1)\left(t - 2\sqrt{t(t - \rho(t+1))}\right)\left(t + \sqrt{t(t - \rho(t+1))}\right)^2}{{t^2}(t+1)^2}
\]
\begin{remark}
As observed in several prior works, based on the knapsack lower bound, SOS requires linear degree to certify that the following polynomial is non-negative:
\[
(\sum_{i=1}^{n}{x_i} - \lfloor{\frac{n}{2}}\rfloor)(\sum_{i=1}^{n}{x_i} - \lfloor{\frac{n}{2}}\rfloor + 1)
\]
However, the error for this example is quite small. In particular, while $(\sum_{i=1}^{n}{x_i} - \lfloor{\frac{n}{2}}\rfloor)(\sum_{i=1}^{n}{x_i} - \lfloor{\frac{n}{2}}\rfloor + 1)$ can have value $\Theta(n^2)$, degree 2 SOS can certify that 
\[
(\sum_{i=1}^{n}{x_i} - \lfloor{\frac{n}{2}}\rfloor)(\sum_{i=1}^{n}{x_i} - \lfloor{\frac{n}{2}}\rfloor + 1) + \frac{1}{4} \geq 0
\]
so the error is only a factor of $O(\frac{1}{n^2})$
\end{remark}
\subsection{Minimizing $k$-clique density}
In this subsection, we observe that our techniques give an analogous SOS lower bound for the following generalization of the triangle problem: Given a graph $G$ on $n$ vertices with edge density $\rho$, what is the minimum number of $k$-cliques $G$ can have? 

It turns out that this problem has the same behavior as the triangle problem. Reiher \cite{Rei16} proved that taking $t$ so that $\rho = 1 - \frac{1}{t}$, it is again optimal to have $t$ independent sets of size $\frac{n}{t}$ when  is an integer and to have $\lceil{t}\rceil - 1$ independent sets of the same size and one smaller independent set when $t$ is not an integer. Since we have the same story here, namely that we have $t$ independent sets of size $\frac{n}{t}$, we can use the same pseudo-expectation values and obtain an analogous SOS lower bound.
\\
\\
\noindent Acknowledgements: The author would like to thank Sasha Razborov for suggesting the triangle problem and for helpful conversations. The author would also like to thank Johan H{\aa}stad and Annie Raymond for helpful comments on the paper. This work was supported by the Simons Collaboration for Algorithms and Geometry, the NSF under agreement No. CCF-1412958, the Knut and Alice Wallenberg Foundation, the European Research Council, and the Swedish Research Council.\\

\begin{appendix}
\section{Explicit, combinatorial proof of Theorem \ref{squarereductiontheorem}}\label{decompositionsection}
In this appendix, we give an explicit, combinatorial proof of Theorem \ref{squarereductiontheorem}, which we restate here for convenience.
\begin{theorem}
If $\tilde{E}$ is a linear map from polynomials to $\mathbb{R}$ which is symmetric with respect to permutations of $[1,n]$ then for any polynomial $g$ such that $indexdeg(g) \leq \frac{n}{2}$, we can write
\[
\tilde{E}[g^2] = \sum_{I \subseteq [1,n],j:|I|\leq indexdeg(g)}{\tilde{E}[g^2_{Ij}]}
\]
where for all $I,j$,
\begin{enumerate}
\item $g_{Ij}$ is symmetric with respect to permutations of $[1,n] \setminus I$.
\item $indexdeg(g_{Ij}) \leq indexdeg(g)$
\item $\forall i \in I, \sum_{\sigma \in S_{[1,n] \setminus (I \setminus \{i\})}}{\sigma(g_{Ij})} = 0$
\end{enumerate}
\end{theorem}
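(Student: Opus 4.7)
The plan is to build the decomposition $\tilde{E}[g^2] = \sum_{I,j} \tilde{E}[g_{Ij}^2]$ by constructing the pieces $g_{Ij}$ explicitly via Möbius inversion on the lattice of subsets of $[1,n]$, verifying the three listed properties by direct combinatorial bookkeeping, and then showing that the cross--product terms under $\tilde{E}$ vanish by combining the symmetry of $\tilde{E}$ with the vanishing condition (3).

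First I would introduce the partial--symmetrization operator $\pi_J g := \frac{1}{(n-|J|)!}\sum_{\sigma \in S_{[1,n]\setminus J}} \sigma(g)$ for each $J \subseteq [1,n]$ and define candidate pieces
\[
g_I \;:=\; \sum_{J \subseteq I}(-1)^{|I|-|J|}\,\pi_J g.
\]
Since each $\pi_J g$ is symmetric under $S_{[1,n] \setminus J}$, which contains $S_{[1,n] \setminus I}$ whenever $J \subseteq I$, every $g_I$ satisfies property (1), and property (2) is immediate because symmetrization does not increase index degree. To check the vanishing property (3), I would expand $\sum_{\sigma \in S_{[1,n]\setminus(I\setminus\{i\})}} \sigma(g_I)$ directly at the level of raw group sums, pair the contribution of each $J \subseteq I$ with that of $J \triangle \{i\}$, and extract cancellation from the alternating signs. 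The hypothesis $|I| \leq indexdeg(g) \leq n/2$ is what guarantees the relevant combinatorial factors are nonzero so the cancellation is genuine rather than degenerate.

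Next I would refine each $g_I$ as $g_I = \sum_j g_{Ij}$ by expressing $g_I$ in an explicit basis of the subspace of polynomials satisfying (1) and (3). A convenient basis consists of products of a monomial ``template'' on the variables $\{x_e : e \cap I \neq \emptyset\}$ with a symmetric polynomial on $[1,n]\setminus I$ of low enough index degree, restricted to lie in the kernel of every averaging map $\pi_{I\setminus\{i\}}$ for $i \in I$. This is a routine linear--algebra step and yields $g_{Ij}$'s with all three properties intact.

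Finally I would prove the orthogonality $\tilde{E}[g_{Ij}\,g_{I'j'}]=0$ whenever $(I,j)\neq(I',j')$. Since $\tilde{E}$ is symmetric, this reduces to showing that $\mathrm{sym}(g_{Ij}\,g_{I'j'})$ is identically zero. Choosing an index $i$ in the symmetric difference $I \triangle I'$ and averaging the product over $S_{[1,n]\setminus(I\setminus\{i\})}$ (or the analogous group for $I'$) makes one of the two factors invariant under the group, so the partial sum factors as the invariant factor times $\pi_{I\setminus\{i\}}(g_{Ij})$, which vanishes by (3). The main obstacle will be verifying condition (3) itself: the naive identity $\pi_A \pi_B = \pi_{A \cap B}$ fails in general, so one cannot just collapse compositions of averaging operators and must instead analyze the group sums monomial by monomial using the alternating signs. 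A close second in difficulty is the orthogonality argument when $I$ and $I'$ are incomparable, where a single averaging step over one factor may not suffice; here I expect either to iterate the argument or to choose the basis $\{g_{Ij}\}$ carefully enough (so that every monomial in $g_{Ij}$ meets $I$ in a prescribed way) that one averaging step already kills the cross term.
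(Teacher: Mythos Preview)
Your M\"obius--inversion construction does not produce polynomials satisfying property~(3), and the pairing argument you propose to verify~(3) fails for exactly the reason you flag but underestimate: the failure of $\pi_A\pi_B=\pi_{A\cap B}$ is fatal here, not a mere technical nuisance. Take $n=3$, $g=x_1$, $I=\{1,2\}$, $i=2$. Then
\[
g_{\{1,2\}}=\pi_{\{1,2\}}x_1-\pi_{\{1\}}x_1-\pi_{\{2\}}x_1+\pi_{\emptyset}x_1
= -\tfrac{1}{6}x_1+\tfrac{1}{3}x_2-\tfrac{1}{6}x_3,
\]
and averaging over $S_{[1,3]\setminus(I\setminus\{i\})}=S_{\{2,3\}}$ gives $-\tfrac{1}{6}x_1+\tfrac{1}{12}x_2+\tfrac{1}{12}x_3\neq 0$. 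In your pairing of $J$ with $J\triangle\{i\}$, the pair $(\{1\},\{1,2\})$ does cancel, but the pair $(\emptyset,\{2\})$ does not: $\pi_{\{1\}}\pi_{\emptyset}x_1=\tfrac{1}{3}(x_1+x_2+x_3)$ while $\pi_{\{1\}}\pi_{\{2\}}x_1=\tfrac{1}{2}x_1+\tfrac{1}{4}x_2+\tfrac{1}{4}x_3$. The same example shows $g_I\neq 0$ for $|I|>\operatorname{indexdeg}(g)$, so neither the full sum $\sum_I g_I=g$ nor the truncated sum $\sum_{|I|\le\operatorname{indexdeg}(g)}g_I$ gives what the theorem needs.

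What is missing is that the pieces $g_I$ must be projections onto the isotypic components of the $S_n$--action indexed by partitions $\lambda$ with $\lambda_1=n-|I|$, and naive M\"obius inversion over subsets does not build these projectors: the operators $\pi_J$ do not commute, and no signed combination of them alone yields the Specht--module projections. The paper obtains property~(3) in two ways: either invoking the isotypic decomposition directly (where~(3) comes from the Specht--module structure via a column--swap sign argument), or, in the appendix, by constructing an explicit spanning set $\phi_{F,L}$ whose defining coefficients $c(L,L')$ are engineered precisely so that summing $\phi_{F,L}$ over any one free label gives zero. Your orthogonality step for incomparable $I,I'$ is also incomplete as you note, but it is secondary: without a correct construction of the $g_I$ the rest of the argument cannot start.
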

\begin{example}
If $g = x_i$ then we can decompose $g$ as $g = g_{0} + \frac{n-1}{n}g_{1i}$ where $g_0 = \frac{1}{n}\sum_{i = 1}^{n}{x_i}$ and $g_{1i} = x_i - \frac{1}{n-1}\sum_{j \in [1,n] \setminus i}{x_j}$. We now observe that $\sum_{i=1}^{n}{{g_0}{g_{1i}}} = 0$ so $\tilde{E}[{g_0}{g_{1i}}] = 0$ for any symmetric pseudo-expectation values $\tilde{E}$. Thus, for any symmetric $\tilde{E}$, 
\[
\tilde{E}[g^2] = \tilde{E}[(g_{0} + \frac{n-1}{n}g_{1i})^2] = \tilde{E}[g^2_{0} + \frac{(n-1)^2}{n^2}g^2_{1i}] \geq 0
\]
\end{example}
\begin{example}
If $g = x_{i_1} - x_{i_2}$ we can decompose $g$ as $g = \frac{n-1}{n}(g_{1{i_1}} - g_{1{i_2}})$ where $\forall i \in [1,n], g_{1i} = x_i - \frac{1}{n-1}\sum_{j \in [1,n] \setminus \{i\}}{x_j}$. If $\tilde{E}$ is symmetric, 
for all distinct $i_1,i_2$ in $[1,n]$, $\tilde{E}[g^2_{1i_1}] = \tilde{E}[g^2_{1i_2}]$ and 
\begin{align*}
\tilde{E}[g_{1{i_1}}g_{1i_2}] &= \frac{1}{n-1}\tilde{E}\left[g_{1{i_1}}\sum_{j \neq i_1}^{n}{g_{1j}}\right] = 
\frac{1}{n-1}\tilde{E}\left[g_{1{i_1}}\sum_{j=1}^{n}{g_{1j}}\right] - \frac{1}{n-1}\tilde{E}[g^2_{1{i_1}}] \\
&= - \frac{1}{n-1}\tilde{E}[g^2_{1{i_1}}]
\end{align*}
Thus, for all symmetric $\tilde{E}$,
\[
\tilde{E}[g^2] = \frac{(n-1)^2}{n^2}\tilde{E}\left[(g_{1{i_1}} - g_{1{i_2}})^2\right] = \frac{(n-1)^2}{n^2}\left(2 + \frac{2}{n-1}\right)\tilde{E}[g^2_{1{i_1}}] 
= \frac{2(n-1)}{n}\tilde{E}[g^2_{1{i_1}}] 
\]
\end{example}
To prove Theorem \ref{squarereductiontheorem}, we first need some preliminaries.
\subsection{Preliminaries}
\subsubsection{Tuples and sets of distinct indices}
Throughout the proof of Theorem \ref{squarereductiontheorem}, we use $L = (l_1,\dots,l_r)$ to denote a tuple of distinct indices. We use the letter $L$ because these indices will be labels for vertices and we use parentheses to emphasize that the ordering matters. We will never consider tuples whose elements are not distinct, so whenever we sum over tuples $L = (l_1,\dots,l_r)$, we always require that $l_1,\dots,l_r$ are all distinct.

When we want to consider the elements of $L$ without the ordering, we denote this by $I_L$.
\begin{definition}
Given a tuple $L = (l_1,\dots,l_r)$ of distinct indices, we define $I_L = \{j: \exists i: j = l_i\}$.
\end{definition}
\subsubsection{Spanning sets for polynomials of index degree $\leq d$}
To prove Theorem \ref{squarereductiontheorem}, we need to consider various ways of expressing polynomials of index degree $\leq d$. First, we have the monomial basis.
\begin{definition}
A
\end{definition}
\begin{definition}
Gven a multi-graph $G'$ with no isolated vertices, we associate the monomial $x_{G'} = \prod_{e \in E(G')}{x_e}$ to it. We define the index degree of $G'$ to be 
\[
indexdeg(G') = |\{v: v \text{ is an endpoint of some edge } e \in E(G')\}|
\]
\end{definition}
\begin{proposition}
The monomials $\{x_{G'}: indexdeg(G') \leq d\}$ are a basis for the polynomials of index degree $\leq d$.
\end{proposition}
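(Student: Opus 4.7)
The plan is to verify this directly from the definitions, since the statement is essentially a dictionary between the combinatorial ``multi-graph'' language and the algebraic ``monomial'' language, together with the standard fact that monomials form a basis for the polynomial ring.

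First, I would establish the bijection between the two descriptions. Every monomial in the variables $\{x_e : e \in E_P\}$ can be written uniquely as $\prod_{e \in E_P} x_e^{a_e}$ for non-negative integer exponents $\{a_e\}$ of finite support. Such an exponent vector corresponds bijectively to a multi-graph $G'$ on vertex set $[1,n]$ with no isolated vertices, by declaring $e$ to be an edge of multiplicity $a_e$ and then discarding any vertex not incident to an edge. Under this bijection the monomial equals $x_{G'} = \prod_{e \in E(G')} x_e$.

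Second, I would check that the two notions of index degree agree: $indexdeg(G')$ is by definition the number of vertices incident to some edge of $G'$, while $indexdeg(\prod_e x_e^{a_e})$ is the size of $\{i \in [1,n] : \exists e \text{ with } a_e > 0 \text{ and } i \in e\}$. These are the same set of indices, so the counts coincide. Consequently, the collection $\{x_{G'}\}$ as $G'$ ranges over all multi-graphs with no isolated vertices is exactly the monomial basis of the full polynomial ring, and the subcollection with $indexdeg(G') \leq d$ corresponds exactly to the monomials of index degree $\leq d$.

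Third, to conclude the proposition I would invoke the defining identity $indexdeg(f) = \max_j indexdeg(p_j)$ from the ``Index degree'' definition, where $f = \sum_j c_j p_j$ is the unique monomial decomposition. This definition says precisely that $f$ has index degree $\leq d$ if and only if every monomial with a nonzero coefficient in $f$ has index degree $\leq d$, i.e.\ $f$ lies in the linear span of $\{x_{G'} : indexdeg(G') \leq d\}$. Linear independence of this subset is inherited from linear independence of the full monomial basis, so the spanning set is in fact a basis, as claimed.

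The main ``obstacle'' is really just bookkeeping: one must be careful that the convention of discarding isolated vertices in $G'$ does not introduce any double counting (it does not, because an isolated vertex contributes no factor $x_e$ and no index to $I(p)$), and that the definition of index degree from the ``Index degree'' subsection is a linear-algebraic notion (the defining maximum-over-monomials formula forces the set of polynomials of index degree $\leq d$ to be a subspace). No new ideas are needed beyond carefully matching up the definitions.
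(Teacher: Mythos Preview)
Your proof is correct. The paper in fact states this proposition without proof, treating it as immediate from the definitions; your careful unpacking of the bijection between monomials and multi-graphs with no isolated vertices, the matching of the two index-degree notions, and the appeal to the standard monomial basis is exactly the verification one would supply if pressed, and is more detailed than anything the paper provides.
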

While it is easy to express polynomials using the monomial basis, this basis does not take advantage of any symmetries which may be present. In particular, we will often want to group monomials which are the same up to permutations of the indices together. For this, we use the flags from Razborov's flag algebras \cite{Raz07}.
\begin{definition}
We define a flag $F$ to consist of the following
\begin{enumerate}
\item A tuple of distinct vertices $V_{labeled} = (v_1, \dots, v_{r_F})$.
\item A tuple of distinct unlabeled vertices $V_{free}$
\item A multi-graph $H_F$ on vertices $V_{labeled} \cup V_{free}$ with no isolated vertices.
\end{enumerate}
We define $r_F = |V_{labeled}|$ to be the asymmetry level of the flag $F$.
\end{definition}
We now define the following spanning set for polynomials of index degree at most $d$. For each flag $F$ and ordered set $L$ of distinct labels for the vertices in $V_{labeled}$, we define a polynomial $p_{F,L}$. Roughly speaking, $p_{F,L}$ corresponds to taking the sum of $x_{G'}$ over all $G'$ which match the pattern given by $F$ and $L$.
\begin{definition}
Given a flag $F$ and an ordered set $L=(l_1, \dots l_{r_F})$ of distinct labels for $V_{labeled}$, define $p_{F,L}$ to be the polynomial
\[
p_{F,L} = \sum_{\sigma: V(H_F) \to [1,n]: \sigma \text{ is injective}, \forall i \in [1,r_F] \sigma(v_i) = l_i}{x_{\sigma(H_F)}}
\]
\end{definition}
\begin{example}
If $F$ is the flag consisting of one unlabeled triangle then 
\[p_{F,\emptyset} = 6\sum_{i<j<k}{x_{ij}x_{ik}x_{jk}} = 6(\text{\# of triangles in } G)\].
\end{example}
\begin{example}
If $F$ is the flag consisting of a single edge between one labeled vertex and one unlabeled vertex, 
\[
p_{F,\{i\}} = \sum_{j \neq i}{x_{ij}} = deg(i)
\]
\end{example}
\begin{proposition}
The polynomials $\{p_{F,L}: F \text{ has asymmetry level } \leq d\}$ are a spanning set for the polynomials of index degree $\leq d$.
\end{proposition}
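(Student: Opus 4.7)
The plan is to reduce the claim to the already-stated fact that the monomials $\{x_{G'} : indexdeg(G') \leq d\}$ form a basis for the space of polynomials of index degree at most $d$. Thus it suffices to show that every such monomial $x_{G'}$ lies in the span of $\{p_{F,L}\}$ where $F$ has asymmetry level at most $d$. In fact, I expect that each such monomial can be written as a single $p_{F,L}$, making the "spanning set" statement essentially trivial.

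The concrete construction I would give: starting from a multi-graph $G'$ on $[1,n]$ with no isolated vertices and $indexdeg(G') \leq d$, let $V(G') = \{l_1, \dots, l_r\}$ with $r = indexdeg(G') \leq d$ (in some fixed ordering). Define a flag $F$ by taking $V_{labeled} = (v_1, \dots, v_r)$ to be $r$ fresh labeled vertices, $V_{free} = \emptyset$, and $H_F$ to be the multi-graph on $V_{labeled}$ obtained from $G'$ by relabeling $l_i \mapsto v_i$. Set $L = (l_1, \dots, l_r)$. Then $F$ has asymmetry level $r \leq d$, and the only injection $\sigma : V(H_F) \to [1,n]$ satisfying $\sigma(v_i) = l_i$ for all $i \in [1,r]$ is the relabeling $v_i \mapsto l_i$ itself (there are no free vertices to choose images for). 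Consequently $p_{F,L}$ is a single-term sum equal to $x_{G'}$.

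Since every basis monomial $x_{G'}$ of index degree $\leq d$ arises in this way as some $p_{F,L}$ with $F$ of asymmetry level $\leq d$, the collection $\{p_{F,L} : F \text{ has asymmetry level} \leq d\}$ spans the space of polynomials of index degree $\leq d$. I do not foresee any obstacle: the only subtlety is remembering that the convention "$H_F$ has no isolated vertices" forces the asymmetry level to equal $indexdeg(G')$ rather than some larger count, which is precisely what keeps the bound $\leq d$ tight. Note that this is only a spanning set, not a basis — the $p_{F,L}$ for flags with free vertices produce further relations among monomials, but spanning is all that is claimed.
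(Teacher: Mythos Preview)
Your proposal is correct and follows essentially the same approach as the paper: both observe that any monomial $x_{G'}$ equals $p_{F,L}$ where $F$ is the flag with $H_F$ a relabeled copy of $G'$, all vertices labeled (no free vertices), and $L$ the tuple of vertices of $G'$. Your write-up is simply a more detailed version of the paper's one-sentence proof.
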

\begin{proof}
Observe that for any multi-graph $G'$ on $\leq d$ vertices, $x_{G'} = p_{F,L}$ where $H_F$ is the same as $G'$ up to the labeling, $V_{labeled} = V(H_F)$, and $L$ is the tuple of vertices of $G'$.
\end{proof}
The polynomials $p_{F,L}$ allow us to group terms which are the same up to permutations of the indices together. However, it turns out that these polynomials aren't quite the right polynomials to use for decomposing $\tilde{E}[g^2]$. The reason is that we would like to have that $\tilde{E}[p_{F,L}p_{F',L'}] = 0$ whenever $F$ and $F'$ have different asymmetry levels, which would allow us to consider each asymmetry level separately. However, as shown by the following proposition, this does not hold for the polynomials $\{p_{F,L}\}$
\begin{proposition}
Let $F$ be a flag of asymmetry level $r$. For all $i \in [1,r]$, 
\[
\sum_{a \in ([1,n] \setminus I_L) \cup \{l_i\}}{p_{F,(l_1,\dots,l_{i-1},a,l_{i+1},\dots,l_{r})}} = p_{F',(l_1,\dots,l_{i-1},l_{i+1},\dots,l_{r})}
\]
where $F'$ is the flag obtained by taking $F$ and making the vertex $v_i$ unlabeled.
\end{proposition}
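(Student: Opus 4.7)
The identity is essentially a reindexing of the sum defining $p_{F',\cdot}$, so my plan is to prove it by unfolding both sides from the definition and exhibiting an explicit bijection between the summands. The key preliminary observation, which I would record first, is that $F'$ differs from $F$ only by moving $v_i$ from $V_{labeled}$ to $V_{free}$; in particular the underlying multigraph $H_{F'}$ coincides with $H_F$ as a graph, so for any injective map $\sigma$ on these vertices the monomial $x_{\sigma(H_{F'})}$ equals $x_{\sigma(H_F)}$. Consequently, both sides of the claimed identity can be expressed as sums of monomials of the form $x_{\sigma(H_F)}$ over certain sets of injective $\sigma$, and the problem reduces to matching the index sets.

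Writing $L' := (l_1,\dots,l_{i-1},l_{i+1},\dots,l_r)$, unfolding the definition of $p_{F',L'}$ yields
\[
p_{F',L'} = \sum_{\sigma} x_{\sigma(H_F)},
\]
where $\sigma$ ranges over injective maps $V(H_F) \to [1,n]$ with $\sigma(v_j) = l_j$ for every $j \neq i$ and $\sigma(v_i)$ unconstrained apart from injectivity (since $v_i$ is unlabeled in $F'$). I would then stratify this sum by the value $a := \sigma(v_i)$. The constraints $\sigma(v_j)=l_j$ for $j \neq i$ combined with injectivity force $a$ to avoid $\{l_j : j \neq i\} = I_L \setminus \{l_i\}$, i.e. $a$ must lie in $([1,n] \setminus I_L) \cup \{l_i\}$, which is exactly the outer index set appearing on the left-hand side. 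For each such $a$, the inner sum over injective $\sigma$ satisfying the additional constraint $\sigma(v_i)=a$ is by definition equal to $p_{F,(l_1,\dots,l_{i-1},a,l_{i+1},\dots,l_r)}$; the remaining injectivity conditions against images of $V_{free}$ are already built into that definition and so are not double-counted.

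Reassembling the stratified sum yields the left-hand side and completes the proof. There is essentially no obstacle beyond careful bookkeeping; the only subtlety is verifying that the outer index set on the left-hand side matches exactly the admissible range of $a$ (which follows from injectivity and the fixed values $\sigma(v_j) = l_j$ for $j \neq i$) and that each stratum agrees term-by-term with $p_{F,\cdot}$, both of which are immediate from the definition. A mild extra detail worth flagging is the edge case $a = l_i$, where the corresponding summand on the left is just the original $p_{F,L}$; this is consistent because $l_i$ is explicitly added back into the outer index set.
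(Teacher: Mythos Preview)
Your proposal is correct. The paper states this proposition without proof, treating it as immediate from the definition of $p_{F,L}$; your stratification of the sum defining $p_{F',L'}$ by the value $a=\sigma(v_i)$ is exactly the natural unfolding that justifies it.
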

To obtain a spanning set where polynomials of different asymmetry levels are orthogonal to each other, for each flag $F$ and ordered set of labels $L$ for the vertices in $V_{labeled}$, we define another polynomial $\phi_{F,L}$ as follows.
\begin{definition}
Given ordered tuples of indices $L = (l_1,\dots,l_r)$ and $L' = (l'_1,\dots,l'_r)$, define $c(L,L')$ to be $0$ if $\exists i,j: i \neq j, l'_j = l_i$. Otherwise, take
\[
c(L,L') = \frac{(-1)^{|\{i:l'_i \neq l_i\}|}}{\prod_{j=0}^{|\{i:l'_i \neq l_i\}|-1}{(n-r-j)}}
\]
\end{definition}
\begin{definition}
Given a flag $F$ of asymmetry level $r$ and distinct labels $L = (l_1,\dots,l_r)$, define 
\[
\phi_{F,L} = \sum_{L' = (l'_1,\dots,l'_r)}{c(L,L')p_{F,L'}}
\]
\end{definition}
\begin{example}
Whenever $F$ has asymmetry level $0$, $\phi_{F,L} = p_{F,L}$
\end{example}
\begin{example}
If $F$ is the flag consisting of a single labeled vertex $v$ and a single hyperedge $e = \{v\}$ then $p_{F,(i)} = x_i$ and $\phi_{F,(i)} = x_i - \frac{1}{n-1}\sum_{j \neq i}{x_j}$. Note that $\sum_{i \in [1,n]}{\phi_{F,(i)}} = 0$
\end{example}
\begin{example}
If $F$ is the flag consisting of two labeled vertices and an edge between them then $p_{F,(i,j)} = x_{ij}$ while 
\[
\phi_{F,(i,j)} = x_{ij} - \frac{1}{n-2}\sum_{j' \notin (i,j)}{x_{ij'}} - \frac{1}{n-2}\sum_{i' \notin (i,j)}{x_{i'j}} + \frac{1}{(n-2)(n-3)}\sum_{i',j': i' \notin {i,j}, j' \notin \{i',i,j\}}{x_{i'j'}}
\]
Note that $\sum_{j \in [1,n] \setminus \{i\}}{\phi_{F,(i,j)}} = 0$ and $\sum_{i \in [1,n] \setminus \{j\}}{\phi_{F,(i,j)}} = 0$
\end{example}
\begin{proposition} \ 
\begin{enumerate}
\item For all permutations $\sigma \in S_n$, $c(\sigma(L),\sigma(L')) = c(L,L')$
\item For all permutations $\sigma$ of $[1,n] \setminus I_L$, $p_{F,\sigma(L)} = p_{F,L}$, $\phi_{F,\sigma(L)} = \phi_{F,L}$, and 
\[
c(L,\sigma(L')) = c(\sigma(L),\sigma(L')) = c(L,L')
\]
\end{enumerate}
\end{proposition}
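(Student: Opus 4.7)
I will handle the two parts in order, treating part (1) as the substantive claim and part (2) as a corollary together with essentially trivial verifications.

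For part (1), the approach is to unpack the definition of $c(L,L')$ and observe that it depends on the pair $(L,L')$ only through two invariants: the Boolean indicator of whether there exist distinct indices $i,j$ with $l'_j = l_i$, and, when this event fails, the integer $|\{i : l'_i \neq l_i\}|$. Both invariants are stable under componentwise application of a single bijection $\sigma \in S_n$, since $\sigma(l'_j) = \sigma(l_i)$ iff $l'_j = l_i$, and likewise $\sigma(l'_i) = \sigma(l_i)$ iff $l'_i = l_i$. The constants $n$ and $r$ in the denominator do not depend on the specific indices, so this immediately yields $c(\sigma(L),\sigma(L')) = c(L,L')$.

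For part (2), the key observation is that if $\sigma$ is a permutation of $[1,n] \setminus I_L$ then $\sigma$ fixes every element of $I_L$, so $\sigma(L) = L$ as ordered tuples. Consequently $p_{F,\sigma(L)} = p_{F,L}$ and $\phi_{F,\sigma(L)} = \phi_{F,L}$ are immediate from the definitions, with no further computation needed. For the chain $c(L,\sigma(L')) = c(\sigma(L),\sigma(L')) = c(L,L')$, the first equality is again the identity $\sigma(L) = L$, and the second is a direct invocation of part (1) with $\sigma$ viewed as an element of $S_n$.

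There is no real obstacle in this proof; the propositions are bookkeeping lemmas recording symmetries of $c$, $p_{F,L}$, and $\phi_{F,L}$ that will be deployed later when decomposing $\tilde{E}[g^2]$ in terms of the $\phi_{F,L}$. The only mild point to be careful about is verifying that the formula for $c(L,L')$ really does depend on the pair only through the two invariants listed above: one needs to note that both the sign exponent and the upper limit of the product in the denominator are precisely $|\{i : l'_i \neq l_i\}|$, so no additional index-dependent data enters.
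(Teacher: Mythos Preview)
Your proof is correct. The paper states this proposition without proof, treating it as an immediate consequence of the definitions; your argument fills in exactly the routine verifications one would expect, namely that $c(L,L')$ depends only on the incidence pattern between $L$ and $L'$ (preserved by any bijection applied simultaneously), and that a permutation of $[1,n]\setminus I_L$ fixes $L$ pointwise so that the remaining claims in part (2) reduce to tautologies and part (1).
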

\subsection{Facts about $\phi_{F,L}$}
In this subsection, we prove the following useful facts about the polynomials $\phi_{F,L}$.
\begin{theorem}\label{phipropertiestheorem} \ 
\begin{enumerate}
\item For any flag $F$, tuple of indices $L = (l_1,\dots,l_{r_F})$, and any $i \in [1,r_F]$, 
\[
\sum_{a \in ([1,n] \setminus I_{L}) \cup \{l_i\})}{\phi_{F,\{l_1,\dots,l_{i-1},a,l_{i+1},\dots,l_r\}}} = 0
\]
More generally, for any flag $F$, tuple of indices $L = (l_1,\dots,l_{r_F})$, and any set of indices $I$ such that $I \subsetneq I_{L}$, 
\[
\sum_{\sigma \in S_{[1,n] \setminus I}}{\phi_{F,\sigma(L)}} = 0
\]
\item If $F_1$ and $F_2$ are flags such that $r_{F_1} \neq r_{F_2}$ then for any $\tilde{E}$ which is symmetric under permutations of $[1,n]$, for any tuples $L,L'$ of sizes $r_{F_1}$ and $r_{F_2}$, 
$\tilde{E}[\phi_{F_1,L} \cdot \phi_{F_2,L'}] = 0$
\item For all polynomials $g$ such that $indexdeg(g) \leq \frac{n}{2}$, we can write $g = \sum_{F,L}{b_{F,L}\phi_{F,L}}$ where for all flags $F$, tuples of indices $L = (l_1,\dots,l_{r_F})$, and $i \in [1,r_F]$, 
\[
\sum_{a \in ([1,n] \setminus I_L) \cup l_i}{b_{F,(l_1,\dots,l_{i-1},a,l_{i+1},\dots,l_{r_F})}} = 0
\]
\end{enumerate}
\end{theorem}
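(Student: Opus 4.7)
My plan is to prove the three parts of Theorem \ref{phipropertiestheorem} in order, with part 1 supplying the core cancellation identity and parts 2 and 3 built on top of it.

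For part 1, I would first establish the single-position case ($i$ fixed, $a$ ranging over $([1,n] \setminus I_L) \cup \{l_i\}$) and then deduce the general case by iteration. For the single-position case, I would expand each $\phi_{F, L^{(a)}}$ via its definition and show that for every fixed tuple $L''$, the coefficient $\sum_a c(L^{(a)}, L'')$ of $p_{F, L''}$ vanishes. Splitting based on whether $a = l''_i$ or $a \neq l''_i$ (i.e., whether position $i$ lies in the ``defect'' set where $L^{(a)}$ and $L''$ disagree), the first case contributes one term, and the second case contributes $n - r - |S_0|$ terms, where $S_0$ is the set of other defect positions, each with a denominator one factor deeper. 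These two contributions have opposite signs and cancel by the telescoping identity
\[
\frac{n - r - |S_0|}{\prod_{j=0}^{|S_0|}(n-r-j)} \;=\; \frac{1}{\prod_{j=0}^{|S_0| - 1}(n-r-j)},
\]
which is exactly what the denominators in the definition of $c$ were engineered to produce. The general form, for any $I \subsetneq I_L$, then follows from the single-position claim by choosing $j_0$ with $l_{j_0} \notin I$, parametrizing $\sigma \in S_{[1,n] \setminus I}$ by its action on $L$, and factoring out a single-position sum at $j_0$ with the other image values held fixed; the inner sum vanishes by the single-position case.

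For part 2, the cleanest route is representation-theoretic, following the remark in the main body. I would show that $\phi_{F, L}$ lies in the direct sum of isotypic components $V_\lambda$ with $\lambda_1 = n - r_F$. This uses two facts: $\phi_{F, L}$ is $S_{[1,n] \setminus I_L}$-invariant by construction, so by Young's rule it lies in $\bigoplus_{\lambda : \lambda_1 \geq n - r_F} V_\lambda$; and the averaging-to-zero condition from part 1 eliminates the components with $\lambda_1 > n - r_F$, since such components would survive the partial symmetrizations appearing on the left-hand side of part 1. With $r_{F_1} \neq r_{F_2}$ placing the two polynomials in disjoint sets of isotypic components, Schur's lemma applied to the $S_n$-invariant symmetric bilinear form $(f, g) \mapsto \tilde{E}[fg]$ forces the pairing to vanish. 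The main obstacle here is justifying the ``averaging removes the higher components'' step rigorously; a more direct combinatorial fallback is to use $S_n$-invariance of $\tilde{E}$ to average $\tilde{E}[\phi_{F_1, \sigma(L_1)} \phi_{F_2, L_2}]$ over $\sigma \in S_{[1,n] \setminus I_{L_2}}$ (which fixes $\phi_{F_2, L_2}$), then reduce the resulting sum $\sum_\sigma \phi_{F_1, \sigma(L_1)}$ to instances of part 1 via inclusion-exclusion over the overlap pattern $I_{L_1} \cap I_{L_2}$, with the case $I_{L_1} \subseteq I_{L_2}$ handled by downward induction on $r_1$.

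For part 3, I would proceed by top-down induction on the asymmetry level. Starting with $g$ expressed in the monomial spanning set and using the triangular relation $p_{F, L} = \phi_{F, L} + (\text{terms coming from ``unlabelling'' vertices of } F)$, I would convert to the $\phi$-basis level by level from the highest asymmetry level $r = \text{indexdeg}(g)$ downwards. At each level, before descending, I would replace the coefficient tuple $\{b_{F, L}\}_L$ by its ``traceless projection'' satisfying the required averaging condition on the coefficients, absorbing the complementary part into lower-asymmetry coefficients via the unlabelling relation. The condition $\text{indexdeg}(g) \leq n/2$ ensures that every denominator $(n - r)(n - r - 1)\cdots$ appearing in these projection formulas is nonzero, and part 2 guarantees that this process is well-defined and produces a decomposition whose orthogonality between asymmetry levels is exactly what is needed downstream in the proof of Theorem \ref{squarereductiontheorem}.
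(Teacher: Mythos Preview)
Your treatment of part 1 is correct and matches the paper: both arguments reduce to the telescoping identity for the coefficients $c(L,L')$. The paper states the coefficient lemma as a sum over the \emph{second} argument of $c$ and then uses the invariance $c(\sigma(L),L')=c(L,\sigma^{-1}(L'))$ to prove the general form directly; you prove the single-position case first and iterate. These are reorganizations of the same computation.

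For part 2, the paper's proof is more elementary than either of your routes. Taking $r_{F_1}<r_{F_2}$ without loss of generality, it inducts on $|I_L\setminus I_{L'}|$: in the base case $I_L\subseteq I_{L'}$ one averages over $S_{[1,n]\setminus I_L}$, which fixes $\phi_{F_1,L}$ and annihilates $\phi_{F_2,L'}$ by part 1; in the inductive step one applies the single-position identity to $\phi_{F_1,L}$ at an index $l_i\notin I_{L'}$ and uses symmetry. Your representation-theoretic argument is valid but imports machinery the appendix is trying to avoid. Your combinatorial fallback would also work cleanly if you simply took WLOG $r_{F_1}>r_{F_2}$ and averaged over $S_{[1,n]\setminus I_{L_2}}$: since every tuple $L'$ of size $r_{F_1}$ has some entry outside $I_{L_2}$, the coefficient lemma applies and $\sum_\sigma\phi_{F_1,\sigma(L_1)}=0$ immediately. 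The ``downward induction on $r_1$'' clause for the case $I_{L_1}\subseteq I_{L_2}$ is unnecessary and, as stated, does not make sense.

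Part 3 has a real gap. The relation you invoke, $p_{F,L}=\phi_{F,L}+(\text{terms from unlabelling})$, is false. From the definition, $\phi_{F,L}=p_{F,L}+\sum_{L'\neq L}c(L,L')\,p_{F,L'}$, and the nonzero terms on the right have $L'$ with entries outside $I_L$ but still at the \emph{same} asymmetry level $r_F$; they are not lower-order. What is true (and what the paper proves as Lemma~\ref{phiflreexpressionlemma}) is that, \emph{modulo} lower-order terms, $\phi_{F,L}$ equals a specific linear combination of $p_{F,L^\pi}$ over all permutations $\pi\in S_r$, the coefficients forming an $r!\times r!$ matrix $M$. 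Passing from the $p$-basis to the $\phi$-basis at a fixed level therefore requires inverting $M$, and the paper does this by exhibiting an explicit PSD decomposition showing $M\succ 0$. Your ``traceless projection'' step (which corresponds to the paper's minimization argument for adjusting coefficients once one is already in the $\phi$-span) does not address this: you first need to know that every polynomial of index degree $\le n/2$ lies in the $\phi$-span, and that is exactly the content of the $M$-invertibility step you have skipped.
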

\begin{proof}
To prove this theorem, we need the following key fact about the coefficients $c(L,L')$:
\begin{lemma}\label{coefficientequationlemma}
For all tuples of indices $L,L'$ of size $r$ and all $i \in [1,r]$,
\[
\sum_{a \in ([1,n] \setminus I_{L'}) \cup \{l'_i\})}{c(L,(l'_1,\dots,l'_{i-1},a,l'_{i+1},\dots,l'_r))} = 0
\]
\end{lemma}
\begin{proof}
If there exist $j,j'$ such that $j' \neq j$, $j' \neq i$, and $l'_{j'} = l_j$ then 
\[
\forall a \in ([1,n] \setminus I_{L'}) \cup \{l'_i\}, c(L,(l'_1,\dots,l'_{i-1},a,l'_{i+1},\dots,l'_r)) = 0
\]
Otherwise, take 
$L'_2 = (l'_1,\dots,l'_{i-1},l_i,l'_{i+1},\dots,l'_r)$, let $z = |\{j \neq i: l'_j \neq l_j\}|$, and observe that
\begin{align*}
&\sum_{a \in ([1,n] \setminus I_{L'}) \cup \{l'_i\}}{c(L,(l'_1,\dots,l'_{i-1},a,l'_{i+1},\dots,l'_r))} = \\
&c(L,L'_2) + \sum_{a \in [1,n] \setminus (I_L \cup I_{L'_2})}{c(L,(l'_1,\dots,l'_{i-1},a,l'_{i+1},\dots,l'_r))} = \\
&\frac{(-1)^{z}}{\prod_{j=0}^{z-1}{(n-r-j)}} + (n - r - z)\frac{(-1)^{z+1}}{\prod_{j=0}^{z}{(n-r-j)}} = 0
\end{align*}
We can now prove the first statement of Theorem \ref{phipropertiestheorem}. Observe that
\begin{align*}
\sum_{\sigma \in S_{[1,n] \setminus I}}{\phi_{F,\sigma(L)}} &= \sum_{\sigma \in S_{[1,n] \setminus I}}{\sum_{L'}{c(\sigma(L),L')p_{F,L'}}} \\
&=\sum_{L'}{\sum_{\sigma \in S_{[1,n] \setminus I}}{c(L,\sigma^{-1}(L'))p_{F,L'}}}
\end{align*}
where the second equality follows from the fact that for all permutations $\sigma$, $c(L,L') = c(\sigma(L),\sigma(L'))$. Thus, it is sufficient to show that for all $L,L'$ and all $I \subsetneq I_L$, $\sum_{\sigma \in S_{[1,n] \setminus I}}{c(L,\sigma(L'))} = 0$. This can be shown as follows. Choose an index $i$ such that $l'_i \notin I$ (where $L' = (l'_1,\dots,l'_r)$) and observe that
\begin{align*}
&\sum_{\sigma \in S_{[1,n] \setminus I}}{c(L,\sigma(L'))} = \\
&\sum_{\sigma_2 \in S_{[1,n] \setminus (I \cup \{l'_i\})}}{\left(\sum_{a \in ([1,n] \setminus \sigma_2(I_{L'})) \cup \{l'_i\}}
{c(L,(\sigma_2(l'_1),\dots,\sigma_2(l'_{i-1}),a,\sigma_2(l'_{i+1}),\dots,\sigma_2(l'_{k}))))}\right)}
\end{align*}
which is $0$ by Lemma \ref{coefficientequationlemma}.
\end{proof}
We now prove the second statement of Theorem \ref{phipropertiestheorem}, which says that if $F_1$ and $F_2$ are flags such that $r_{F_1} \neq r_{F_2}$ then for any $\tilde{E}$ which is symmetric under permutations of $[1,n]$, 
for any tuples $L,L'$ of sizes $r_{F_1}$ and $r_{F_2}$, $\tilde{E}[\phi_{F_1,L} \cdot \phi_{F_2,L'}] = 0$.

Without loss of generality, assume that $r_{F_1} < r_{F_2}$. We prove this statement by induction on $|I_{L} \setminus I_{L'}|$. For the base case, if $I_{L} \subseteq I_{L'}$ then averaging over permutations of $[1,n] \setminus I_{L}$, $\phi_{F_1,L}$ is unchanged but by statement one of Theorem \ref{phipropertiestheorem}, $\phi_{F_2,L'}$ becomes $0$. Thus, by symmetry, $\tilde{E}[\phi_{F_1,L} \cdot \phi_{F_2,L'}] = 0$.
If $I_{L}$ is not a subset of $I_{L'}$ then there must be some $l_i$ which is in $I_{L}$ but not in $I_{L'}$. Now if $L = \{l_1,\dots,l_{r_{F_1}}\}$, applying statement one of Theorem \ref{phipropertiestheorem}, we obtain that 
\[
\sum_{a \in ([1,n] \setminus I_{L}) \cup \{l_i\})}{\phi_{F_1,\{l_1,\dots,l_{i-1},a,l_{i+1},\dots,l_{r_{F_1}}\}}} = 0
\]
This implies that 
\[
\sum_{a \in I_{L'} \setminus I_L}{\tilde{E}[\phi_{F_1,\{l_1,\dots,l_{i-1},a,l_{i+1},\dots,l_r\}} \cdot \phi_{F_2,L'}]} +\sum_{a \in ([1,n] \setminus (I_L \cup I_{L'})) \cup \{l_i\}}{\tilde{E}[\phi_{F_1,\{l_1,\dots,l_{i-1},a,l_{i+1},\dots,l_r\}} \cdot \phi_{F_2,L'}]} = 0
\]
By the inductive hypothesis, the terms in the left sum are all $0$. By symmetry, the right sum is equal to $(n - |I_L \cup I_{L'}| + 1)\tilde{E}[\phi_{F_1,L} \cdot \phi_{F_2,L'}]$. Thus, $\tilde{E}[\phi_{F_1,L} \cdot \phi_{F_2,L'}] = 0$, as needed.

Finally, we prove the third statement of Theorem \ref{phipropertiestheorem}, which says that for all polynomials $g$ such that $indexdeg(g) \leq \frac{n}{2}$, we can write $g = \sum_{F,L}{b_{F,L}\phi_{F,L}}$ where for all flags $F$, tuples of indices $L = (l_1,\dots,l_{r_F})$, and $i \in [1,r_F]$, 
\[
\sum_{a \in ([1,n] \setminus I_L) \cup l_i}{b_{F,\{l_1,\dots,l_{i-1},a,l_{i+1},\dots,l_{r_F}\}}} = 0
\]
To prove this statement, we do the following
\begin{enumerate}
\item Show that for any polynomial $g$ which is a linear combination of the polynomials $\{\phi_{F,L}\}$, we can write $g = \sum_{F,L}{b_{F,L}\phi_{F,L}}$ where for all $F$, $L = (l_1,\dots,l_{r_F})$, and $i \in [1,r_F]$, 
\[
\sum_{a \in ([1,n] \setminus I_L) \cup l_i}{b_{F,(l_1,\dots,l_{i-1},a,l_{i+1},\dots,l_{r_F})}} = 0
\]
\item Show that all polynomials $g$ are linear combinations of the polynomials $\{\phi_{F,L}\}$.
\end{enumerate}
For the first part, choose the coefficients $\{b_{F,L}\}$ to minimize the expression $\sum_{F,L}{b_{F,L}^2}$. We claim that for all $F$, $L = (l_1,\dots,l_{r_F})$, and $i \in [1,r_F]$, 
\[
\sum_{a \in ([1,n] \setminus I_L) \cup l_i}{b_{F,(l_1,\dots,l_{i-1},a,l_{i+1},\dots,l_{r_F})}} = 0
\]
To see this, note that for all $F$, $L = (l_1,\dots,l_{r_F})$, and $i \in [1,r_F]$, by the first statement of Theorem \ref{phipropertiestheorem} 
\[
\sum_{a \in ([1,n] \setminus I_L) \cup l_i}{\phi_{F,(l_1,\dots,l_{i-1},a,l_{i+1},\dots,l_{r_F})}} = 0
\]
Thus, we may shift all of the coefficients $\{b_{F,\{l_1,\dots,l_{i-1},a,l_{i+1},\dots,l_{r_F}\}}: a \in ([1,n] \setminus I_L) \cup l_i\}$ up or down by a constant without affecting 
$\sum_{F,L}{b_{F,L}\phi_{F,L}}$. If we had that 
\[
\sum_{a \in ([1,n] \setminus I_L) \cup l_i}{b_{F,(l_1,\dots,l_{i-1},a,l_{i+1},\dots,l_{r_F})}} \neq 0
\]
then we would be able to reduce $\sum_{F,L}{b_{F,L}^2}$, contradicting the minimality of $\sum_{F,L}{b_{F,L}^2}$.

For the second part, it is sufficient to show that for all $F,L$ such that $F$ has at most $\frac{n}{2}$ vertices, we can express $p_{F,L}$ as a linear combination of terms of the form $\phi_{F,L'}$ where $L'$ is a permutation of $L$ and terms of the form $p_{F',L'}$ where $F'$ has a smaller order than $F$. We defer this part of the proof to subsection \ref{decomposingpflsubsection} because it requires ideas from the proof of Theorem \ref{symmetrysquaredecompositiontheorem} which is given in the next subsection.
\end{proof}
\subsection{Explicit decomposition of $\tilde{E}[g^2]$}
We are now ready to decompose $\tilde{E}[g^2]$.
\begin{theorem}\label{symmetrysquaredecompositiontheorem}
If $g = \sum_{F,L}{b_{F,L}\phi_{F,L}}$ has index degree at most $\frac{n}{2}$ and for all $F$, $L = (l_1,\dots,l_{r_F})$, and $i \in [1,r_F]$, 
\[
\sum_{a \in ([1,n] \setminus I_L) \cup l_i}{b_{F,(l_1,\dots,l_{i-1},a,l_{i+1},\dots,l_{r_F})}} = 0
\]
then
\begin{align*}
&\tilde{E}[g^2] = \tilde{E}\left[\sum_{F,F': r_{F'} = r_F = r}{\sum_{L,L'}{b_{F,L}\phi_{F,L}b_{F',L'}\phi_{F',L'}}}\right] =\\
&\sum_{A \subseteq [1,r]}{\sum_{L:\forall i < i' \notin A, l_i < l_{i'},\forall i < i' \in A, l_i < l_{i'}}{}}\Big(\\ 
&\frac{\tilde{E}\left[\left(\sum_{F:r_F = r}{\sum_{\pi_0 \in S_{[1,r] \setminus A}}{\left(\sum_{\pi \in S_{r}: \forall i \notin A, \pi(i) = \pi_0(i)}{b_{F,(l_{\pi(1)},\dots,l_{\pi(r)})}}\right)
\left(\sum_{\pi \in S_{r}: \forall i \notin A, \pi(i) = \pi_0(i)}{\phi_{F,(l_{\pi(1)},\dots,l_{\pi(r)})}}\right)}}\right)^2\right]}{|A|!(n-r)(n-r-1)\dots(n-r-|A|+1)}\Big)
\end{align*}
\end{theorem}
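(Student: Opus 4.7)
The plan is to expand $\tilde{E}[g^2]$ into a double sum, kill as many cross terms as possible using symmetry, and reorganize what remains into the displayed sum of squares. First I would write
\[
\tilde{E}[g^2] = \sum_{F,F',L,L'} b_{F,L}b_{F',L'}\,\tilde{E}[\phi_{F,L}\phi_{F',L'}]
\]
and apply Theorem \ref{phipropertiestheorem}(2) to eliminate every cross term with $r_F \neq r_{F'}$, yielding a block-diagonal decomposition $\tilde{E}[g^2] = \sum_r \tilde{E}[g_r^2]$ where $g_r = \sum_{F:\, r_F = r}\sum_L b_{F,L}\phi_{F,L}$. The remainder of the argument is then carried out at each fixed asymmetry level $r$.

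Since $\tilde{E}$ is $S_n$-symmetric and each $\phi_{F,L}$ is $S_{[1,n]\setminus I_L}$-invariant, $\tilde{E}[\phi_{F,L_1}\phi_{F',L_2}]$ depends only on $(F,F')$ and on the ``overlap pattern'' between $L_1$ and $L_2$. I would parametrize these patterns by a subset $A \subseteq [1,r]$ of positions at which $L_1$ and $L_2$ are allowed to disagree, together with a permutation $\pi_0 \in S_{[1,r] \setminus A}$ identifying the matched positions outside $A$, and pick as canonical orbit representative the tuple $L$ whose entries are increasing both on $A$ and on its complement. Admissible pairs $(L_1,L_2)$ then correspond to pairs of permutations $\pi_1,\pi_2 \in S_r$ agreeing with $\pi_0$ on $[1,r]\setminus A$, and the contribution of these pairs to $\tilde{E}[g_r^2]$ should pair up into exactly the squared form displayed in the theorem.

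The combinatorial factor $|A|!(n-r)(n-r-1)\cdots(n-r-|A|+1)$ should then emerge from overcounting: $|A|!$ compensates for the redundant orderings of $L_2$'s entries on $A$, and the descending factorial counts the number of distinct canonical $L$'s lying in a single $S_n$-orbit (i.e.\ the number of ways to choose fresh indices at the $A$-positions). The zero-sum hypothesis on the $b_{F,L}$ is what ensures that ``partial overlap'' cross terms, namely those in which $L_2$ uses a fresh index at some position \emph{outside} $A$, vanish after the inner sum over that index, so only the clean canonical-orbit contributions survive to form the displayed squares.

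The main obstacle, and the step requiring the most care, is the simultaneous verification of the cancellations and the normalization. I would handle this by a reverse induction on $|A|$: start at $A = [1,r]$ (where the parametrization is fully permissive), and at each step peel off one position from $A$ by applying $\sum_a b_{F,(l_1,\ldots,a,\ldots,l_r)} = 0$ to isolate those pairs in which $L_1$ and $L_2$ happen to share a common index at that position. The delicate point is that a single $S_n$-orbit of pairs can be reached from several canonical representatives when $L$ has repeated structure across $A$ and $[1,r] \setminus A$, so tracking the interaction between the stabilizer actions on $A$ and on its complement, together with the $|A|!$ overcount, requires close attention. Once that bookkeeping is under control and is combined with the averaging identities of Theorem \ref{phipropertiestheorem}(1) to collapse the off-diagonal pieces, all surviving terms rearrange into the claimed formula.
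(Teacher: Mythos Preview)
Your plan is essentially the paper's approach: block-diagonalize by $r$ via Theorem~\ref{phipropertiestheorem}(2), fix $L$, classify $L'$ by the set $A$ of positions carrying indices outside $I_L$, and use the zero-sum identities to collapse those sums into sums over permutations of $L$. Two small points to sharpen. First, the partial-overlap terms do not literally \emph{vanish}: the zero-sum identity on $b$ (and the parallel one on $\phi$ from Theorem~\ref{phipropertiestheorem}(1)) converts the sum over fresh indices at the $A$-positions into $(-1)^{|A|}$ times the sum over permutations $\pi\in S_r$ extending $\pi_0$ (this is the paper's Lemma~\ref{generalblemma}), and the two $(-1)^{|A|}$ factors cancel. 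Second, the paper separates the two pieces of the normalizing constant more sharply than you do: the descending factorial $(n-r)\cdots(n-r-|A|+1)$ comes, as you say, from the count of fresh-index choices combined with the symmetry of $\tilde{E}$, but the $|A|!$ enters through a subsequent symmetrization of the $\phi_{F,L}$ factor over permutations of $\{l_i:i\in A\}$, which is needed to convert the asymmetric expression $\tilde{E}\bigl[\phi_{F,L}\cdot(\sum_\pi b_{F',\pi(L)})(\sum_\pi\phi_{F',\pi(L)})\bigr]$ into an honest square after summing over $F,L$.
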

\begin{proof}
To analyze $\tilde{E}[g^2] = \sum_{F,L,F',L'}{\tilde{E}[b_{F,L}\phi_{F,L}b_{F',L'}\phi_{F',L'}]}$, we take each $F,L$ and analyze 
\[
\sum_{F',L'}{\tilde{E}[b_{F,L}\phi_{F,L}b_{F',L'}\phi_{F',L'}]}
\]
Letting $r = r_{F}$, by the second statement of Theorem \ref{phipropertiestheorem}, we only need to consider $F'$ of order $r$. 

For $r = 0$, 
\[
\sum_{F':r_{F'} = 0}{\tilde{E}[b_{F,\emptyset}\phi_{F,\emptyset}b_{F',\emptyset}\phi_{F',\emptyset}]} = \tilde{E}\left[b_{F,\emptyset}\phi_{F,\emptyset}\left(\sum_{F':r_{F'} = 0}{b_{F',\emptyset}\phi_{F',\emptyset}}\right)\right]
\]
Thus, 
\[
\sum_{F,F':r_{F'} = r_F = 0}{\tilde{E}[b_{F,\emptyset}\phi_{F,\emptyset}b_{F',\emptyset}\phi_{F',\emptyset}]} = \tilde{E}\left[\left(\sum_{F:r_F = 0}{b_{F,\emptyset}\phi_{F,\emptyset}}\right)^2\right]
\]

For $r = 1$, for each $F$ and $i \in [1,n]$ we have that
\begin{align*}
&\sum_{F',i':r_{F'} = 1}{\tilde{E}[b_{F,(i)}\phi_{F,(i)}b_{F',(i')}\phi_{F',(i')}]} = \\
&\tilde{E}\left[b_{F,(i)}\phi_{F,(i)}\sum_{F':r_{F'} = 1}\left(b_{F',(i)}\phi_{F',(i)} + \sum_{i' \neq i}{b_{F',(i')}\phi_{F',(i')}}\right)\right] 
\end{align*}
The key idea is that by using the properties of the functions $\{\phi_{F',(i')}\}$ and the coefficients $\{b_{F',(i')}\}$, we can transform terms with an index $i' \neq i$ into terms which only have the index $i$. By the first statement of Theorem \ref{phipropertiestheorem}, for all $F'$ of asymmetry level $1$, $\sum_{i' \neq i}{\phi_{F',(i')}} = -\phi_{F',(i)}$. By symmetry, this implies that for all flags $F'$ of asymmetry level $1$, 
\[
\tilde{E}[\phi_{F,(i)}\phi_{F',(i')}] = -\frac{1}{n-1}\tilde{E}[\phi_{F,(i)}\phi_{F',(i)}]
\]
Since $\sum_{i'}{b_{F',(i')}} = 0$ for all flags $F'$ of asymmetry level $1$, 
\[
\tilde{E}\left[\phi_{F,(i)}\sum_{i' \neq i}{b_{F',(i')}\phi_{F',(i')}}\right] = \tilde{E}\left[-\phi_{F,(i)}\frac{\sum_{i' \neq i}{b_{F',(i')}}}{n-1}\phi_{F',(i)}\right] = 
\tilde{E}\left[\frac{b_{F',(i)}\phi_{F,(i)}\phi_{F',(i)}}{n-1}\right]
\]
Substituting this equation into the equation above,
\begin{align*}
&\sum_{F', i':r_{F'} = 1}{\tilde{E}[b_{F,(i)}\phi_{F,(i)}b_{F',(i')}\phi_{F',(i')}]} = \\
&\left(1 + \frac{1}{n-1}\right)\tilde{E}\left[b_{F,(i)}\phi_{F,(i)}\left(\sum_{F':r_{F'} = 1}{b_{F',(i)}\phi_{F',(i)}}\right)\right]
\end{align*}
Summing this equation over all $F,i$, our final result for $r = 1$ is 
\begin{align*}
&\sum_{F,F',i,i':r_{F'} = r_F = 1}{\tilde{E}[b_{F,(i)}\phi_{F,(i)}b_{F',(i')}\phi_{F',(i')}]} = \\
&\frac{n}{n-1}\sum_{i=1}^{n}{\tilde{E}\left[\left(\sum_{F:r_F = 1}{b_{F,(i)}\phi_{F,(i)}}\right)^2\right]}
\end{align*}

We can use similar ideas for $r=2$, though it is somewhat more complicated. For each $F,i,j$ such that $i \neq j$ we have that
\begin{align*}
&\sum_{F',i',j':j' \neq i', r_{F'} = 2}{\tilde{E}[b_{F,(i,j)}\phi_{F,(i,j)}b_{F',(i',j')}\phi_{F',(i',j')}]} = \\
&\sum_{F': r_{F'} = 2}\tilde{E}\Big[b_{F,(i,j)}\phi_{F,(i,j)}\Big(b_{F',(i,j)}\phi_{F',(i,j)} + b_{F',(j,i)}\phi_{F',(j,i)} + \\
&\sum_{j' \notin (i,j)}{\left(b_{F',(i,j')}\phi_{F',(i,j')} + b_{F',(j',i)}\phi_{F',(j',i)}\right)} + \sum_{i' \notin (i,j)}{\left(b_{F',(i',j)}\phi_{F',(i',j)} + b_{F',(j,i')}\phi_{F',(j,i')}\right)} + \\
&\sum_{i' \notin (i,j), j' \notin \{i,j,i'\}}{b_{F',(i',j')}\phi_{F',(i',j')}}\Big)\Big]
\end{align*}
Similar to the analysis for $r=1$, we use the properties of the functions $\{\phi_{F',(i',j')}\}$ and the coefficients $\{b_{F',(i',j')}\}$ to transform terms where $i' \notin \{i,j\}$ or $j' \notin \{i,j\}$ into terms which only have the indices $i,j$. The results are as follows:
\begin{lemma}\label{requalstwolemma} \ 
\begin{enumerate}
\item $\sum_{j' \notin (i,j)}{\tilde{E}[\phi_{F,(i,j)}b_{F',(i,j')}\phi_{F',(i,j')}]} = \frac{1}{n-2}\tilde{E}[\phi_{F,(i,j)}b_{F',(i,j)}\phi_{F',(i,j)}]$
\item $\sum_{j' \notin (i,j)}{\tilde{E}[\phi_{F,(i,j)}b_{F',(j',i)}\phi_{F',(j',i)}]} = \frac{1}{n-2}\tilde{E}[\phi_{F,(i,j)}b_{F',(j,i)}\phi_{F',(j,i)}]$
\item $\sum_{i' \notin (i,j)}{\tilde{E}[\phi_{F,(i,j)}b_{F',(i',j)}\phi_{F',(i',j)}]} = \frac{1}{n-2}\tilde{E}[\phi_{F,(i,j)}b_{F',(i,j)}\phi_{F',(i,j)}]$
\item $\sum_{i' \notin (i,j)}{\tilde{E}[\phi_{F,(i,j)}b_{F',(i,j')}\phi_{F',(i,j')}]} = \frac{1}{n-2}\tilde{E}[\phi_{F,(i,j)}b_{F',(j,i)}\phi_{F',(j,i)}]$
\item 
\[
\sum_{i' \notin (i,j), j' \notin \{i,j,i'\}}{\tilde{E}\left[\phi_{F,(i,j)}b_{F',(i',j')}\phi_{F',(i',j')}\right]} = 
\tilde{E}\left[\frac{b_{F',(i,j)} + b_{F',(j,i)}}{(n-2)(n-3)}\phi_{F,(i,j)}(\phi_{F',(i,j)} + \phi_{F',(j,i)})\right]
\]
\end{enumerate}
\end{lemma}
\begin{proof}
By the first statement of Theorem \ref{phipropertiestheorem}, for all $F'$ of asymmetry level $2$, $\sum_{j' \neq i,j}{\phi_{F',(i,j')}} = -\phi_{F',(i,j)}$. By symmetry, 
\[
\forall j' \notin \{i,j\}, \tilde{E}[\phi_{F,(i,j)}\phi_{F',(i,j')}] = \frac{1}{n-2}\tilde{E}[\phi_{F,(i,j)}\phi_{F',(i,j)}]
\]
Since $\sum_{j' \neq i,j}{b_{F',(i,j')}} = -b_{F',(i,j)}$, 
\begin{align*}
\sum_{j' \notin (i,j)}{\tilde{E}[\phi_{F,(i,j)}b_{F',(i,j')}\phi_{F',(i,j')}]} &= -\frac{1}{n-2}\left(\sum_{j' \notin (i,j)}{b_{F',(i,j')}}\right)\tilde{E}[\phi_{F,(i,j)}\phi_{F',(i,j)}] \\
&= \frac{1}{n-2}\tilde{E}[\phi_{F,(i,j)}b_{F',(i,j)}\phi_{F',(i,j)}]
\end{align*}
The second, third, and fourth statements can be proved using similar logic. For the final statement, by statement one of Theorem \ref{phipropertiestheorem}, for all $F'$ of asymmetry level $2$, 
\[
\sum_{i' \notin (i,j), j' \notin \{i,j,i'\}}{\phi_{F',(i',j')}} = -\sum_{i' \neq i,j}{(\phi_{F',\{i',i\}} + \phi_{F',(i',j)})}
= \phi_{F',(j,i)} + \phi_{F',(i,j)}
\]
By symmetry, this implies that 
\[
\forall i' \notin (i,j), \forall j' \notin \{i,j,i'\}, \tilde{E}[\phi_{F,(i,j)}\phi_{F',(i',j')}] = \frac{1}{(n-2)(n-3)}\tilde{E}[\phi_{F,(i,j)}(\phi_{F',(i,j)} + \phi_{F',(j,i)})]
\]
We also have that
\[
\sum_{i' \notin (i,j), j' \notin \{i,j,i'\}}{b_{F',(i',j')}} = -\sum_{i' \neq i,j}{(b_{F',\{i',i\}} + b_{F',(i',j)})} = b_{F',(j,i)} + b_{F',(i,j)}
\]
Combining these equations, 
\begin{align*}
\sum_{i' \notin (i,j), j' \notin \{i,j,i'\}}{\tilde{E}\left[\phi_{F,(i,j)}b_{F',(i',j')}\phi_{F',(i',j')}\right]} &= 
\left(\sum_{i' \notin (i,j), j' \notin \{i,j,i'\}}{b_{F',(i',j')}}\right)\tilde{E}\left[\phi_{F,(i,j)}\phi_{F',(i',j')}\right]\\
&=\tilde{E}\left[\frac{b_{F',(i,j)} + b_{F',(j,i)}}{(n-2)(n-3)}\phi_{F,(i,j)}(\phi_{F',(i,j)} + \phi_{F',(j,i)})
\right]
\end{align*}
as needed
\end{proof}
Plugging the equations of Lemma \ref{requalstwolemma} into the expression for 
\[\sum_{F',i',j':j' \neq i', r_{F'} = 2}{\tilde{E}[b_{F,(i,j)}\phi_{F,(i,j)}b_{F',(i',j')}\phi_{F',(i',j')}]}
\]
we have that
\begin{align*}
&\sum_{F',i',j':j' \neq i', r_{F'} = 2}{\tilde{E}[b_{F,(i,j)}\phi_{F,(i,j)}b_{F',(i',j')}\phi_{F',(i',j')}]} = \\
&\sum_{F': r_{F'} = 2}\tilde{E}\Big[b_{F,(i,j)}\phi_{F,(i,j)}\Big(b_{F',(i,j)}\phi_{F',(i,j)} + b_{F',(j,i)}\phi_{F',(j,i)} + \\
&\sum_{j' \notin (i,j)}{\left(b_{F',(i,j')}\phi_{F',(i,j')} + b_{F',(j',i)}\phi_{F',(j',i)}\right)} + \sum_{i' \notin (i,j)}{\left(b_{F',(i',j)}\phi_{F',(i',j)} + b_{F',(j,i')}\phi_{F',(j,i')}\right)} + \\
&\sum_{i' \notin (i,j), j' \notin \{i,j,i'\}}{b_{F',(i',j')}\phi_{F',(i',j')}}\Big)\Big] = \\
&\sum_{F': r_{F'} = 2}\tilde{E}\Big[b_{F,(i,j)}\phi_{F,(i,j)}\Big(b_{F',(i,j)}\phi_{F',(i,j)} + b_{F',(j,i)}\phi_{F',(j,i)} + \\
&\frac{2}{n-2}b_{F',(i,j)}\phi_{F',(i,j)} + \frac{2}{n-2}b_{F',(j,i)}\phi_{F',(j,i)} + 
\frac{b_{F',(i,j)} + b_{F',(j,i)}}{(n-2)(n-3)}(\phi_{F',(i,j)} + \phi_{F',(j,i)})\Big)\Big] =\\
&\sum_{F': r_{F'} = 2}\tilde{E}\Big[b_{F,(i,j)}\phi_{F,(i,j)}\Big(\frac{n}{n-2}b_{F',(i,j)}\phi_{F',(i,j)} + \frac{n}{n-2}b_{F',(j,i)}\phi_{F',(j,i)} + \\
&\frac{b_{F',(i,j)} + b_{F',(j,i)}}{(n-2)(n-3)}(\phi_{F',(i,j)} + \phi_{F',(j,i)})\Big)\Big]
\end{align*}
Summing this equation over all $F,i,j$ where $j \neq i$,
\begin{align*}
&\sum_{F,i,j,F',i',j':j \neq i, j' \neq i', r_F = r_{F'} = 2}{\tilde{E}[b_{F,(i,j)}\phi_{F,(i,j)}b_{F',(i',j')}\phi_{F',(i',j')}]} = \\
&\sum_{F,i,j,F',i',j':i < j, j' \neq i', r_F = r_{F'} = 2}{\tilde{E}[\left(b_{F,(i,j)}\phi_{F,(i,j)} + b_{F,(j,i)}\phi_{F,(j,i)}\right)b_{F',(i',j')}\phi_{F',(i',j')}]} = \\
&\sum_{i,j: i < j}{\tilde{E}\Big[\sum_{F:r_F = 2}{\left(b_{F,(i,j)}\phi_{F,(i,j)} + b_{F,(j,i)}\phi_{F,(j,i)}\right)}} \\
&\Big(\frac{n}{n-2}\sum_{F':r_{F'} = 2}{\left(b_{F',(i,j)}\phi_{F',(i,j)} + b_{F',(j,i)}\phi_{F',(j,i)}\right)} + \\
&\sum_{F':r_{F'} = 2}{\frac{b_{F',(i,j)} + b_{F',(j,i)}}{(n-2)(n-3)}(\phi_{F',(i,j)} + \phi_{F',(j,i)})}\Big)\Big] = \\
&\sum_{i,j: i < j}{\tilde{E}\left[\left(\sum_{F:r_F = 2}{\left(b_{F,(i,j)}\phi_{F,(i,j)} + b_{F,(j,i)}\phi_{F,(j,i)}\right)}\right)^2\right]} + \\
&\frac{1}{(n-2)(n-3)}\sum_{i,j: i < j}{\tilde{E}\Big[\left(\sum_{F:r_F = 2}{\left(b_{F,(i,j)}\phi_{F,(i,j)} + b_{F,(j,i)}\phi_{F,(j,i)}\right)}\right)} \\
&{\left(\sum_{F':r_{F'} = 2}{(b_{F',(i,j)} + b_{F',(j,i)})(\phi_{F',(i,j)} + \phi_{F',(j,i)})}\right)\Big]}
\end{align*}
Note that there is a mismatch between 
\[
\sum_{F:r_F = 2}{\left(b_{F,(i,j)}\phi_{F,(i,j)} + b_{F,(j,i)}\phi_{F,(j,i)}\right)}
\]
and 
\[
\sum_{F':r_{F'} = 2}{(b_{F',(i,j)} + b_{F',(j,i)})(\phi_{F',(i,j)} + \phi_{F',(j,i)})}
\]
To handle this, we use symmetry with respect to swapping $i$ and $j$. By this symmetry, 
\begin{align*}
&\tilde{E}[\phi_{F,(i,j)}(\phi_{F',(i,j)} + \phi_{F',(j,i)})] = \tilde{E}[\phi_{F,(j,i)}(\phi_{F',(i,j)} + \phi_{F',(j,i)})] \\
&= \frac{1}{2}\tilde{E}[(\phi_{F,(i,j)} + \phi_{F,(j,i)})(\phi_{F',(i,j)} + \phi_{F',(j,i)})]
\end{align*}
This implies that for all $i < j$, 
\begin{align*}
&\tilde{E}\Big[\left(\sum_{F:r_F = 2}{\left(b_{F,(i,j)}\phi_{F,(i,j)} + b_{F,(j,i)}\phi_{F,(j,i)}\right)}\right) \\
&\left(\sum_{F':r_{F'} = 2}{(b_{F',(i,j)} + b_{F',(j,i)})(\phi_{F',(i,j)} + \phi_{F',(j,i)})}\right)\Big] = \\
&\tilde{E}\Big[\left(\sum_{F:r_F = 2}{(b_{F,(i,j)} + b_{F,(j,i)})\frac{\phi_{F,(i,j)} + \phi_{F,(j,i)}}{2}}\right) \\
&\left(\sum_{F':r_{F'} = 2}{(b_{F',(i,j)} + b_{F',(j,i)})(\phi_{F',(i,j)} + \phi_{F',(j,i)})}\right)\Big]
\end{align*}
Our final result for $r = 2$ is 
\begin{align*}
&\sum_{F,i,j,F',i',j':j \neq i, j' \neq i', r_F = r_{F'} = 2}{\tilde{E}[b_{F,(i,j)}\phi_{F,(i,j)}b_{F',(i',j')}\phi_{F',(i',j')}]} = \\
&\sum_{i,j:i<j}{\tilde{E}\left[\frac{n}{n-2}\left(\sum_{F:r_F = 2}{\left(b_{F,(i,j)}\phi_{F,(i,j)} + b_{F,(j,i)}\phi_{F,(j,i)}\right)}\right)^2\right]} + \\
&\sum_{i,j:i<j}{\tilde{E}\left[\frac{1}{2(n-2)(n-3)}\left(\sum_{F:r_F = 2}{\left(b_{F,(i,j)} + b_{F,(j,i)}\right)\left(\phi_{F,(i,j)} + \phi_{F,(j,i)}\right)}\right)^2\right]}
\end{align*}
To see the general pattern, we prove the following lemmas and corollaries.
\begin{lemma}\label{generalblemma}
Let $L$ be an ordered set of size $r$, let $A$ be a subset of $[1,r]$, and let $\pi_0$ be a permutation of $[1,r] \setminus A$. For all flags $F'$ of order $r$, 
\[
\sum_{j_1,j_2,\dots,j_r:\forall i \notin A, j_i = l_{\pi_0(i)}, \{j_i: i \in A\} \subseteq [1,n] \setminus I_L, \atop \{j_i: i \in A\} \text{ are all distinct}, 
}{b_{F',({j_1},\dots,{j_r})}} = (-1)^{|A|}\sum_{\pi \in S_{r}: \forall i \notin A, \pi(i) = \pi_0(i)}{b_{F',(l_{\pi(1)},\dots,l_{\pi(r)})}}
\]
\end{lemma}
\begin{proof}
Choose an index $i \in A$. Using the equation $\sum_{j_i \notin \{j_1,\dots,j_{i-1},j_{i+1},\dots,j_r\}}{b_{F',({j_1},\dots,{j_r})}} = 0$, we obtain that for all 
$j_1,\dots,j_{i-1},j_{i+1},\dots,j_r$, 
\[
\sum_{j_i \notin I_L \cup \{j_1,\dots,j_{i-1},j_{i+1},\dots,j_r\}}{b_{F',({j_1},\dots,{j_r})}} = 
-\sum_{j': j' \in I_L \setminus  \{j_1,\dots,j_{i-1},j_{i+1},\dots,j_r\}}{b_{F',(j_1,\dots,j_{i-1},j',j_{i+1},\dots,j_r)}}
\]
This replaces the sum over $j_i$ by an index in $I_L \setminus \{j_1,\dots,j_r\}$. Applying this logic repeatedly, the result follows.
\end{proof}
\begin{lemma}
Let $L$ be an ordered set of size $r$, let $A$ be a subset of $[1,r]$, and let $\pi_0$ be a permutation of $[1,r] \setminus A$. For all flags $F'$ of order $r$, 
\[
\sum_{j_1,j_2,\dots,j_r:\forall i \notin A, j_i = l_{\pi_0(i)}, \{j_i: i \in A\} \subseteq [1,n] \setminus I_L, \atop \{j_i: i \in A\} \text{ are all distinct}, 
}{\phi_{F',({j_1},\dots,{j_r})}} = (-1)^{|A|}\sum_{\pi \in S_{r}: \forall i \notin A, \pi(i) = \pi_0(i)}{\phi_{F',(l_{\pi(1)},\dots,l_{\pi(r)})}}
\]
\end{lemma}
\begin{proof}
This can be proved in exactly the same way.
\end{proof}
\begin{corollary}
Let $L$ be an ordered set of size $r$, let $A$ be a subset of $[1,r]$, and let $\pi_0$ be a permutation of $[1,r] \setminus A$. For any $\tilde{E}$ which is symmetric, any flags $F,F'$ of order $r$, and any $j_1,\dots,j_r$ such that $j_i = l_{\pi_0(i)}$ whenever $i \notin A$, $j_i \in [1,n] \setminus I_L$ whenever $i \in A$, and $\{j_i: i \in A\}$ are all distinct, 
\[
\tilde{E}[\phi_{F,L}\phi_{F',\{j_1,j_2,\dots,j_r\}}] = 
\frac{(-1)^{|A|}\tilde{E}\left[\phi_{F,L}\left(\sum_{\pi \in S_{r}: \forall i \notin A, \pi(i) = \pi_0(i)}{\phi_{F',\{l_{\pi(1)},\dots,l_{\pi(r)}\}}}\right)\right]}{(n-r)(n-r-1)\dots(n-r-|A|+1)}
\]
\end{corollary}
\begin{proof}
This follows from symmetry and the fact that
\begin{align*}
&\sum_{j_1,j_2,\dots,j_r:\forall i \notin A, j_i = l_{\pi_0(i)}, \{j_i: i \in A\} \subseteq [1,n] \setminus I_L, \atop \{j_i: i \in A\} \text{ are all distinct}, 
}{\tilde{E}[\phi_{F,L}\phi_{F',\{j_1,j_2,\dots,j_r\}}]} = \\
&(-1)^{|A|}\tilde{E}\left[\phi_{F,L}\left(\sum_{\pi \in S_{r}: \forall i \notin A, \pi(i) = \pi_0(i)}{\phi_{F',\{l_{\pi(1)},\dots,l_{\pi(r)}\}}}\right)\right]
\end{align*}
\end{proof}
\begin{corollary}
Let $L$ be an ordered set of size $r$, let $A$ be a subset of $[1,r]$, and let $\pi_0$ be a permutation of $[1,r] \setminus A$. For any $\tilde{E}$ which is symmetric and any flags $F,F'$ of order $r$,
\begin{align*}
&\sum_{j_1,j_2,\dots,j_r:\forall i \notin A, j_i = l_{\pi_0(i)}, \{j_i: i \in A\} \subseteq [1,n] \setminus I_L, \atop \{j_i: i \in A\} \text{ are all distinct}, 
}{\tilde{E}[\phi_{F,L}b_{F',\{{j_1},\dots,{j_r}\}}\phi_{F',\{j_1,j_2,\dots,j_r\}}]} = \\
&\frac{\tilde{E}\left[\phi_{F,L}\left(\sum_{\pi \in S_{r}: \forall i \notin A, \pi(i) = \pi_0(i)}{b_{F',\{l_{\pi(1)},\dots,l_{\pi(r)}\}}}\right)\left(\sum_{\pi \in S_{r}: \forall i \notin A, \pi(i) = \pi_0(i)}{\phi_{F',\{l_{\pi(1)},\dots,l_{\pi(r)}\}}}\right)\right]}{(n-r)(n-r-1)\dots(n-r-|A|+1)}
\end{align*}
\end{corollary}
\begin{corollary}
Let $L$ be an ordered set of size $r$ and let $A$ be a subset of $[1,r]$. For any $\tilde{E}$ which is symmetric and any flags $F,F'$ of order $r$,
\begin{align*}
&\sum_{j_1,j_2,\dots,j_r:\exists \pi_0 \in S_{[1,r] \setminus A}: \forall i \notin A, j_i = l_{\pi_0(i)}, \{j_i: i \in A\} \subseteq [1,n] \setminus I_L, \atop \{j_i: i \in A\} \text{ are all distinct}, 
}{\tilde{E}[\phi_{F,L}b_{F',\{{j_1},\dots,{j_r}\}}\phi_{F',\{j_1,j_2,\dots,j_r\}}]} = \\
&\frac{\tilde{E}\left[\phi_{F,L}\sum_{\pi_0 \in S_{[1,r] \setminus A}}\left(\sum_{\pi \in S_{r}: \forall i \notin A, \pi(i) = \pi_0(i)}{b_{F',\{l_{\pi(1)},\dots,l_{\pi(r)}\}}}\right)\left(\sum_{\pi \in S_{r}: \forall i \notin A, \pi(i) = \pi_0(i)}{\phi_{F',\{l_{\pi(1)},\dots,l_{\pi(r)}\}}}\right)\right]}{(n-r)(n-r-1)\dots(n-r-|A|+1)}
\end{align*}
\end{corollary}
To avoid a mismatch, we use the following proposition:
\begin{proposition}
Let $L$ be an ordered set of size $r$, let $A$ be a subset of $[1,r]$, and let $\pi_0$ be a permutation of $[1,r] \setminus A$. For any $\tilde{E}$ which is symmetric and any flags $F,F'$ of order $r$, for all permutations $\pi' \in S_r$ such that $\pi'(i) = i$ whenever $i \notin A$, 
\begin{align*}
&\tilde{E}\left[\phi_{F,\{l_{\pi'(1)},\dots,l_{\pi'(r)}\}}\left(\sum_{\pi \in S_{r}: \forall i \notin A, \pi(i) = \pi_0(i)}{\phi_{F',\{l_{\pi(1)},\dots,l_{\pi(r)}\}}}\right)\right] = \\
&\frac{1}{|A|!}\tilde{E}\left[\left(\sum_{\pi' \in S_{r}: \forall i \notin A, \pi'(i) = i}{\phi_{F,\{l_{\pi'(1)},\dots,l_{\pi'(r)}\}}}\right)\left(\sum_{\pi \in S_{r}: \forall i \notin A, \pi(i) = \pi_0(i)}{\phi_{F',\{l_{\pi(1)},\dots,l_{\pi(r)}\}}}\right)\right]
\end{align*}
\end{proposition}
\begin{proof}
This can be shown by using symmetry with respect to permutations of $\{l_i: i \in A\}$.
\end{proof}
Putting everything together, we obtain the following corollary
\begin{corollary}
Let $A$ be a subset of $[1,r]$. For any $\tilde{E}$ which is symmetric,
\begin{align*}
&\tilde{E}\left[\sum_{F,F': r_{F'} = r_F = r}{\sum_{L,L':\exists \pi_0 \in S_{[1,r] \setminus A}: \forall i \notin A, l'_i = l_{\pi_0(i)}, \{l'_i:i \in A\} \cap \{l_i: i \in A\} = \emptyset}{b_{F,L}\phi_{F,L}b_{F',L'}\phi_{F',L'}}}\right] =\\
&\sum_{L:\forall i < i' \notin A, l_i < l_{i'},\forall i < i' \in A, l_i < l_{i'}}{\Big(}\\ 
&\frac{\tilde{E}\left[\left(\sum_{F:r_F = r}{\sum_{\pi_0 \in S_{[1,r] \setminus A}}{\left(\sum_{\pi \in S_{r}: \forall i \notin A, \pi(i) = \pi_0(i)}{b_{F,\{l_{\pi(1)},\dots,l_{\pi(r)}\}}}\right)
\left(\sum_{\pi \in S_{r}: \forall i \notin A, \pi(i) = \pi_0(i)}{\phi_{F,\{l_{\pi(1)},\dots,l_{\pi(r)}\}}}\right)}}\right)^2\right]}{|A|!(n-r)(n-r-1)\dots(n-r-|A|+1)}\Big)
\end{align*}
\end{corollary}
Summing this equation over all possible $A \subseteq [1,r]$ we obtain that
\begin{align*}
&\tilde{E}\left[\sum_{F,F': r_{F'} = r_F = r}{\sum_{L,L'}{b_{F,L}\phi_{F,L}b_{F',L'}\phi_{F',L'}}}\right] =\\
&\sum_{A \subseteq [1,r]}{\sum_{L:\forall i < i' \notin A, l_i < l_{i'},\forall i < i' \in A, l_i < l_{i'}}{}}\Big(\\ 
&\frac{\tilde{E}\left[\left(\sum_{F:r_F = r}{\sum_{\pi_0 \in S_{[1,r] \setminus A}}{\left(\sum_{\pi \in S_{r}: \forall i \notin A, \pi(i) = \pi_0(i)}{b_{F,\{l_{\pi(1)},\dots,l_{\pi(r)}\}}}\right)
\left(\sum_{\pi \in S_{r}: \forall i \notin A, \pi(i) = \pi_0(i)}{\phi_{F,\{l_{\pi(1)},\dots,l_{\pi(r)}\}}}\right)}}\right)^2\right]}{|A|!(n-r)(n-r-1)\dots(n-r-|A|+1)}\Big)
\end{align*}
as needed.
\end{proof}

\subsection{Decomposition of $p_{F,L}$}\label{decomposingpflsubsection}
In this subsection, we confirm that $p_{F,L}$ can be expressed as a linear combination of the functions $\{\phi_{F',L'}\}$ where $F'$ is obtained from $F$ by making some of the labeled vertices unlabeled.
\begin{theorem}\label{pfldecompositiontheorem}
For any flag $F$ such that $r = r_F \leq \frac{n}{2}$, there exist coefficients $c(F,L,F',L')$ such that 
\[
p_{F,L} = \sum_{F',L'}{c(F,L,F',L')\phi_{F',L'}}
\]
where $c(F,L,F',L')$ is only nonzero for $F'$ which are obtained by taking labeled vertices in $F$ and making them unlabeled
\end{theorem}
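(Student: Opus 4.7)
The plan is to proceed by strong induction on the asymmetry level $r = r_F$. The base case $r = 0$ is immediate since $\phi_{F,\emptyset} = p_{F,\emptyset}$ by definition, so one may take $c(F,\emptyset,F,\emptyset) = 1$ with all other coefficients zero.

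For the inductive step, fix $F$ with $r_F = r$ and a tuple $L = (l_1,\dots,l_r)$ of distinct labels. Expanding the definition of $\phi_{F,L}$ and grouping the sum over $L'$ by the set $A = \{i : l'_i \neq l_i\}$ produces
\[
\phi_{F,L} \;=\; p_{F,L} \;+\; \sum_{\emptyset \neq A \subseteq [1,r]} \frac{(-1)^{|A|}}{\prod_{j=0}^{|A|-1}(n-r-j)} \, S_{A,L}, \qquad S_{A,L} := \sum_{L' \in \mathcal{L}_A} p_{F,L'},
\]
where $\mathcal{L}_A$ is the set of tuples agreeing with $L$ outside $A$ and whose entries in positions $A$ are distinct elements of $[1,n]\setminus I_L$. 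The goal is to rewrite each $S_{A,L}$ in terms of (i) lower-asymmetry polynomials $p_{F_A,L_{\setminus A}}$, where $F_A$ denotes $F$ with the vertices $\{v_i : i \in A\}$ made unlabeled, and (ii) correction terms of the form $p_{F,L''}$ for tuples $L''$ that are permutations of $L$.

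The conversion is achieved by partitioning the injective maps in the definition of $p_{F_A,L_{\setminus A}}$ according to which free images land in $\{l_j : j \in A\}$ versus in $[1,n]\setminus I_L$. This yields an identity of the form
\[
p_{F_A,L_{\setminus A}} \;=\; S_{A,L} \;+\; \sum_{\emptyset \neq B \subseteq A}\; \sum_{\pi : B \hookrightarrow A}\; S_{A \setminus B, L^{\pi}},
\]
where $L^{\pi}$ is obtained from $L$ by relabeling the positions in $B$ using $\pi$. Mobius inversion on the lattice of subsets of $A$ inverts this to express $S_{A,L}$ as a linear combination of $p_{F_{A'},L_{\setminus A'}}$ for $A' \supseteq A$ together with $p_{F,L''}$ terms for permutations $L''$ of $L$. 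Running the resulting expressions simultaneously for all permutations $L''$ of $L$ gives a linear system in the unknowns $\{p_{F,L''}\}$ whose right-hand side involves $\{\phi_{F,L''}\}$ and the $p_{F_A,L_{\setminus A}}$ for $A\neq\emptyset$; since the off-diagonal entries are $O(1/(n-r))$ and the hypothesis $r \leq n/2$ keeps the relevant denominators bounded away from zero, the system is invertible. Solving for $p_{F,L}$ and then applying the inductive hypothesis to each $p_{F_A,L_{\setminus A}}$ (which has strictly smaller asymmetry level) produces a decomposition into $\phi_{F'',L''}$ with $F''$ a further unlabeling of $F$, as required.

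The main obstacle is the combinatorial bookkeeping in inverting the partial-unlabeling identity: because $\mathcal{L}_A$ requires the fresh labels to avoid the \emph{whole} of $I_L$ rather than just $I_{L_{\setminus A}}$, the expansion of $p_{F_A,L_{\setminus A}}$ naturally spawns the permutation-of-$L$ correction terms that couple the equations for different orderings of $L$. Organizing these into a genuinely invertible linear system, and carefully tracking the signs and denominators produced by Mobius inversion, is the technical heart of the argument; the hypothesis $r \le n/2$ enters precisely to guarantee that none of the denominators $(n-r)(n-r-1)\cdots$ vanish.
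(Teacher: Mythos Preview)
Your strategy is exactly the paper's: induct on the asymmetry level $r$, express $\phi_{F,L}$ as $p_{F,L}$ plus (a) terms $p_{F,L''}$ for permutations $L''$ of $L$ and (b) lower-order terms $p_{F',L'}$ with $F'$ a proper unlabeling of $F$, and then invert the resulting $r!\times r!$ linear system coupling $\{p_{F,L''}\}_{L''}$ to $\{\phi_{F,L''}\}_{L''}$. The paper packages step~(a) as a single lemma (its Lemma on $\phi_{F,L}$ in terms of $\sum_{\pi}p_{F,(l_{\pi(1)},\dots,l_{\pi(r)})}$ plus lower order terms) rather than going through M\"obius inversion, but the content is the same.

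The one genuine gap is your invertibility argument. Saying the off-diagonal entries are $O(1/(n-r))$ is not enough: the matrix is indexed by $S_r$, so each row has $r!-1$ off-diagonal entries, and for $r$ near $n/2$ the quantity $r!/(n-r)$ is enormous, so diagonal dominance fails badly. The paper instead observes that the change-of-basis matrix $M$ (with entries $M_{\pi\pi'}=\sum_{A\supseteq\{i:\pi(i)\neq\pi'(i)\}}\prod_{j=0}^{|A|-1}(n-r-j)^{-1}$) admits the decomposition
\[
M \;=\; \sum_{\pi_0\in S_r}\sum_{A\subseteq[1,r]}\frac{1_{\pi_0,A}\,1_{\pi_0,A}^{T}}{|A|!\prod_{j=0}^{|A|-1}(n-r-j)},
\]
where $1_{\pi_0,A}$ is the indicator of $\{\pi:\pi|_{[r]\setminus A}=\pi_0|_{[r]\setminus A}\}$. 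The $A=\emptyset$ summand already equals the identity, so $M\succeq \mathrm{Id}\succ 0$ and invertibility follows for every $r\le n/2$ with no asymptotic hedging. You should replace your dominance heuristic with this (or an equivalent) argument; once that is done, your proof coincides with the paper's.
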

\begin{proof}
To prove this theorem, we show the following lemma which expresses $\phi_{F,L}$ in terms of the functions $\{p_{F,L'}: I_{L'} = I_L\}$ up to lower order terms. This gives us a matrix $M$ which changes basis from the functions $\{\phi_{F,L}\}$ to the functions $\{p_{F,L}\}$ up to lower order terms. We then show directly that $M$ is invertible, which implies that we can express $p_{F,L}$ in terms of the functions $\{\phi_{F,L'}: I_{L'} = I_L\}$ up to lower order terms. This allows us to prove the theorem by induction.
\begin{lemma}\label{phiflreexpressionlemma}
Let $L$ be an ordered set of size $r \in [0,\frac{n}{2}]$. For all flags $F$ of order $r$,
\[
\phi_{F,L} = \sum_{A \subseteq [1,r]}{\frac{\sum_{\pi \in S_{r}: \forall i \notin A, \pi(i) = i}{p_{F,\{l_{\pi(1)},\dots,l_{\pi(r)}\}}}}{\prod_{j=0}^{|A|-1}{(n-r-j)}}} + 
\text{ lower order terms}
\]
where the lower order terms are all of the form $c_{F',L'}p_{F',L'}$ where $F'$ is obtained from $F$ by taking some of the labeled vertices and making them unlabeled.
\end{lemma}
Before proving this lemma, we show how it implies Theorem \ref{pfldecompositiontheorem} using the strategy described above.
\begin{definition}
Define $M$ to be the matrix indexed by permutations $\pi \in S_{r}$ with entries
\[
M_{{\pi}{\pi'}} = \sum_{A: \{i: \pi(i) \neq \pi'(i)\} \subseteq A  \subseteq [1,r]}{\frac{1}{\prod_{j=0}^{|A|-1}{(n-r-j)}}}
\]
\end{definition}
\begin{proposition}
Let $r$ be an integer in $[0,\frac{n}{2}]$, let $L = (l_1,\dots,l_r)$ be an ordered set of size $r$, and let $F$ be a flag of order $r$. Given a vector $c$ indexed by permutations $\pi \in S_{r}$,
\[
\sum_{\pi \in S_r}{c_{\pi}\phi_{F,\{l_{\pi(1)},\dots,l_{\pi(r)}\}}} = \sum_{\pi' \in S_r}{(Mc)_{\pi'}p_{F,\{l_{\pi'(1)},\dots,l_{\pi'(r)}\}}}  + 
\text{ lower order terms}
\]
\end{proposition}
\begin{proof}
Using Lemma \ref{phiflreexpressionlemma}, letting $L^{\pi} = (l^{\pi}_1,\dots,l^{\pi}_r)$ be the ordered set where $l^{\pi}_{i} = l_{\pi(i)}$, 
\begin{align*}
\sum_{\pi \in S_r}{c_{\pi}\phi_{F,\{l_{\pi(1)},\dots,l_{\pi(r)}\}}} &= \sum_{\pi \in S_r}{c_{\pi}\phi_{F,L^{\pi}}} = \sum_{\pi \in S_r}{c_{\pi}\phi_{F,\{l^{\pi}_{1},\dots,l^{\pi}_{r}\}}} \\
&= \sum_{\pi \in S_r}{c_{\pi}\sum_{A \subseteq [1,r]}{\frac{\sum_{\pi_2 \in S_{r}: \forall i \notin A, \pi_2(i) = i}{p_{F,\{l^{\pi}_{\pi_2(1)},\dots,l^{\pi}_{\pi_2(r)}\}}}}{\prod_{j=0}^{|A|-1}{(n-r-j)}}}} + \text{ lower order terms}\\
&= \sum_{\pi \in S_r}{c_{\pi}\sum_{A \subseteq [1,r]}{\frac{\sum_{\pi_2 \in S_{r}: \forall i \notin A, \pi_2(i) = i}{p_{F,\{l_{\pi{\pi_2}(1)},\dots,l_{\pi{\pi_2}(r)}\}}}}{\prod_{j=0}^{|A|-1}{(n-r-j)}}}} + \text{ lower order terms}
\end{align*}
Taking $\pi' = {\pi}\pi_2$, we have that 
\begin{align*}
\sum_{\pi \in S_r}{c_{\pi}\phi_{F,\{l_{\pi(1)},\dots,l_{\pi(r)}\}}} &= \sum_{\pi \in S_r}{c_{\pi}\sum_{A \subseteq [1,r]}{\frac{\sum_{\pi' \in S_{r}: \forall i \notin A, \pi'(i) = \pi(i)}{p_{F,\{l_{\pi'(1)},\dots,l_{\pi'(r)}\}}}}{\prod_{j=0}^{|A|-1}{(n-r-j)}}}} + \text{ lower order terms}\\
&= \sum_{\pi,\pi' \in S_r}{\sum_{A: \{i: \pi'(i) \neq \pi(i)\} \subseteq A  \subseteq [1,r]}{\frac{c_{\pi}p_{F,\{l_{\pi'(1)},\dots,l_{\pi'(r)}\}}}{\prod_{j=0}^{|A|-1}{(n-r-j)}}}} + \text{ lower order terms}\\
&= \sum_{\pi' \in S_r}{(Mc)_{\pi'}p_{F,\{l_{\pi'(1)},\dots,l_{\pi'(r)}\}}}  + \text{ lower order terms}
\end{align*}
\end{proof}
\begin{lemma}
$M \succ 0$
\end{lemma}
\begin{proof}
We can decompose $M$ as 
\[
M = \sum_{\pi_0 \in S_r}{\sum_{A \subseteq [1,r]}{\frac{1_{\pi_0,A}1^{T}_{\pi_0,A}}{|A|!\prod_{j=0}^{|A|-1}{(n-r-j)}}}}
\]
where $1_{\pi_0,A}(\pi) = 1$ if $\pi(i) = \pi_0(i)$ for all $i \notin A$ and $1_{\pi_0,A}(\pi) = 0$ otherwise. To check this, we compare the entries of these matrices
\begin{align*}
&\left(\sum_{\pi_0 \in S_r}{\sum_{A \subseteq [1,r]}{\frac{1_{\pi_0,A}1^{T}_{\pi_0,A}}{|A|!\prod_{j=0}^{|A|-1}{(n-r-j)}}}}\right)_{{\pi}{\pi'}} = \\
&\sum_{\pi_0 \in S_r}{\sum_{A: \{i: \pi(i) \neq \pi_0(i)\} \cup \{i: \pi'(i) \neq \pi_0(i)\} \subseteq A \subseteq [1,r]}{\frac{1}{|A|!\prod_{j=0}^{|A|-1}{(n-r-j)}}}} = \\
&\sum_{A: \{i: \pi(i) \neq \pi'(i)\} \subseteq A \subseteq [1,r]}{\sum_{\pi_0 \in S_r: \{i:\pi_0(i) \neq \pi(i)\} \subseteq A}{\frac{1}{|A|!\prod_{j=0}^{|A|-1}{(n-r-j)}}}} = \\
&\sum_{A: \{i: \pi(i) \neq \pi'(i)\} \subseteq A \subseteq [1,r]}{\frac{1}{\prod_{j=0}^{|A|-1}{(n-r-j)}}} = M_{{\pi}{\pi'}}
\end{align*}
We now observe that 
\[
M \succeq \sum_{\pi_0 \in S_r}{1_{\pi_0,\emptyset}1^{T}_{\pi_0,\emptyset}} = Id \succ 0
\]
which completes the proof.
\end{proof}
\begin{corollary}\label{phiflreexpressioncorollary}
Let $L$ be an ordered set of size $r \in [0,\frac{n}{2}]$. For all flags $F$ of asymmetry level $r$ and tuples $L$ of size $r$, given a vector $c$ indexed by permutations $\pi \in S_{r}$, 
\[
\sum_{\pi \in S_r}{c_{\pi}p_{F,\{l_{\pi(1)},\dots,l_{\pi(r)}\}}} = \sum_{\pi' \in S_r}{(M^{-1}c)_{\pi'}\phi_{F,\{l_{\pi'(1)},\dots,l_{\pi'(r)}\}}}  + 
\text{ lower order terms}
\]
where the lower order terms are all of the form $c_{F',L'}p_{F',L'}$ where $F'$ is obtained from $F$ by taking some of the labeled vertices and making them unlabeled.
\end{corollary}
Theorem \ref{pfldecompositiontheorem} follows easily from Corollary \ref{phiflreexpressioncorollary}. Taking $c$ to be the vector indexed by permutations $\pi \in S_r$ such that $c_{\pi} = 1$ if $\pi = Id$ and $c_{\pi}$, 
\[
p_{F,L} = \sum_{\pi' \in S_r}{(M^{-1}c)_{\pi'}\phi_{F,\{l_{\pi'(1)},\dots,l_{\pi'(r)}\}}} + \text{ lower order terms}
\]
where the lower order terms are all of the form $c_{F',L'}p_{F',L'}$ where $F'$ is obtained from $F$ by taking some of the labeled vertices and making them unlabeled.
\begin{proof}[Proof of Lemma \ref{phiflreexpressionlemma}]
Recall that 
\[
\phi_{F,L} = \sum_{L'}{c(L,L')p_{F,L'}}
\]
where $c(L,L')$ is $0$ if $\exists i,j: i \neq j, l'_j = l_i$ and
\[
c(L,L') = \frac{(-1)^{|\{i:l'_i \neq l_i\}|}}{\prod_{j=0}^{|\{i:l'_i \neq l_i\}|-1}{(n-r-j)}}
\]
otherwise. 
\[
\phi_{F,L} = \sum_{A \subseteq [1,r]}{\sum_{L' = (l'_1,\dots,l'_r): \forall i \notin A, l'_i = l_i, \forall i \in A, l'_i \notin I_L}{\frac{(-1)^{|A|}p_{F,L'}}{\prod_{j=0}^{|A|-1}{(n-r-j)}}}}
\]
We analyze this expression by considering each possible subset $A \subseteq [1,r]$ separately.
\begin{lemma}
Let $L$ be a tuple of size $r$ and let $A$ be a subset of $[1,r]$. For all flags $F$ of order $r$, 
\[
\sum_{L' = (l'_1,\dots,l'_r): \forall i \notin A, l'_i = l_i, \forall i \in A, l'_i \notin I_L}{p_{F,L'}} = (-1)^{|A|}\sum_{\pi \in S_{r}: \forall i \notin A, \pi(i) = i}{p_{F,\{l_{\pi(1)},\dots,l_{\pi(r)}\}}} + \text{ lower order terms}
\]
where the lower order terms are all of the form $c_{F',L'}p_{F',L'}$ where $F'$ is obtained from $F$ by taking some of the labeled vertices and making them unlabeled.
\end{lemma}
\begin{proof}
Recall that for all $i \in [1,r]$, for all ordered sets $L$ of size $r$,
\[
\sum_{a \in ([1,n] \setminus I_L) \cup \{l_i\}}{p_{F,\{l_1,\dots,l_{i-1},a,l_{i+1},\dots,l_{r}\}}} = p_{F',L'}
\]
where $F'$ is the flag obtained by taking $F$ and making the vertex $v_i$ unlabeled and $L' = (l_1,\dots,l_{i-1},l_{i+1},\dots,l_{r})$ is the tuple obtained by deleting $l_i$ from $L$. Thus, 
\[
\sum_{a \in ([1,n] \setminus I_L) \cup \{l_i\}}{p_{F,\{l_1,\dots,l_{i-1},a,l_{i+1},\dots,l_{r}\}}} = 0 + \text{ lower order term}
\]
where the lower order term is $p_{F',L'}$ where $F'$ is obtained from $F$ by taking a labeled vertex of $F$ and making it unlabled

We now use the same logic as we used to prove Lemma \ref{generalblemma}. Choose an index $i \in A$. Using the equation 
\[
\sum_{j_i \notin \{j_1,\dots,j_{i-1},j_{i+1},\dots,j_r\}}{p_{F,\{{j_1},\dots,{j_r}\}}} = 0  + \text{ lower order terms}
\]
we obtain that for all $j_1,\dots,j_{i-1},j_{i+1},\dots,j_r$, 
\[
\sum_{j_i \notin I_L \cup \{j_1,\dots,j_{i-1},j_{i+1},\dots,j_r\}}{p_{F,\{{j_1},\dots,{j_r}\}}} = 
-\sum_{j': j' \in I_L \setminus  \{j_1,\dots,j_{i-1},j_{i+1},\dots,j_r\}}{p_{F,\{j_1,\dots,j_{i-1},j',j_{i+1},\dots,j_r\}}}
\]
This replaces the sum over $j_i$ by an index in $I_L \setminus \{j_1,\dots,j_r\}$. Applying this logic repeatedly, the result follows.
\end{proof}
Lemma \ref{phiflreexpressionlemma} now follows by summing this lemma over all possible $A$.
\end{proof}
\end{proof}
\subsection{Proof of Theorem \ref{squarereductiontheorem}}
We now complete the proof of Theorem \ref{squarereductiontheorem}. We have to show that if $\tilde{E}$ is a linear map from polynomials of index degree at most $d \leq n$ to $\mathbb{R}$ which is symmetric with respect to permutations of $[1,n]$ then for any polynomial $g$ of index degree $d' \leq \frac{d}{2}$, we can write
\[
\tilde{E}[g^2] = \sum_{I \subseteq [1,n],j:|I|\leq d'}{\tilde{E}[g^2_{Ij}]}
\]
where for all $I,j$,
\begin{enumerate}
\item $g_{Ij}$ is symmetric with respect to permutations of $[1,n] \setminus I$.
\item $indexdeg(g_{Ij}) \leq indexdeg(g)$
\item $\forall i \in I, \sum_{\sigma \in S_{[1,n] \setminus (I \setminus \{i\})}}{\sigma(g_{Ij})} = 0$
\end{enumerate}
By the third statement of Theorem \ref{phipropertiestheorem}, for all polynomials $g$ such that $indexdeg(g) \leq \frac{n}{2}$, we can write $g = \sum_{F,L}{b_{F,L}\phi_{F,L}}$ where $b_{F,L} = 0$ whenever $r_{F} > indexdeg(g)$ and for all flags $F$, tuples of indices $L = (l_1,\dots,l_{r_F})$, and $i \in [1,r_F]$, 
\[
\sum_{a \in ([1,n] \setminus I_L) \cup l_i}{b_{F,\{l_1,\dots,l_{i-1},a,l_{i+1},\dots,l_{r_F}\}}} = 0
\]
Theorem \ref{symmetrysquaredecompositiontheorem} implies that we can write 
\[
\tilde{E}[g^2] = \sum_{I \subseteq [1,n],j:|I|\leq indexdeg(g)}{\tilde{E}[g^2_{Ij}]}
\]
where each $g_{Ij}$ is a linear combination of terms of the form $\phi_{F,L}$ where $r_F = |I|$ and $I_L = I$. We now verify the three required statements. Given the form of $g_{Ij}$, the first and second statements are trivial. For the third statement, for all $i \in I$, letting $j$ be the index such that $l_j = i$, 
\[
\sum_{a: a \in ([1,n] \setminus I) \cup \{i\}}{\phi_{F,\{l_1,\dots,l_{j-1},a,l_{j+1},\dots,l_{|I|}\}}} = 0
\]
By symmetry and linearity, we must have that
\[
\forall i \in I, \sum_{\sigma \in S_{[1,n] \setminus (I \setminus \{i\})}}{\sigma(g_{Ij})} = 0
\]
as needed.
\end{appendix}
\end{document}